\definecolor{weborange}{rgb}{.8,.3,.3}
\definecolor{webblue}{rgb}{0,0,.8}
\definecolor{internallinkcolor}{rgb}{0,.5,0}
\definecolor{externallinkcolor}{rgb}{0,0,.5}
\providecommand{\remove}[1]{}
\providecommand{\remove}[1]{}
\newcommand{\authnote}[2]{{\bf [{\color{red} #1's Note:} {\color{blue} #2}]}}
\newcommand{\authnote}[2]{}
	\titleclass{\subsubsubsection}{straight}[\subsection]
	\newcounter{subsubsubsection}[subsubsection]
	\renewcommand\thesubsubsubsection{\thesubsubsection.\arabic{subsubsubsection}}
	\renewcommand\paragraph{\@startsection{paragraph}{5}{\z@}%
		{3.25ex \@plus1ex \@minus.2ex}%
		{-1em}%
		{\normalfont\normalsize\bfseries}}
	\renewcommand\subparagraph{\@startsection{subparagraph}{6}{\parindent}%
		{3.25ex \@plus1ex \@minus .2ex}%
		{-1em}%
		{\normalfont\normalsize\bfseries}}
	\def\toclevel@subsubsubsection{4}
	\def\toclevel@paragraph{5}
	\def\toclevel@paragraph{6}
	\def\l@subsubsubsection{\@dottedtocline{4}{7em}{4em}}
	\def\l@paragraph{\@dottedtocline{5}{10em}{5em}}
	\def\l@subparagraph{\@dottedtocline{6}{14em}{6em}}
\newenvironment{algorithm}{\begin{mybox} \vspace{-.1in}\begin{algo}}{ \vspace{-.1in} \end{algo}\end{mybox}}
\newenvironment{mybox}{\begin{center}\begin{tabular}{|p{0.97\linewidth}|c|}   \hline} {  \\ \hline \end{tabular} \end{center}}			
\let\originalleft\left
\let\originalright\right
\renewcommand{\left}{\mathopen{}\mathclose\bgroup\originalleft}
\renewcommand{\right}{\aftergroup\egroup\originalright}
	\newcommand{\wlg} {without loss of generality\xspace}
	\newcommand{\ip}[1]{\iprod{#1}}
	\newcommand{\iprod}[1]{\langle #1 \rangle}
	\newcommand{\set}[1]{\ens{#1}}
	\newcommand{\paren}[1]{\left(#1\right)}
	\newcommand{\eqdef}{:=}
	\newcommand{\N}{{\mathbb{N}}}
        \newcommand{\E}{{\mathbf{E}}}
	\newcommand{\zo}{\set{0,1}}
	\newcommand{\oo}{\set{-1,1}}
	\newcommand{\condition}{\;\ifnum\currentgrouptype=16 \middle\fi|\;}
	\newcommand{\xor}{\oplus}
	\newcommand{\eps}{\varepsilon}
	\newcommand{\la}{\gets}
	\newcommand{\Enc}{\mathsf{Enc}}
	\newcommand{\Dec}{\mathsf{Dec}}
	\newcommand{\Gen}{\mathsf{Gen}}
	\newcommand{\Attacker}{\mathsf{Attacker}}
	\newcommand{\negl}{\operatorname{neg}}
	\newcommand{\Supp}{\operatorname{Supp}}
	\newcommand{\Ensuremath}[1]{\ensuremath{#1}\xspace}
	\newcommand{\tth}[1]{\Ensuremath{#1^{\rm th}}}
	\newcommand{\ith}{\tth{i}}
	\renewcommand{\cref}{\Cref}
	\newtheorem{theorem}{Theorem}[section]
	\newaliascnt{lemma}{theorem}
	\newtheorem{lemma}[lemma]{Lemma}
	\crefname{lemma}{Lemma}{Lemmas}
	\newaliascnt{observation}{theorem}
	\crefname{observation}{Observation}{Observation}
	\newaliascnt{claim}{theorem}
	\newtheorem{claim}[claim]{Claim}
	\crefname{claim}{Claim}{Claims}
	\newaliascnt{corollary}{theorem}
	\crefname{corollary}{Corollary}{Corollaries}
	\newaliascnt{construction}{theorem}
	\crefname{construction}{Construction}{Constructions}
	\newaliascnt{fact}{theorem}
	\crefname{fact}{Fact}{Facts}
	\newaliascnt{proposition}{theorem}
	\crefname{proposition}{Proposition}{Propositions}
	\newaliascnt{conjecture}{theorem}
	\crefname{conjecture}{Conjecture}{Conjectures}
	\newaliascnt{definition}{theorem}
	\newtheorem{definition}[definition]{Definition}
	\crefname{definition}{Definition}{Definitions}
	\newaliascnt{notation}{theorem}
	\crefname{notation}{Notation}{Notation}
	\newaliascnt{assertion}{theorem}
	\crefname{assertion}{Assertion}{Assertion}
	\newaliascnt{assumption}{theorem}
	\crefname{assumption}{Assumption}{Assumption}
	\newaliascnt{remark}{theorem}
	\newtheorem{remark}[remark]{Remark}
	\crefname{remark}{Remark}{Remarks}
	\newaliascnt{question}{theorem}
	\newtheorem{question}[question]{Question}
	\crefname{question}{Question}{Question}
	\newaliascnt{example}{theorem}
	\newtheorem{example}[example]{Example}
	\crefname{exmaple}{Example}{Examples}
	\crefname{equation}{Equation}{Equations}
	\newaliascnt{proto}{theorem}
	\newtheorem{proto}[proto]{Protocol}
	\crefname{proto}{protocol}{protocols}
	\newaliascnt{algo}{theorem}
	\newtheorem{algo}[algo]{Algorithm}
	\crefname{algo}{algorithm}{algorithms}
	\newaliascnt{func}{theorem}
	\newtheorem{func}[func]{Function}
	\crefname{func}{function}{functions}
	\newaliascnt{expr}{theorem}
	\newtheorem{expr}[expr]{Experiment}
	\crefname{experiment}{experiment}{experiments}
	\newaliascnt{gm}{theorem}
	\newtheorem{gm}[gm]{Game}
	\crefname{game}{game}{games}
	\newcommand{\stepref}[1]{Step~\ref{#1}}
	\def\FullBox{$\Box$}
	\def\qed{\ifmmode\qquad\FullBox\else{\unskip\nobreak\hfil
			\penalty50\hskip1em\null\nobreak\hfil\FullBox
			\parfillskip=0pt\finalhyphendemerits=0\endgraf}\fi}
	\def\qedsketch{\ifmmode\Box\else{\unskip\nobreak\hfil
			\penalty50\hskip1em\null\nobreak\hfil$\Box$
			\parfillskip=0pt\finalhyphendemerits=0\endgraf}\fi}
	\newcommand{\ex}[1]{\Ex\left[#1\right]}
	\newcommand{\Ex}{{\mathrm E}}
	\renewcommand{\Pr}{{\mathrm {Pr}}}
	\newcommand{\pr}[1]{\Pr\left[#1\right]}
	\newcommand{\ppr}[2]{\Pr_{#1}\left[#2\right]}
	\newcommand{\NHdist}{\mathsf{NHamDist}}
	\newcommand{\Ac}{\mathsf{A}}
	\newcommand{\cC}{{\mathcal{C}}}
	\newcommand{\ens}[1]{\{#1\}}
	\newcommand{\size}[1]{\left|#1\right|}
	\newcommand{\cM}{{\cal{M}}}
	\providecommand{\cL}{{\cal{L}}}
	\newcommand{\ppt}{{\sc ppt}\xspace}
	\def\cA{{\cal A}}
	\def\cB{{\cal B}}
	\def\cC{{\cal C}}
	\def\cD{{\cal D}}
	\def\cF{{\cal F}}
	\def\cL{{\cal L}}
	\def\cM{{\cal M}}
	\def\cP{{\cal P}}
	\def\cS{{\cal S}}
	\def\cU{{\cal U}}
	\def\cV{{\cal V}}
	\def\cX{{\cal X}}
	\def\cY{{\cal Y}}
	\def\cZ{{\cal Z}}
	\def\bbN{{\mathbb N}}
	\def\bbR{{\mathbb R}}
	\newcommand{\indic}[1]{\mathds{1}{\set{#1}}}
	\newcommand{\pt}[1]{\boldsymbol{#1}}
	\newcommand{\y} {\pt{y}}
	\newcommand{\z} {\pt{z}}
	\newcommand{\tp} {\tilde{p}}
	\newcommand{\by}{\bold{y}}
	\newcommand{\bb}{\bold{b}}
	\newcommand{\bp}{\bold{p}}
	\newcommand{\bi}{\bold{i}}
	\newcommand{\bx}{\bold{x}}
	\newcommand{\bS}{\bold{S}}
	\newcommand{\bT}{\bold{T}}
	\newcommand{\bz}{\bold{z}}
	\newcommand{\poly}{{\rm poly}}
	\let\xx@thm\@thm
        \renewcommand{\epsilon}{\varepsilon}
\newcommand{\Enote}[1]{\authnote{Eliad}{#1}}
\newcommand{\Unote}[1]{\authnote{Uri}{#1}}
\newcommand{\haim}[1]{\authnote{Haim}{#1}}
\newcommand{\MMM}{{\cal M}}
\newcommand{\FFF}{{\cal F}}
\newcommand{\DDD}{{\cal D}}
\newcommand{\XXX}{{\cal X}}
\newcommand{\AAA}{{\cal A}}
\newcommand{\BBB}{{\cal B}}
\newcommand{\CCC}{{\cal C}}
\newcommand{\UUU}{{\cal U}}
\newcommand{\ignore}[1]{}
\title{Data Reconstruction: When You See It and When You Don't} 
\author{Anonymized for Submission}
\author{
Edith Cohen\thanks{Google Research and Tel Aviv University. {\tt edith@cohenwang.com}} 
\and
Haim Kaplan\thanks{Tel Aviv University and Google Research. {\tt haimk@tau.ac.il}.}
\and
Yishay Mansour\thanks{Tel Aviv University and Google Research. {\tt mansour.yishay@gmail.com}.}
\and
Shay Moran\thanks{Departments of Mathematics, Computer Science, and Data and Decision Sciences, Technion and Google Research. {\tt smoran@technion.ac.il}}
\and
Kobbi Nissim\thanks{Department of Computer Science, Georgetown University. {\tt kobbi.nissim@georgetown.edu}. Work done while at Google Research, Tel-Aviv.}
\and
Uri Stemmer\thanks{Tel Aviv University and Google Research. {\tt u@uri.co.il}.}
\and
Eliad Tsfadia\thanks{Department of Computer Science, Georgetown University. \texttt{eliadtsfadia@gmail.com}.}
}
\begin{document}
	
	\maketitle
	
	\begin{abstract}
We revisit the fundamental question of formally defining what constitutes a {\em reconstruction attack}. While often clear from the context, our exploration reveals that a precise definition is much more nuanced than it appears, to the extent that a single all-encompassing definition may not exist. Thus, we employ a different strategy and aim to ``sandwich'' the concept of reconstruction attacks by addressing two complementing questions: (i) What conditions guarantee that a given system is protected against such attacks? (ii) Under what circumstances does a given attack clearly indicate that a system is not protected? More specifically,
\begin{itemize}
    \item We introduce a new definitional paradigm -- {\em Narcissus Resiliency} -- to formulate a security definition for protection against reconstruction attacks. This paradigm has a self-referential nature that enables it to circumvent shortcomings of previously studied notions of security.

    Furthermore, as a side-effect, we demonstrate that Narcissus resiliency captures as special cases multiple well-studied concepts including differential privacy and other security notions of one-way functions and encryption schemes.
    
    \item We formulate a link between reconstruction attacks and {\em Kolmogorov complexity}. This allows us to put forward a criterion for evaluating when such attacks are convincingly successful.
\end{itemize}

%Reconstruction attacks, where an adversary takes seemingly benign statistics and reconstructs portions of sensitive data, are among the most severe types of privacy breaches. Initially, it might seem that ``reconstruction'' is a straightforward concept. Indeed, most prior work on data reconstruction did not even attempt to formally define it (Apart from a few notable exceptions). Our work demonstrates that this concept is much more nuanced \haim{is more delicate ?} than it appears. We propose two new definitions: a security notion that defines protection against reconstruction attacks and a criterion for evaluating the success of such attacks.
		
	\end{abstract}
	
	%\Tableofcontents

\section{Introduction}

Reasoning about data privacy is crucial in today's data-driven world. This includes the design of privacy-enhancing technologies aimed at protecting privacy, as well as identifying vulnerabilities of existing methods by conducting ``privacy attacks''. 
The most severe family of attacks is, arguably, {\em reconstruction attacks}, where an adversary takes what appears to be benign statistics and reconstructs significant portions of the sensitive data~\citep{DiNi03}.
Such attacks have helped shape the theory of data privacy as well as expose vulnerabilities in existing real-world systems. 

However, prior works did not coalesce around a single {\em definition of reconstruction} and instead considered several context-dependent definitions. Although these definitions made sense in the context in which they were introduced, they do not necessarily carry over to other settings, as we will later explain. Motivated by this, our work is set to answer the following meta-question:

\begin{question}\label{q:motivation}
What is reconstruction?    
\end{question}

This question is more nuanced than it initially appears, as will become clear later. It seems that for every attempt to define reconstruction mathematically, there are cases that either fit the definition but do not ``feel like'' reconstruction, or vice versa. 
As a result, we do not know whether a precise answer to Question~\ref{q:motivation} exists, and leave it as an open question for future work. 
In this work, we aim to approach Question~\ref{q:motivation} by ``sandwiching'' the concept of reconstruction and studying the following two questions:

\begin{question}\label{q:protect}
What conditions are sufficient to ensure that a given system is protected against reconstruction attacks?
\end{question}

\begin{question}\label{q:attack}
Under what circumstances does an attack clearly indicate that a given system is vulnerable?
\end{question}

This is reminiscent of the current state of affairs in the related concept of {\em differential privacy (DP)} \citep{DMNS06}: If an algorithm satisfies DP (with small enough privacy parameters), then it is guaranteed to be ``safe'' in terms of its privacy implications. However, the fact that an algorithm does not satisfy DP does not immediately mean that it is unsafe privacy-wise. 
Rather, to convince that an algorithm is unsafe, one usually needs to demonstrate an actual attack, such as a reconstruction or a membership attack. In other words, in the context of data privacy, we currently do not have a good definition of what constitutes ``privacy''; only definitions of what it means to ``protect'' privacy, and definitions for what it means to ``attack'' an algorithm in order to show a privacy breach. 
We embrace this way of thinking and leverage it towards formally reasoning about reconstruction, as stated in Questions~\ref{q:protect} and~\ref{q:attack}. We elaborate on each of these two questions separately, in 
Sections~\ref{sec:introProtect} and~\ref{sec:introIdentify}, respectively. 

\subsection{Protecting against reconstruction}\label{sec:introProtect}

%Before describing our new results, we define our setting more precisely. 
Consider an algorithm $\MMM$ that takes a dataset $S$ and returns an output $y$. For example, the dataset~$S$ might contain images, and the output $y$ may be an image generative model.
This output~$y$ is then given to a ``privacy attacker'' $\AAA$ whose goal is to output a ``reconstruction'' $z$ of elements from $S$.  Informally, we want to say that algorithm $\MMM$ {\em prevents reconstruction attacks} if $y$ is ``hard to invert'' in the sense that $y$ does not help the attacker in ``recovering'' elements from $S$. However, there are two immediate issues that require attention here:
\begin{enumerate}%[leftmargin=15px]

\item What it means to ``recover elements'' is context-dependent. For example, our criteria for when one image ``recovers'' another image would likely differ from our criteria for when a text file ``recovers'' another text file. As we aim for a general treatment, we abstract this criteria away in the form of a {\em relation $R$}, where $R(S,z)=1$ if and only if $z$ is a ``valid reconstruction'' of~$S$ (or of some data point in $S$). In our example with the images, this could mean that $z$ is an image which is ``close enough'' to one of the images in $S$. Alternatively, depending on the context, this could also mean that $z$ is a collection of 100 images of which one is ``close enough'' to an image in $S$, or it could mean that $z$ is a vector of 100 images, each of them is ``somewhat close'' to a distinct image in $S$, etc.

    \item The underlying distribution of the data matters a lot. For example, consider a scenario where every sufficiently large image dataset contains the Mona Lisa. An `attack' that outputs an image that is similar or identical to the Mona Lisa is not necessarily a successful attack, as this could be accomplished without even accessing the generative model $y$.
\end{enumerate}

These considerations led \citet{balle2022reconstructing} and \citet{cummings2024attaxonomy} to present 
definitions with the following flavor:

\begin{definition}[\cite{balle2022reconstructing,cummings2024attaxonomy}]\label{def:cummings}
Let $\XXX$ be a data domain, let $\DDD$ be a distribution over datasets containing elements from $\XXX$, and let $R:\XXX^*\times\{0,1\}^*\rightarrow\{0,1\}$ be a reconstruction criterion.
Algorithm $\MMM$ is $(\eps,\delta,\DDD)$-$R$-reconstruction-robust if for all attackers $\AAA$ it holds that
\begin{equation}\label{eq:cummings}
\underset{\substack{S\leftarrow\DDD\\y\leftarrow\MMM(S)\\z\leftarrow\AAA(y)}}{\Pr}[R(S,z)=1]\leq e^{\eps}\cdot \sup_{z^*}\underset{\substack{T\leftarrow\DDD}}{\Pr}[R(T,z^*)=1]+\delta.    
\end{equation}
\end{definition}

In words, in Definition~\ref{def:cummings} the attacker's probability of success is compared to a {\em baseline} which is the probability of success of the best {\em trivial attacker} that simply chooses a fixed element $z^*$ (independently of the dataset).
To contradict the security of algorithm $\MMM$, the attacker must succeed with a probability noticeably higher than the baseline (depending on the parameters $\eps$ and $\delta$).
Note that the aforementioned attacker that ``recovered'' the Mona Lisa would {\em not} contradict the security of $\MMM$. 
The reason is that if every (large enough) dataset sampled from $\DDD$ would contain the Mona Lisa, then in Definition~\ref{def:cummings} we could take $z^*$ to be the Mona Lisa, and hence the right hand size of Inequality~(\ref{eq:cummings}) would be one.\footnote{For this example we assume that if $z^*\in T$ then $R(T,z^*)=1$.} 

This introduces a serious problem: once the baseline probability is 1 then {\em no adversary} could ever contradict Inequality~(\ref{eq:cummings}), even adversaries that do achieve meaningful reconstructions. To illustrate the issue, suppose that the dataset contains both ``canonical'' photos (such as the Mona Lisa or canonical photos of US presidents), as well as ``sensitive'' photos of ordinary individuals. While recovering a canonical photo should not be considered a successful attack, recovering a ``sensitive'' photo definitely should. Formally, suppose that every image in $S$ is sampled independently as follows: with probability 1/2 return the Mona Lisa, and otherwise return a photo of a random citizen. 
If under these conditions $\AAA(\MMM(S))$ is able to recover a photo of an ordinary individual from $S$, then that should be considered a reconstruction attack. But this is not captured by Definition~\ref{def:cummings} as the performance of $\AAA$ is compared with a baseline where the Mona Lisa is recovered. More generally, a sever flaw of Definition~\ref{def:cummings} is that once the baseline probability is $\approx 1$ then no adversary could ever contradict Inequality~(\ref{eq:cummings}).

\subsubsection{Towards a new definition}

A takeaway from the above discussion is that comparing all adversaries to the same fixed baseline can create a problem. 
In particular, adversaries that identify ``special'' elements in $S$ (which are unlikely to appear in fresh datasets) should be distinguished from adversaries that identify ``trivial'' elements in $S$. 
We now make an attempt to incorporate this into the definition. As we will see, this attempt has different shortcomings.

\begin{definition}\label{def:weightedBaseline}
Let $\XXX$ be a data domain, let $\DDD$ be a distribution over datasets containing elements from $\XXX$, and let $R:\XXX^*\times\{0,1\}^*\rightarrow\{0,1\}$. 
Algorithm $\MMM$ is $(\eps,\tau,\DDD)$-$R$-reconstruction-robust if for all attackers $\AAA$ it holds that
\begin{equation}\label{eq:weightedBaseline}
\underset{\substack{S\leftarrow\DDD\\y\leftarrow\MMM(S)\\z\leftarrow\AAA(y)}}{\Pr}\Big[
R(S,z)=1
\;\text{ and }\;
\Pr_{T\leftarrow\DDD}[R(T,z)=1]\leq\tau
\Big]\leq \epsilon.    
\end{equation}
\end{definition}

In words, with this definition the attacker's goal is to identify an element $z$ such that (1) $z$ is a valid reconstruction w.r.t.\ the dataset $S$; and (2) the same $z$ is unlikely to be a valid reconstruction w.r.t.\ a fresh dataset $T$. 
Informally, $\Pr_{T\leftarrow\DDD}[R(T,z)=1]$ serves as a ``conditional baseline'' that adapts itself to the element $z$ chosen by the attacker. Note that in the context of our example with the Mona Lisa, an attacker that ``recovers'' the Mona Lisa would not contradict Inequality~(\ref{eq:weightedBaseline}), because 
$\Pr_{T\leftarrow\DDD}[R(T,z)=1]$ would be large and hence greater than the threshold $\tau$. 
On the other hand, an attacker that manages to recover the photo of an ordinary individual from $S$ (with large enough probability) would contradict Inequality~(\ref{eq:weightedBaseline}). So Definition~\ref{def:weightedBaseline} achieves the desired behavior for this example.

A shortcoming of Definition~\ref{def:weightedBaseline} is that the values of $\eps$ and $\tau$ are necessarily context dependent. To illustrate this, we will now describe two situations where in one of them we need to set $\eps,\tau<\frac{1}{2}$ in order for the definition to make sense, while in the other situation $\eps$ and $\tau$ must be larger than $\frac{1}{2}$.

\begin{itemize}%[leftmargin=15px]
   \item For the first situation, consider the following distribution $\DDD$ over datasets: With probability $\frac{1}{2}$ return a dataset containing random photos of ordinary citizens. Otherwise, return a dataset sampled similarly, except that one of its photos is replaced with the Mona Lisa. Under these conditions, we must set $\tau<\frac{1}{2}$ in order to circumvent the ``trivial attack'' using the Mona Lisa. Setting $\eps$ close to 0 seems reasonable under these conditions in order to guarantee that a meaningful reconstruction can happen only with small probability. 
    
   \item For the second situation, consider a case where the dataset is a vector of $n$ random bits, and the attacker's goal is to pinpoint a single entry from this vector and to guess it with high probability. Formally, in this example we interpret the outcome of the adversary as $z=(i,b)\in[n]\times\{0,1\}$, where $R(S,(i,b))=1$ if and only if $S[i]=b$. In this scenario, \emph{every} adversary always satisfies $\Pr_{T\leftarrow\DDD}[R(T,z)=1]=1/2$ and hence if we set $\tau<1/2$ then \underline{every} algorithm is safe w.r.t.\ to that definition, which is absurd. Therefore we may assume that $\tau\geq1/2$. In this case the condition $\Pr_{T\leftarrow\DDD}[R(T,z)=1]\leq \tau$ holds for every adversary and hence is redundant and we are only left with the first condition. How about~$\eps$? Note that a trivial adversary, which always guesses that the first bit is~$1$, succeeds with probability $1/2$. Hence, $\eps$ must be $\geq 1/2$ or else \underline{no} algorithm can be safe. To summarize, if $\tau<1/2$ then every algorithm is safe, and else, if $\tau\geq 1/2$ and $\eps <1/2$ then no algorithm is safe. Hence, this condition is non-trivial only in the range $\eps,\tau \geq 1/2$.
\end{itemize}

Intuitively, the issue about $\eps,\tau$ being context dependent stems from their somewhat nonstandard semantics (especially the parameter $\tau$). 
Ideally, we would want $\eps$ and $\tau$ to quantify some notion of ``distance from optimality'' (similarly to common definitions in the literature, such as differential privacy). But this is not the case with 
Definition~\ref{def:weightedBaseline}. Rather, the threshold $\tau$ dictates what we consider to be a ``non-trivial'' baseline. But ``non-trivial'' is context dependent, and hence so is $\tau$.
%\Snote{Perhaps context dependency wouldn't be so bad if it would be clear how to compute $\eps,\tau$ for any given task of interest. What might be more of an issue is that $\eps,\tau$ basically depend on the best \underline{known} attacks, and hence can change as the attacks improve. Right?}\Unote{Context-dependency is bad because then $\eps,\tau$ don't capture the full picture: If I just tell you that a mechanism is $(0.1,0)$-safe, without understanding the context you don't know if this mechanism is good. About the known attacks, not sure I understood what you meant exactly?}\Snote{I agree that if you just tell me that a mechanism is $(0.1,0)$-safe it doesn't give me much information. I was thinking about the case when you tell me that a mechanism $M$ for solving a task $T$ is $(0.1,0)$-safe. In this case, one could hope that depending on $T$, it would be easy to compute $(\eps,\tau)$ in a way that would be satisfactory. The point I am trying to raise is that even in this case, when the task $T$ is given, the  parameters $\eps,\tau$ essentially depend on the best possible attack (because we want to set them in a way to rule it out).}\Unote{Is there an easy-to-describe example that nails this? If it's not short then I would prefer to skip this; the context-dependency issue is enough to rule out this variant IMO.} \Snote{Ok}

% \Knote{should we also mention that the definition is not `continuous' in the sense that a reconstruction that happens trivially with probability slightly higher than $\delta$ is not considered. Can this be a way to abuse the definition?}

\subsubsection{A new Definitional paradigm: Narcissus resiliency -- an adversary trying to beat itself in its own game}

So far, we have established that: (1) We need to compare the adversary's success probability to a baseline; (2) This baseline cannot be fixed for all adversaries; and (3) We want to circumvent the need for a hyperparameter controlling what ``non-trivial'' is, as this is likely to be context dependent. 
To tackle this, we present a new definitional paradigm which we call {\em Narcissus resiliency}. In this paradigm, instead of explicitly setting and tuning the baseline, we {\em let the attacker be its own baseline}.\footnote{The name ``Narcissus'' is inspired by the Greek myth of Narcissus, who became obsessed with his own reflection. This alludes to how, in our proposed paradigm, the adversary competes against itself rather than external benchmarks.} More specifically, the attacker's success probability on the real dataset is compared with its success probability in a baseline setting where it does not get any information about the real dataset. Furthermore, the attacker must use the same strategy in both settings. This is enforced by preventing the attacker from receiving any information that can help it distinguish whether it is examined with respect to real dataset $S$ or a fresh dataset $T$ sampled from the same distribution. 
Hence, a Narcissus attacker needs to choose a compromise strategy: on one hand it needs to be strong (so it succeeds with high probability when it is applied to $S$) and at the same time it needs to be weak (so it succeeds with low probability when it is applied to $T$). The formal definition follows:\footnote{In the following definition we switch from a single data distribution $\DDD$ to a family of data distributions $\FFF$, where the mechanism is required to be ``secure'' w.r.t.\ every distribution in the family. This is standard, and can also be applied in the context of Definition~\ref{def:cummings}.}

\begin{definition}[Narcissus resiliency]\label{def:narc}
Let $\XXX$ be a data domain, let $\FFF$ be a family of distributions over datasets containing elements from $\XXX$, and let $R:\XXX^*\times\{0,1\}^*\rightarrow\{0,1\}$.
Algorithm $\MMM$ is $(\eps,\delta,\FFF)$-$R$-Narcissus-resilient if for all $\DDD\in\FFF$ and for all attackers $\AAA$ it holds that
\begin{equation}\label{eq:narc}
\underset{\substack{S\leftarrow\DDD\\y\leftarrow\MMM(S)\\z\leftarrow\AAA(y)}}{\Pr}[R(S,z)=1]\leq e^{\eps}\cdot \underset{\substack{S\leftarrow\DDD\\T\leftarrow\DDD\\y\leftarrow\MMM(S)\\z\leftarrow\AAA(y)}}{\Pr}[R(T,z)=1]+\delta.
\end{equation}
\end{definition}

Note that in both experiments the adversary $\AAA$ is executed on a dataset $S$ sampled exactly in the same way. Thus, the adversary behaves exactly the same way in both executions. Nevertheless, the adversary's goal is to ``separate'' the two experiments, where in the left experiment it aims to maximize the probability of reconstruction while in the right experiment it aims to minimize this probability.

\begin{example}
Note that Definition~\ref{def:narc} achieves the desired behavior in our example with the Mona Lisa: In order to contradict Inequality~(\ref{eq:narc}), the attacker must recover (with noticeable probability) the photo of a citizen from $S$.
\end{example}

\begin{example}
Let the underlying dataset distribution $\DDD$ be uniform over $n$-bit vectors. Let $k\in[n]$ be a parameter and suppose that, for the sake of this example, the relation $R$ is defined as follows. Given a dataset $S\in\{0,1\}^n$ and the outcome of the attacker $z$, parse $z$ as a vector of $k$ distinct indices $\tilde{I} = (i_1,\ldots,i_k)\in[n]^k$ and $k$ values $\tilde{Z}= (\tilde{z}_1,\ldots, \tilde{z}_k)\in\{0,1\}^k$ and let $R(S,z)=1$ if and only if $S|_{\tilde{I}}=\tilde{Z}$. That is, in this example, the adversary's goal is to pinpoint $k$ coordinates from $S$ and to guess them correctly. In order to contradict Inequality~(\ref{eq:narc}), the attacker must recover the values of $k$ coordinates from $S$ with probability noticeably higher than $2^{-k}$.
\end{example}

We arrived at Definition~\ref{def:narc} through the lens of data reconstruction. To provide further evidence supporting the paradigm of Narcissus resiliency, we show in Section~\ref{sec:NarcissusExpressiveness} that Narcissus resiliency captures concepts from cryptography and privacy as special cases. This includes differential privacy, resiliency to membership inference attacks, security of one-way functions, security of encryption schemes, and more. We show that all of these concepts can be stated in the terminology of Narcissus resiliency, where an adversary is ``trying to beat itself in its own game''. Furthermore, the paradigm enables us to express strictly weaker protections compared to differential privacy that are still meaningful. To support this claim, in \cref{sec:example} we provide an example of a very natural \emph{deterministic} mechanism (which is clearly not differentially private) that estimates a counting query under Narcissus-resiliency w.r.t. the family of all i.i.d.\ distributions and a natural reconstruction predicate $R$.

\subsection{Identifying reconstruction}\label{sec:introIdentify}

So far we have discussed which types of mechanisms prevent reconstruction of the input data. Such definitions serve as important guidance for responsible algorithm design. However, to show that a mechanism is vulnerable to reconstruction in the real world, an attacker usually only has access to the output of the mechanism, which in the learning context is a model that can be seen as an interactive program, or even just as a fixed string (e.g., a real-world attacker would like to attack a given chat-bot model, and not the algorithm that created the model based on training data). Moreover, such attacker might not even be aware of the training procedure exactly.

Mathematically, in Section~\ref{sec:introProtect} we considered a setting where initially a dataset $S$ is sampled, and then an outcome $y$ is computed by an algorithm $\MMM$ based on $S$.
Our definition of 
Narcissus resiliency \ref{def:narc}
 pinpoints our ultimate desire \underline{from the algorithm $\MMM$}.
Often, however, $\MMM$ is not explicit and we only have access to a fixed model $y$ and a fixed training set $S$. So there is no clear distribution over $S$ and it is not clear what is the process which was used to compute $y$ from $S$. Still, in order to be able to assess whether a particular attack is successful, we would like to have a definition of what it means to ``reconstruct'' a particular $S$ from a particular $y$. 
To emphasize this difference, when $S$ and $y$ are fixed and we want to classify a successful attack (rather than resiliency to such)
we use the term ``extract'' instead of ``reconstruct''.

\begin{question}\label{question:extraction:K}
    How do we define that a {\em fixed} dataset $S$ (or a {\em fixed} point in it $x\in S$) is extractable from a {\em fixed} string $y$?
\end{question}

Intuitively, such ``extraction'' means that $x$ is encoded in $y$. But capturing this intuition more formally turns out to be challenging. In this work, we offer the first formal definition that, we hope, will help understand what real-world attacks do, and what properties future attacks should highlight in order to argue about their quality.

Prior works presented several definitions with the following flavor:

\begin{definition}[\cite{CarliniLargeModels21,CarliniDiffusion23}]\label{def:intro:Carlini}
    Let $R$ be a relation such that $R(x,z)=1$ means that $z$ is a valid extraction of $x$.  
    A string $x$ is \emph{extractable} from a model $y$, if there exists an efficient program $\cA$ such that $R(x,\cA(y)) = 1$.
\end{definition}

Intuitively, the program $\cA$ in \cref{def:intro:Carlini} serves as evidence that $x$ is extractable from $y$, because given $y$ it outputs $z$ with $R(x,z)=1$. For example, if $x$ appears at the \ith location of $y$, then the program that given $y$ as input, outputs $y_{i}, y_{i+1}, \ldots, y_{i+\size{x}-1}$ satisfy the condition of \cref{def:intro:Carlini}.

However, at the formal level, this definition is meaningless: Because $\cA$ is chosen \emph{after} the example $x$, then formally, every example $x$ is ``extractable'' as there exists $\cA$ that (ignores its input) and outputs $x$. \citet{CarliniLargeModels21} mentioned that in order to prevent such pathological cases, the program $\cA$ should be \emph{shorter} than $x$ (and therefore $\cA$ cannot ``memorize'' $x$). But this by itself is not enough. For example, consider the string $x = \underbrace{34\:34 \:\ldots\:34}_{1000 \text{ times}}$ and the program $\cA$ ``Print `$34$' $1000$ times''. Namely, this example illustrates that $\cA$ being short is clearly not evidence for not being able to ``memorize'' $x$.
Therefore, the length of $\cA$ cannot be the only criteria for determining its validity, and there should be some connection between the length of the attack $\cA$, and the ``complexity'' of the target output. Intuitively, if there is no way to encode the target output $x$ using a short program (i.e., to compress it), then by demonstrating a short attacking program that is able to reveal $x$ given $y$, proves that $x$ is really extracted from the model $y$.

\begin{question}
	How do we quantify the ``complexity'' of an outcome $x$?
\end{question}

This question leads us to a classical complexity measure, called \emph{Kolmogorov-Complexity}.

\subsubsection{Defining Extraction via Kolmogorov Complexity}\label{sec:RecognizingRec:intro:OurApproach}

What makes the string 343434343434343 less random than 285628563123452?
The notion of \emph{Kolmogorov complexity} (in short, $K$-complexity), introduced by \citet{Sol67,Kolmogorov68,Chaitin69} in the 60s, provides an elegant method for measuring the amount of ``randomness'' in  an individual string. Informally, the $K$-complexity of a string $x$, denoted by $K(x)$, is the length of the shortest program that outputs the string $x$.
Intuitively, a string $x$ has high $K$-complexity, if there is no short program that outputs it and halts.

One issue with $K$-complexity is that it is not well defined without specifying which programming language $\cL$ do we use. In theoretical results (when constants do not matter), this complexity is usually defined w.r.t. a fixed \emph{Universal Turing Machine}. But when constants matter, a more formal way to define it is as follows.

\begin{definition}[$K_{\cL}$-Complexity]\label{def:intro:KLcomplexity}
Let $\cL$ be a programming language (e.g., Python).
The $K_{\cL}$-complexity of a string $x$, denoted by $K_{\cL}(x)$, is the length of the shortest $\cL$-program that outputs $x$ and halts.
Similarly, given a set of strings $X$, we denote by $K_{\cL}(X)$ the length of the shortest $\cL$-program that outputs an element in $X$ and halts. 

\end{definition}

Now that we have \cref{def:intro:KLcomplexity} in hand, we are finally ready to define extraction.

\begin{definition}[Our extraction definition, informal]\label{def:intro:extraction}
    Let $R$ be an extraction relation and $\cL$ a programming language.
    We say that a string $x$ is $(R,\cL)$-extractable from a string $y$ iff there exists an $\cL$-program $\cA$ such that the following holds:
    \begin{enumerate}
        \item $\cA(y)$ outputs $z$ such that $R(x,z) = 1$, and
        \item $K_{\cL}(\set{z \colon R(x,z)=1}) \gg \size{\cA}$.\label{item:intro:HardExtraction}
    \end{enumerate}
    We measure the \emph{quality} of the extraction by  $1 - \frac{\size{\cA}}{K_{\cL}(\set{z \colon R(x,z)=1})}$ (closer to $1$ means a more significant extraction).
\end{definition}

Namely, $x$ is ``$(R,\cL)$-extractable'' from $y$ if there exists a \emph{short} $\cL$-program $\cA$ that outputs an extraction (according to $R$) of $x$ from $y$, while there is no short $\cL$-program that extracts $x$ without $y$. We note that extraction must be in a context of a specific programming language, since otherwise, for every $x$ and $y$ we could always find a ``special'' programming language $\cL_{x,y}$ such that $x$ is $\cL_{x,y}$-extractable from $y$.\footnote{For example, consider a programming language $\cL_{x,y}$ that given a command $y$ outputs $x$.} Satisfying this definition w.r.t. a \emph{standard} programming language (e.g., Python, Java, etc) should indeed be considered as a valid extraction. 

\cref{def:intro:extraction} can easily be extended to datasets $S$ (rather than a single example $x$), and in \cref{sec:RecognizingRec} we also consider a probabilistic version of the $K_{\cL}$-complexity that allows some error probability, and redefine \cref{def:intro:extraction} w.r.t. this version. 

To provide further evidence supporting \cref{def:intro:extraction}, we demonstrate in \cref{sec:real-world-attacks} how three different types of real-world attacks \cite{CarliniLargeModels21,HaimVYSI22,CarliniDiffusion23} can be explained using the terminology of \cref{def:intro:extraction}.

\subsubsection{Narcissus-Resiliency prevents non-trivial extraction of training data}\label{sec:intro:NR-prevent-Ext}

In \cref{sec:NR-prevents-extraction} we show that if $\cM$ is $(\eps,\delta,\FFF)$-$R$-Narcissus-resilient then it prevents extraction in the following sense. Fix a distribution $\DDD\in\FFF$, sample two independent datasets $S$ and $T$ from $\DDD$, and compute $y\leftarrow\MMM(S)$. We treat $S$ as the ``real'' dataset and $T$ as a ``shadow'' dataset. Suppose that there is no adversary $\BBB$ that given $y$ can find a short program $\AAA$ such that $R(T,\AAA(y))=1$. Informally, this means that extracting information about the shadow dataset $T$ without receiving any information about it is hard. Then, if $\MMM$ is Narcissus-resilient, there is also no adversary that achieves this w.r.t.\ $S$ (even though the adversary gets $y$ which was computed based on $S$).

\subsubsection{Verifying the validity of reconstruction attacks}\label{sec:intro:ver_hardness}

The main limitation of \cref{def:intro:extraction} is that \cref{item:intro:HardExtraction} (lower-bounding the Kolmogorov Complexity) is not verifiable, as computing the $K$-complexity is an intractable problem. We could consider a tractable version of it, called \emph{time-bounded} $K$-complexity, where given a parameter $t$ we consider only programs that halt within $t$ steps (\cite{Kolmogorov68,Sisper83,Trakhtenbrot84,Ko86}). This, however, would only relax intractability to inefficiency, which does not help in our context where we would like to verify a reconstruction attack efficiently. In practice, we can only use Heuristics to gain some level of confidence. E.g., to apply many well-known compression algorithms on $x$ and check that all of them results with a compressed representation that is much longer than $\size{\cA}$. Yet, we remark that any such Heuristic can fail to determine some highly compressible patterns, and as we demonstrate in \cref{sec:hard-of-ver}, this problem is indeed inherent assuming that a basic cryptography primitive (pseudo-random generator) exists.

\subsection{Additional related works}\label{sec:relatedWorks}

\paragraph{Memorization}

Perhaps the most basic privacy violation is \emph{memorization}.
\citet{FeldmanZ20,Feldman20,BBFST21,BBS22,Livni23,ADHLR24} theoretically study the necessity of memorization in learning. Roughly speaking, they show that for some tasks, the output of any accurate algorithm must have large \emph{mutual information} with the training data. A similar result of \citet{BZ16} show that there exist learning tasks in which memorization is only necessary for \emph{efficient} learners. 
While large memorization (i.e., mutual information) implies that the algorithm is not differentially private, it does not imply the existence of an efficient attack. Indeed, except for \cite{BZ16,ADHLR24}, these works are typically not constructive (i.e., they do not provide an efficient way to translate the memorization into an efficient privacy attack).

\paragraph{Computational Differential Privacy}

\citet{BNO08,MPRV09} considered a computational relaxation of (the standard, information-theoretic) differential privacy, where they require that the outputs of two executions on neighboring datasets are indistinguishable only from the eyes of an \emph{efficient} observer. This relaxation turns out to be necessary for fundamental distributed tasks \cite{McGregorMPRTV10,HaitnerMST22} but also for some (artificial) centralized ones \cite{BunCV16,GhaziIKKM23}. We note that an efficient privacy attack (e.g., membership inference or reconstruction) implies that the algorithm is not computationally differentially private. But in the opposite direction, if an algorithm is not computationally differentially private, the privacy attack, while being efficient, could be very negligible (e.g., the attack might reveal only a single, insignificant, bit of information about a training example, and still violating computational differential privacy).

\paragraph{Membership Inference}
In \emph{membership inference (MI)} \cite{shokri2017membership,yeom2018privacy}, 
an adversary is given a target example and its goal is to infer
whether it was included in a model’s training set. Most techniques follow a paradigm of measuring some correlation/loss function between the target example and the model and checking if it exceeds some threshold. For example, \citet{DworkSSUV15} showed that any algorithm that given $x_1,\ldots,x_n \in \oo^d$ outputs $y \in [-1,1]^d$ that is sufficiently close to the average
$\frac1n\sum_{i=1}^n x_i$, is exposed to the MI attack that given a target example $x$, decides "IN" or "OUT" based on whether $\ip{x,y}$ (the inner-product between $x$ and $y$) exceeds some threshold. The MI literature is very rich, and includes works that perform attacks on fundamental learning tasks (e.g., \cite{SankararamanOJH09,DworkSSUV15,Sablayrolles2019WhiteboxVB,AzizeB24,ADHLR24}) along with MI attacks on more complex models (e.g., \cite{ShokriSSS17,yeom2018privacy,JagielskiTTILCWS23,CarliniCNSTT22,YunhuiEtAl20,Sablayrolles2019WhiteboxVB,WatsonGCS22}). We remark that while MI attacks are indeed a reasonable privacy concern in many scenarios, they are weaker than reconstruction attacks because the attacker needs to know the target training example beforehand while without it there might be no way to extract sensitive information. 

\paragraph{Reconstruction} 

The results of \citet{DiNi03} provided a foundation for rigorously quantifying reconstruction bounds for general query release mechanisms. In their setting, the dataset is binary, and they show that given a few statistical queries, it is possible to reveal $90\%$ of the dataset. \citet{DMT07,DY08,HaitnerMST22} provide improved results in similar settings like \cite{DiNi03} but under weaker accuracy assumptions. In more general settings, \citet{GuoKCV22} demonstrated that both R\'enyi Differential Privacy (DP) and Fisher information offer robust semantic assurances against reconstruction attacks. \citet{balle2022reconstructing} introduced the concept of reconstruction robustness (ReRo), establishing a link between reconstruction attacks and DP. \citet{hayes2023bounding} extended \cite{balle2022reconstructing}'s to analyze reconstruction attacks against DP-SGD. \citet{kaissis2023bounding} furthered this research by examining the connection between the hypothesis testing interpretation of DP (specifically $f$-DP) and reconstruction robustness. From a more practical standpoint, there is a rich literature on reconstruction attacks on various real-world models (e.g., \cite{FredriksonJR15,CarliniCUKS18,YangZCL19,HeTR19,Carlini0EKS19,YinMALMHJK20,CarliniDiffusion23,HaimVYSI22}). In \cref{sec:RecognizingRec} (recognizing reconstruction), we focus on three of them in order to illustrate the expressiveness of our \cref{def:intro:extraction}.

\paragraph{Formalizing legal concepts of privacy} Some of the work towards formalizing legal privacy concepts mathematically has taken an approach similar to our ``sandwiching'' of the concept of reconstructing rather than attempting a precise answer to Question~\ref{q:motivation}. 
These works did not attempt to model a legal concept exactly but rather define requirements which are clearly stricter or clearly weaker than the those of the legal concept, yet non-trivial so they can substantiate a claim that the use of a certain technology satisfies or does not satisfy the legal requirement. 
As one example, in their modeling of the FERPA privacy requirement,\footnote{FERPA -- The Family Educational Rights and Privacy Act (FERPA) -- is is a Federal law that protects the privacy in education records.} \citet{Bridging} provided a definition which is stronger than the legal requirement, yet satisfiable by the use of differential privacy, hence providing strong evidence that the use of differential privacy (with appropriate parameters) satisfies the legal requirements. As another example is the modeling of the GDPR requirement of protection against singling out \citet{CohenN20} defined {\em protection against predicate singling out}, a concept which is weaker than the legal requirement. Showing that $k$-anonymity does not protect against predicate singling out they hence claimed that it does neither satisfy the legal requirement.

\section{Expressiveness of Narcissus resiliency}
\label{sec:NarcissusExpressiveness}

In the introduction we presented the definition of Narcissus resiliency (Definition~\ref{def:narc}) with the interpretation of ``security against reconstruction attacks'', where the function $R$ encapsulates what a ``valid reconstruction'' means. We now show that with different instantiations of the function $R$, the same framework can be used to capture many other existing security notions.  In this section we show this for the notions of {\em membership inference attacks} and for {\em predicate singling out}. In Appendices~\ref{sec:DP-narc} and~\ref{sec:comp-narc} we extend this to additional security notions, including {\em differential privacy}, security of {\em one-way functions}, and security of {\em encryption schemes}.
The fact that Narcissus resiliency is expressive enough to capture all these (seemingly unrelated) well-established notions enhances our confidence in its application as a security notion against reconstruction attacks.

\subsection{Membership inference as Narcissus resiliency}

Membership Inference (MI) attacks \cite{shokri2017membership,yeom2018privacy} are a family of attacks which are applied mostly to machine learning models. 
The goal of an MI attack is to determine whether a given data record was part of the training data underlying the model or not. 
More specifically, let $\DDD$ be a distribution over data records, and let $\MMM$ be an algorithm for analyzing datasets of size $n$. 
Algorithm $\MMM$ is said to be MI-secure if no adversary $\AAA$ has a significant advantage in distinguishing between the outputs of the following two experiments:
\begin{itemize}[leftmargin=20px]%, topsep=0px, itemsep=0px]
    \item Sample $S\leftarrow\DDD^n$ and $z\leftarrow\DDD$ independently. Let $y\leftarrow\MMM(S)$. Output $(y,z)$.
    \item Sample $S\leftarrow\DDD^n$. Let $z$ be a uniformly random element from $S$. Let $y\leftarrow\MMM(S)$. Output $(y,z)$.
\end{itemize}

The formal definition is as follows.

\begin{definition}[Resilience to Membership Inference, \cite{shokri2017membership,yeom2018privacy}]
Let $\MMM:\XXX^n\rightarrow Y$ be an algorithm that operates on a dataset, and let $\DDD$ be a distribution over $\XXX$. We say that $\MMM$ is $(\delta,\DDD)$-MI-secure if for every adversary $\AAA$ it holds that 
$$
\left|
\underset{\substack{S\leftarrow\DDD^n\\y\leftarrow\MMM(S)\\z\leftarrow\DDD\\b\leftarrow\AAA(y,z)}}{\Pr}[b=1]
-
\underset{\substack{S\leftarrow\DDD^n\\y\leftarrow\MMM(S)\\z\in_{\rm R} S\\b\leftarrow\AAA(y,z)}}{\Pr}[b=1]
\right|
\leq\delta.
$$
\end{definition}

We show that Narcissus resiliency captures the concept of membership inference. We make use of the following function:

\begin{definition}
We define $R_{\rm MI}$ to be a (randomized) binary function, which takes two arguments: a dataset $Z\in\XXX^n$ and a (possibly randomized) function $f:\XXX\rightarrow\{0,1\}$. Given $Z,f$, to compute $R_{\rm MI}(Z,f)$, sample a point $z\in Z$ and return $f(z)$. 
\end{definition}

\begin{theorem}
\label{thm:MInarc} Let $\MMM:\XXX^n\rightarrow Y$ be an algorithm and let $\DDD$ be a distribution over $\XXX$. Then $\MMM$ is $(\delta,\DDD)$-MI-secure if and only if it is $(0,\delta,\{\DDD^n\})$-$R_{\rm MI}$-Narcissus-resilient.
\end{theorem}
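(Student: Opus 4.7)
The plan is to exhibit a bijection between MI adversaries $\BBB$ and Narcissus attackers $\AAA$ for $R_{\rm MI}$ under which the two probabilities compared on each side of the equivalence map onto each other.

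First, I would unpack the Narcissus-resiliency condition with $R=R_{\rm MI}$, $\eps=0$, and $\FFF=\{\DDD^n\}$. Substituting the definition of $R_{\rm MI}(Z,f)$, which samples $z\in_R Z$ and returns $f(z)$, the Narcissus condition becomes: for every attacker $\AAA$ whose output on $y$ is (a description of) a function $f$,
\[
\Pr_{\substack{S\leftarrow\DDD^n,\,y\leftarrow\MMM(S)\\ f\leftarrow\AAA(y),\,z\in_R S}}[f(z)=1]
\;\leq\;
\Pr_{\substack{S,T\leftarrow\DDD^n,\,y\leftarrow\MMM(S)\\ f\leftarrow\AAA(y),\,z\in_R T}}[f(z)=1]\;+\;\delta.
\]
A key observation is that since $T\leftarrow\DDD^n$ is drawn independently of $(S,y,f)$ and $z$ is uniform in $T$, the marginal of $z$ is exactly $\DDD$; hence the right-hand side matches the ``$z\leftarrow\DDD$ independently'' experiment in the MI definition, while the left-hand side matches the ``$z\in_R S$'' experiment.

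Second, I would formalize the bijection between attackers. Given an MI adversary $\BBB(y,z)$, let $\AAA_\BBB$ be the Narcissus attacker that on input $y$ outputs the function $f_y(\cdot)=\BBB(y,\cdot)$. Conversely, given a Narcissus attacker $\AAA$, let $\BBB_\AAA(y,z)$ compute $f\leftarrow\AAA(y)$ and return $f(z)$. These transformations are mutual inverses, and by the observation above the success probability of $\BBB$ in the ``$z\in_R S$'' experiment equals the LHS of the Narcissus inequality for $\AAA_\BBB$, and its success probability in the ``$z\leftarrow\DDD$'' experiment equals the RHS (minus $\delta$).

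Third, I would verify each direction. For ``MI-secure $\Rightarrow$ Narcissus-resilient'', given any $\AAA$ apply the MI-security bound to $\BBB_\AAA$; the absolute-value guarantee implies in particular the signed inequality required by Narcissus. For the reverse direction, given any MI adversary $\BBB$, applying Narcissus-resiliency to $\AAA_\BBB$ bounds $\Pr[b=1\mid z\in_R S]-\Pr[b=1\mid z\leftarrow\DDD]$ by $\delta$; to recover the other side of the absolute value, apply Narcissus-resiliency to $\AAA_{1-\BBB}$, whose output is the pointwise complement function, which negates the success probability inside each experiment and therefore produces the reverse signed inequality. Combining the two gives the MI bound. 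I do not foresee a real obstacle: the statement is essentially a definitional repackaging. The only points requiring explicit care are the reduction of ``$z\in_R T$ with $T\leftarrow\DDD^n$'' to ``$z\leftarrow\DDD$'' (immediate from i.i.d.\ structure) and reconciling the one-sided Narcissus inequality with the two-sided MI absolute value (handled by the complement trick).
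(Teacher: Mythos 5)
Your proposal is correct and follows essentially the same route as the paper's proof: both establish the equality between the Narcissus success-probability gap (with $S$ versus $T$) and the MI success-probability gap (with $z\in_R S$ versus $z\leftarrow\DDD$) via the identification $f(\cdot)=\AAA(y,\cdot)$, and both use the pointwise-complement attacker $\hat{f}=1-f$ to recover the second side of the MI absolute-value bound from the one-sided Narcissus inequality.
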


\begin{proof}
First observe that an MI-adversary gets two arguments (an outcome $y$ and a point $z$) and returns a bit, while an $R_{\rm MI}$-Narcissus-adversary gets only one argument (the outcome $y$) and returns a binary function $f$ that takes a point and returns a bit (aiming to satisfy $R_{\rm MI}$). To help bridge between these notations, given an MI-adversary $\AAA$ and an outcome $y$, we write $f(\cdot)\leftarrow\AAA(y,\cdot)$ to denote the binary function obtained by fixing $\AAA$'s first argument to be $y$. We can interpret this as an $R_{\rm MI}$-Narcissus-adversary that takes one argument (the outcome $y$) and returns the function $f(\cdot)\leftarrow\AAA(y,\cdot)$.
It follows that 

\begin{align*}
\underset{\substack{S\leftarrow\DDD^n\\y\leftarrow\MMM(S)\\f(\cdot)\leftarrow\AAA(y,\cdot)}}{\Pr}[R_{\rm MI}(S,f)=1]
-
\underset{\substack{S\leftarrow\DDD^n\\T\leftarrow\DDD^n\\y\leftarrow\MMM(S)\\f(\cdot)\leftarrow\AAA(y,\cdot)}}{\Pr}[R_{\rm MI}(T,f)=1]
\quad
=
\quad
\underset{\substack{S\leftarrow\DDD^n\\y\leftarrow\MMM(S)\\z\in_{\rm R} S\\b\leftarrow\AAA(y,z)}}{\Pr}[b=1]
-
\underset{\substack{S\leftarrow\DDD^n\\y\leftarrow\MMM(S)\\z\leftarrow\DDD\\b\leftarrow\AAA(y,z)}}{\Pr}[b=1].
\end{align*}
Hence, if $\MMM$ is $(\delta,\DDD)$-MI-secure then it is also $(0,\delta,\{\DDD^n\})$-$R_{\rm MI}$-Narcissus-resilient. In addition, if the above two differences are positive, then the equality holds also in absolute value, and hence the condition of $(0,\delta,\{\DDD^n\})$-$R_{\rm MI}$-Narcissus-resiliency implies the condition of $(\delta,\DDD)$-MI-security. Otherwise, 
to show that the left hand side cannot be smaller than $-\delta$,
let $\hat{\AAA}$ be the $R_{\rm MI}$-Narcissus-adversary that runs $\AAA$ and returns the ``inverted'' function $\hat{f}\equiv1-f$. Then, 
\begin{align*}
%0&\leq
&
\underset{\substack{S\leftarrow\DDD^n\\T\leftarrow\DDD^n\\y\leftarrow\MMM(S)\\f(\cdot)\leftarrow\AAA(y,\cdot)}}{\Pr}[R_{\rm MI}(T,f)=1]
-
\underset{\substack{S\leftarrow\DDD^n\\y\leftarrow\MMM(S)\\f(\cdot)\leftarrow\AAA(y,\cdot)}}{\Pr}[R_{\rm MI}(S,f)=1]\\
&=-\left(
\underset{\substack{S\leftarrow\DDD^n\\T\leftarrow\DDD^n\\y\leftarrow\MMM(S)\\\hat{f}(\cdot)\leftarrow\hat{\AAA}(y,\cdot)}}{\Pr}[R_{\rm MI}(T,\hat{f})=1]
-
\underset{\substack{S\leftarrow\DDD^n\\y\leftarrow\MMM(S)\\\hat{f}(\cdot)\leftarrow\hat{\AAA}(y,\cdot)}}{\Pr}[R_{\rm MI}(S,\hat{f})=1]
\right)\\
&=
\underset{\substack{S\leftarrow\DDD^n\\y\leftarrow\MMM(S)\\\hat{f}(\cdot)\leftarrow\hat{\AAA}(y,\cdot)}}{\Pr}[R_{\rm MI}(S,\hat{f})=1]
-
\underset{\substack{S\leftarrow\DDD^n\\T\leftarrow\DDD^n\\y\leftarrow\MMM(S)\\\hat{f}(\cdot)\leftarrow\hat{\AAA}(y,\cdot)}}{\Pr}[R_{\rm MI}(T,\hat{f})=1]
\leq\delta.
\end{align*}
which follows from the resilience of $\MMM$ against $\hat{\AAA}$.
\end{proof}

\subsection{Predicate singling out as Narcissus resiliency}

\cite{CohenN20} defined a type of privacy attack called {\em Predicate Singling Out (PSO)}, intended to mathematically formulate the legal concept of ``singling out'' that appears in the General Data Protection Regulation (GDPR). Concretely, let $\DDD$ be a data distribution and let $S\leftarrow\DDD^n$ be a dataset containing $n$ iid samples from $\DDD$. Let $\MMM$ be an algorithm that operates on $S$ and return an outcome $y$. A PSO adversary gets $y$ and aims to find a predicate $p$ that ``isolates'' one record in $S$, meaning that it evaluates to 1 on {\em exactly} one record in $S$.  Note, however, that without further restrictions this is not necessarily a hard task. In particular, even without looking at $y$, the adversary might choose a predicate $p$ whose expectation over $\DDD$ is $1/n$. Such a predicate would isolate a record in $S$ with constant probability.  More generally, if the expectation of $p$ is $w$, then the probability that it isolates a record in a fresh dataset of size $n$ is $n\cdot w\cdot(1-w)^{n-1}$.
Thus, for the attack to be considered ``significant'', the adversary has to succeed in its attack with probability noticeably higher than this. The formal definition is as follows.

\begin{definition}[\cite{CohenN20}]\label{def:PSO} Let $\MMM:\XXX^n\rightarrow Y$ be an algorithm that operates on a dataset. We say that $\MMM$ is $(\eps,\delta,w_{\rm max})$-PSO secure if for every $w\leq w_{\rm max}$, every distribution $\DDD$ over $\XXX$, and every adversary $\AAA$ it holds that
$$
\underset{\substack{S\leftarrow\DDD^n\\y\leftarrow\MMM(S)\\ p\leftarrow\AAA(y)}}{\Pr}\left[ \sum_{x\in S}p(x)=1 \wedge  \E_{\DDD}[p]\leq w    \right]
\leq e^{\eps} \cdot 
\sup_{\substack{\text{predicate } p\\ \text{s.t.\ } \E_{\DDD}[p]\leq w}}\left\{  n\cdot \E_{\DDD}[p]\cdot\left(1-\E_{\DDD}[p]\right)^{n-1}  \right\}
+\delta.
$$
\end{definition}

Using the paradigm of Narcissus resiliency, we present an alternative definition for predicate singling out, which is simpler in that it avoids reasoning directly about the expectations of the predicates.

\begin{definition}[Narcissus singling out security]\label{def:NPSO} Let $\MMM:\XXX^n\rightarrow Y$ be an algorithm. Let $R$ be the relation that takes a dataset $S$ and a predicate $p$, where $R(S,p)=1$ if and only if $p$ evaluates to 1 on {\em exactly} one point in $S$. 
We say that $\MMM$ is $(\eps,\delta)$-Narcissus-singling-out-secure if for every distribution $\DDD$ over $\XXX$ and for every attacker $\AAA$ it holds that 
$$\underset{\substack{S\leftarrow\DDD^n\\y\leftarrow\MMM(S)\\p\leftarrow\AAA(y)}}{\Pr}[R(S,p)=1] \leq e^\eps \cdot \underset{\substack{S\leftarrow\DDD^n\\T\leftarrow\DDD^n\\y\leftarrow\MMM(S)\\p\leftarrow\AAA(y)}}{\Pr}[R(T,p)=1] + \delta.$$
\end{definition}

Intuitively, Definitions~\ref{def:PSO} and~\ref{def:NPSO} have similar interpretations: Finding a predicate that isolates a record in $S$ after seeing the outcome of $\MMM$ is almost as hard as doing this without seeing the outcome of $\MMM$. However, the definitions are not equivalent. We leave open the question of understanding the relationships between these two definitions.

\section{Narcissus-Resiliency Prevents Non-Trivial Extraction}\label{sec:NR-prevents-extraction}

Let $\MMM$ be a mechanism which is applied to a dataset $S$ to obtain an outcome $y$. Let $\BBB$ be an ``extraction attacker'' that takes the outcome $y$ and aims to compute a short program $\AAA$ that serves as an extraction evidence for $S$ w.r.t.\ a relation $R$ and a programming language $\cL$. We show that if $\MMM$ is Narcissus-resilient (w.r.t.\ an appropriate reconstruction relation), then it prevents $\BBB$ from succeeding in its attack. Formally,

\begin{definition}\label{def:R:ext}
Fix a parameter $q\leq1$ controlling the desired {\em quality} of extraction (as in Definition~\ref{def:intro:extraction}). 
We define the relation $R_{\rm ext}$ that takes a dataset $S$ and a pair $(y,\AAA)$ where $y$ is an outcome of $\MMM$ and $\AAA$ is a program, and returns 1 if and only if the following two conditions hold:
\begin{enumerate}
        \item $R(S,\AAA(y)) = 1$, and
        \item $\frac{\size{\cA}}{K_{\cL}(\set{z \colon R(S,z)=1})} \leq 1-q $.
\end{enumerate}
\end{definition}

\begin{lemma}[Narcissus-resiliency prevents non-trivial extraction]\label{lemma:NR-prevent-ext-ver2}
Let $\MMM$ be an $(\eps,\delta,\FFF)$-$R_{\rm ext}$-Narcissus-resilient according to \cref{def:narc}, and consider an adversary $\cB$ that takes the outcome of $\MMM$ and outputs an $\cL$-program $\AAA$. Then for every $\DDD\in\FFF$ we have
$$
\underset{\substack{S\leftarrow\DDD\\y\leftarrow\MMM(S)\\\AAA\leftarrow\BBB(y)}}{\Pr}\left[ 
\begin{matrix}
\AAA \text{ is an } (R,\cL)\text{-extraction} \\ \text{evidence of } S \text{ from } y     
\end{matrix}
\right]
\leq
e^{\eps}\cdot
\underset{\substack{S\leftarrow\DDD\\T\leftarrow\DDD\\y\leftarrow\MMM(S)\\\AAA\leftarrow\BBB(y)}}{\Pr}\left[  
\begin{matrix}
\AAA \text{ is an } (R,\cL)\text{-extraction} \\ \text{evidence of } T \text{ from } y     
\end{matrix}
\right]
+\delta
$$
\end{lemma}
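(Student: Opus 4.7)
The statement is essentially a direct unpacking of the Narcissus-resiliency definition once one chooses the right reconstruction adversary and reads off the meaning of $R_{\rm ext}$. My plan is to convert the extraction attacker $\BBB$ into a standard Narcissus adversary $\AAA'$, invoke Definition~\ref{def:narc} with the relation $R_{\rm ext}$ from Definition~\ref{def:R:ext}, and then translate the resulting inequality back into the extraction language.

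Concretely, I would define $\AAA'$ on input $y$ to run $\BBB(y)$ to obtain an $\cL$-program $\AAA$ and output the pair $z \eqdef (y,\AAA)$. This is a legal Narcissus adversary because it is a (possibly randomized) function of $y$ alone, and packaging $y$ into its output is allowed since $\AAA'$ receives $y$ as input. Now applying the $(\eps,\delta,\FFF)$-$R_{\rm ext}$-Narcissus-resiliency of $\MMM$ to $\AAA'$, for every $\DDD \in \FFF$ we get
\[
\underset{\substack{S\leftarrow\DDD\\y\leftarrow\MMM(S)\\(y,\AAA)\leftarrow\AAA'(y)}}{\Pr}[R_{\rm ext}(S,(y,\AAA))=1]
\;\leq\;
e^{\eps}\cdot
\underset{\substack{S\leftarrow\DDD,\,T\leftarrow\DDD\\y\leftarrow\MMM(S)\\(y,\AAA)\leftarrow\AAA'(y)}}{\Pr}[R_{\rm ext}(T,(y,\AAA))=1]+\delta.
\]

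The second step is to match the events inside the two probabilities with the ones in the statement. By Definition~\ref{def:R:ext}, the event $R_{\rm ext}(S,(y,\AAA))=1$ is, by construction, exactly the conjunction of (i) $R(S,\AAA(y))=1$ and (ii) $|\AAA|/K_{\cL}(\{z : R(S,z)=1\}) \le 1-q$, which is precisely the condition from Definition~\ref{def:intro:extraction} that $\AAA$ is an $(R,\cL)$-extraction evidence of $S$ from $y$ of quality at least $q$. The same reading applied to $R_{\rm ext}(T,(y,\AAA))$ gives the corresponding event for $T$ on the right-hand side. Since $\AAA'$ merely wraps $\BBB$, the joint distribution of $(S,y,\AAA)$ induced by $\AAA'$ is identical to that induced by $\BBB$ in the statement, so substituting these equalities into the displayed inequality yields exactly the claimed bound.

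There is no real obstacle here: the relation $R_{\rm ext}$ has been engineered so that Narcissus-resiliency with $R_{\rm ext}$ is equivalent to an extraction-indistinguishability statement. The only point worth being careful about is that the Kolmogorov-complexity term in $R_{\rm ext}$ depends on its \emph{first} argument, so on the right-hand side the quality threshold is evaluated against $K_{\cL}(\{z : R(T,z)=1\})$ rather than against the complexity defined via $S$; this is exactly what we want, because the right-hand side should measure the adversary's ability to produce evidence against the shadow dataset $T$. Once this is observed, the proof is a one-line application of the definition.
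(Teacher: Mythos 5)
Your proposal is correct and follows essentially the same route as the paper's proof: you wrap $\BBB$ into a Narcissus adversary that outputs the pair $(y,\BBB(y))$ (the paper calls it $\hat{\BBB}$, you call it $\AAA'$), invoke $R_{\rm ext}$-Narcissus-resiliency, and read off the extraction-evidence events from the definition of $R_{\rm ext}$. The only cosmetic difference is that the paper writes the final matching step as a ``$\leq$'' rather than an equality, but the argument structure is identical.
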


\begin{proof}
This follows directly from the assumption that $\MMM$ is $R_{\rm ext}$-Narcissus-resilient. Formally,
let $\hat{\BBB}$ denote the adversary that on input $y$ returns $(y,\BBB(y))$. Then,
\begin{align*}
\underset{\substack{S\leftarrow\DDD\\y\leftarrow\MMM(S)\\\AAA\leftarrow\BBB(y)}}{\Pr}\left[ 
\begin{matrix}
\AAA \text{ is an } (R,\cL)\text{-extraction} \\ \text{evidence of } S \text{ from } y     
\end{matrix}
\right] &=
%%%%%
\underset{\substack{S\leftarrow\DDD\\y\leftarrow\MMM(S)\\(y,\AAA)\leftarrow\hat{\BBB}(y)}}{\Pr}\left[ 
R_{\rm ext}(S,(y,\AAA))=1
\right]\\
%%%%%%%%%%%%%%
&\leq e^{\eps}\cdot \underset{\substack{S\leftarrow\DDD\\T\leftarrow\DDD\\y\leftarrow\MMM(S)\\(y,\AAA)\leftarrow\hat{\BBB}(y)}}{\Pr}\left[ 
R_{\rm ext}(T,(y,\AAA))=1
\right]+\delta\\
&\leq e^{\eps}\cdot
\underset{\substack{S\leftarrow\DDD\\T\leftarrow\DDD\\y\leftarrow\MMM(S)\\\AAA\leftarrow\BBB(y)}}{\Pr}\left[  
\begin{matrix}
\AAA \text{ is an } (R,\cL)\text{-extraction} \\ \text{evidence of } T \text{ from } y     
\end{matrix}
\right]
+\delta
\end{align*}
\end{proof}
\vspace{-10px}
A possible downside of Lemma~\ref{lemma:NR-prevent-ext-ver2} is that it leverages a {\em different} reconstruction relation for Narcissus-resiliency compared to the extraction relation. This was needed in the proof because our definition of extraction considered two conditions: not only that the attacker needs to satisfy the relation $R$, but it needs to do it with a ``short enough'' program. To capture this in the above lemma, we incorporated this condition in the reconstruction relation used for Narcissus-resiliency. This can be avoided in cases where for some parameter $k$ we have:
\begin{enumerate}
    \item The adversary $\BBB$ always outputs a program of length at most $k$; and
    \item The underlying data distribution $\DDD$ is such that with overwhelming probability over sampling $S\leftarrow\DDD$ we have that $\frac{k}{K_{\cL}(\set{z \colon R(S,z)=1})} \leq 1-q$.
\end{enumerate}
That is, in cases when sampling a dataset $S\leftarrow\DDD$ then with overwhelming probability the Kolmogorov complexity $K_{\cL}(\set{z \colon R(S,z)=1})$ is high. Indeed, if this is the case, then the second condition in Definition~\ref{def:R:ext} holds with overwhelming probability, and hence can be ignored, which unifies the two relations $R$ and $R_{\rm ext}$.

\subsection{Limitation of \cref{lemma:NR-prevent-ext-ver2}}\label{sec:limitaion-of-connection}

It is important to note that \cref{lemma:NR-prevent-ext-ver2} is limited to attackers $\cB$ that only see the output $y$ of the mechanism. But some of the real-world attacks also use the training data in order to construct an extraction evidence. E.g., in the Diffusion Model attack of \citet{CarliniDiffusion23} (see \cref{sec:real-world-attacks}), the attacker $\cB$ first processes the input data in order to find captions of ``interesting" images, and then define $\cA$ as the short program that simply queries the model on one of these captions. We note that Narcissus-resiliency does not prevent such attacks in general, because $\cB$ uses the training data to generate $\cA$. But \cref{lemma:NR-prevent-ext-ver2} does imply that it is impossible to achieve non-trivial extraction of training examples, without knowing them in advanced.

	\ifdefined\IsAnonymous
	\else
	\section*{Acknowledgments}

The authors would like to thank Noam Mazor for useful discussions about Kolmogorov complexity.
 
\medskip
\noindent Edith Cohen was partially supported by Israel Science Foundation (grant 1595/19 and 1156/23). Haim Kaplan was partially supported by Israel Science Foundation (grant 1595/19 and 1156/23) and the Blavatnik Family Foundation.
Yishay Mansour's work partially funded from the European Research Council (ERC) under the European Union’s Horizon 2020 research and innovation program (grant agreement No. 882396), by the Israel Science Foundation (grant number 993/17), Tel Aviv University Center for AI and Data Science (TAD), and the Yandex Initiative for Machine Learning at Tel Aviv University.
Shay Moran is a Robert J.\ Shillman Fellow; he acknowledges support by ISF grant 1225/20, by BSF grant 2018385, by an Azrieli Faculty Fellowship, by Israel PBC-VATAT, by the Technion Center for Machine Learning and Intelligent Systems (MLIS), and by the European Union (ERC, GENERALIZATION, 101039692). Views and opinions expressed are however those of the author(s) only and do not necessarily reflect those of the European Union or the European Research Council Executive Agency. Neither the European Union nor the granting authority can be held responsible for them.
Kobbi Nissim's work partially funded by NSF grant No.~2217678 and by a gift to Georgetown University.
Uri Stemmer was Partially supported by the Israel Science Foundation (grant 1419/24) and the Blavatnik Family foundation.
Eliad Tsfadia's work supported by a gift to Georgetown University.
	\fi
	
	%\addcontentsline{References}
	\printbibliography
	
	\appendix

\remove{
\section{variant and properties (not goint to stay here...)}

\paragraph{Properties of Narcissus resiliency.} Post processing, composition.

\begin{definition}[variant with minimal success threshold (APPENDIX???)]
Let $\XXX$ be a data domain, let $\FFF$ be a family of distributions over datasets containing elements from $\XXX$, and let $R:\XXX^*\times\{0,1\}^*\rightarrow\{0,1\}$. 
Algorithm $\MMM$ is $(\eps,\delta,\gamma,\FFF)$-$R$-Narcissus-resilient if for all $\DDD\in\FFF$ and for all attackers $\AAA$ it holds that
$$
\underset{\substack{S\leftarrow\DDD\\y\leftarrow\MMM(S)\\z\leftarrow\AAA(y)}}{\Pr}[R(S,z)=1]\leq\max\left\{\gamma\;,\;e^{\eps}\cdot \underset{\substack{S\leftarrow\DDD\\T\leftarrow\DDD\\y\leftarrow\MMM(S)\\z\leftarrow\AAA(y)}}{\Pr}[R(T,z)=1]+\delta\right\}.
$$
\end{definition}
}

\section{Example of a Narcissus Resilient Mechanism that is not Differentially Private}\label{sec:example}

Let $\XXX$ be a data domain, let $\FFF$ be the family of all i.i.d.\ distributions over $\XXX^n$, and let $q \colon \XXX \rightarrow \{0,1\}$ be a counting query. In the following we describe a \emph{deterministic} mechanism that is clearly not differentially private but is $(\eps,\delta,\FFF)$-$R$-Narcissus-resilient for the predicate $R$ that given $S = (x_1,\ldots,x_n)$ and $z=(i,b) \in [n]\times \{0,1, \perp\}$ as inputs, outputs $1$ iff $q(x_i)=b$ (i.e., the adversary's task is to predict the value of the counting query on one data element, and we allow it to ``fail'' intentionally by outputing $\perp$ as a prediction).
The mechanism is describe below.

\begin{algorithm}[Mechanism $\MMM$]\label{alg:M}
	\item Input: $S = (x_1,\ldots,x_n) \in \cX^n$.
	\item Parameters: $\varepsilon, \delta > 0$.
	\item Operation:~
	\begin{enumerate}
		\item Compute $\tilde{p} =  \frac1n\sum_{i=1}^n q(x_i)$ (the empirical mean).
		
		\item Let $\tau = (\gamma^2 + 2\gamma)\cdot \frac{\ln(4/\delta)}{n}$ for $\gamma = \frac{6 e^{\eps}(1-\delta) - 2}{(1-\delta)e^{\eps} - 1}$ (note that $\gamma = \tilde{O}\paren{1/\eps}$ for $\eps < 1$ and $\delta \ll 1$).\label{step:tau}

		\item If $\tilde{p} < \tau$ or $\tilde{p} > 1-\tau$, output $ \perp$ (i.e., abort).
		\item Otherwise, output $\tilde{p}$. 
	\end{enumerate}
\end{algorithm}

Namely, as long as the empirical mean is not too small or too large, the mechanism simply outputs it without any additional noise (that is required for a DP mechanism). We remark that cutting the edges is necessary, as otherwise, the mechanism wouldn't have been Narcissus-resilient.\footnote{Let $\cD = \cP^n$ and assume $p \eqdef \ppr{x \sim \cP}{q(x) = 1} = 1/n^2$. Consider the attacker $\AAA$ that given $\tp$ as input, samples $i \la [n]$ and outputs $(i,1)$ if $\tp > 0$ and otherwise outputs $(i,\perp)$. The probability it reconstructs w.r.t. $S$ is $\pr{\tp > 0} \cdot \Omega(1/n) = \Omega(1/n^2)$, but the probability it reconstructs w.r.t. $T$ (fresh i.i.d.\ samples) is  $\pr{\tp > 0} \cdot 1/n^2 = O(1/n^3)$.} Yet, we show that by cutting it when it is only $\tilde{O}\paren{\frac1{\eps^2 n}}$ close to the edges (for $\eps < 1$ and $\delta \ll 1$) ensures $(\eps,\delta,\FFF)$-$R$-Narcissus-resiliency.

\begin{claim}
        Let $\eps,\delta > 0$ such that $\gamma(\eps,\delta) \geq 6$ and assume that $n \geq 4(\gamma^2 + 2\gamma) \ln(1/\delta)$ (i.e., $\tau \leq 1/4$).
	Then the mechanism $\MMM$ (with parameters $\eps,\delta$) is $(\eps,\delta,\FFF)$-$R$-Narcissus-resilient.
\end{claim}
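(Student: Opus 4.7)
The plan is to reduce the Narcissus inequality to a concentration statement about the empirical mean $\tilde p$ around its expectation $p=\Pr_{x\sim\cP}[q(x)=1]$, and then balance two Chernoff--based regimes separated by the threshold $\tau$. For any adversary $\AAA$ and any output $y$, let $\alpha(y),\beta(y)$ denote the conditional probabilities that $\AAA(y)$ outputs $b=1$ and $b=0$ respectively (the choice of index $i$ is irrelevant by exchangeability of the i.i.d.\ sample $S$). Exchangeability gives $\Pr[q(x_i)=1\mid y]=\tilde p$ on $\{y=\tilde p\}$, and since $\alpha(\perp),\beta(\perp)$ are constants on $\{y=\perp\}$, they can be pulled inside an expectation over $\tilde p$ to produce the pleasant identities $\text{LHS}=\E[\alpha(y)\tilde p+\beta(y)(1-\tilde p)]$ and $\text{RHS}=\E[\alpha(y)p+\beta(y)(1-p)]$.

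\textbf{Step 2 (decompose by output type).} Writing $\alpha(y)=\alpha(\perp)\1_{y=\perp}+\alpha(\tilde p)\1_{y\ne\perp}$ and analogously for $\beta$, subtracting, and using $p(1-e^\eps)\le 0$ to drop the non-positive ``telescoping'' pieces, one bounds $\text{LHS}-e^\eps\text{RHS}$ by the sum of four non-negative pieces: for each of $\alpha$ and $\beta$, one piece supported on $\{y\ne\perp\}$ (of the form $\E[\alpha(\tilde p)(\tilde p-e^\eps p)\1_{y\ne\perp}]$) and one piece supported on $\{y=\perp\}$ (of the form $\alpha(\perp)\bigl[\E[\tilde p\,\1_{y=\perp}]-e^\eps p\Pr[y=\perp]\bigr]$). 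By the $q\leftrightarrow 1-q$ symmetry we may assume $p\le 1/2$, and it remains to bound the two $\alpha$-pieces by $\delta/2$ in total.

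\textbf{Step 3 (two regime argument via $\tau$).} Split at $p^\star\approx\tau/\gamma$. For $p\le p^\star$, a multiplicative Chernoff bound $\Pr[\tilde p\ge\gamma p]\le\exp(-(\gamma-1)^2 pn/(\gamma+1))$ combined with the definition $\tau n=(\gamma^2+2\gamma)\ln(4/\delta)$ and $\gamma\ge 6$ yields $\Pr[y\ne\perp]\le\delta/4$; the $\{y\ne\perp\}$ $\alpha$-piece is then trivially $\le\Pr[y\ne\perp]$, and the $\{y=\perp\}$ $\alpha$-piece reduces, via the identity $\E[\tilde p\,\1_{y=\perp}]=p-\E[\tilde p\,\1_{y\ne\perp}]$ together with $p(1-e^\eps)\le 0$, to a multiple of $e^\eps p\,\Pr[y\ne\perp]$, hence to $O(\delta)$. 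For $p^\star<p\le 1/2$, the direct multiplicative Chernoff bound $\Pr[\tilde p>e^\eps p]\le\delta/4$ applies, and since the $\{y\ne\perp\}$ $\alpha$-piece is supported on $\{\tilde p>e^\eps p\}$, it is bounded by $\delta/4$; the $\{y=\perp\}$ piece is even smaller because $\Pr[y=\perp]$ is exponentially small by the lower-tail Chernoff for $\tilde p<\tau$ when $p>p^\star$. The main obstacle is the algebra verifying that the closed form $\gamma=(6e^\eps(1-\delta)-2)/((1-\delta)e^\eps-1)$ arises precisely by balancing the two critical Chernoff bounds (one for $\Pr[\tilde p\ge\gamma p]$ at the boundary $p=p^\star$ controlling Regime 1, one for $\Pr[\tilde p\ge e^\eps p]$ at $p=p^\star$ controlling Regime 2) against the target probability $\delta/4$; the $(1-\delta)$ factors inside $\gamma$ are there to absorb the small slack introduced by conditioning on the $\perp$ branch.
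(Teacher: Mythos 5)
Your Step~1 reformulation is correct and is in fact a cleaner, more direct route than the paper's: whereas the paper first proves a distributional indistinguishability statement $(\by,q(\bx_i))\approx_{\eps,\delta}(\by,q(\bx_i'))$ and then transfers it to the Narcissus inequality by a post-processing argument (decomposing the adversary into an index-choosing part and a bit-predicting part), you use exchangeability directly to write $\mathrm{LHS}=\E[\alpha(y)\tilde p+\beta(y)(1-\tilde p)]$ and $\mathrm{RHS}=\E[\alpha(y)p+\beta(y)(1-p)]$ and compare them termwise. This is a legitimate and arguably more transparent reduction.

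The gap is in Step~3. Your split point $p^\star\approx\tau/\gamma$ is in the wrong place, and the two-regime argument built on it does not hold. Concretely, in Regime~2 you assert $\Pr[\tilde p>e^\eps p]\le\delta/4$ for all $p^\star<p\le 1/2$. But for $p$ just above $p^\star$ one has $pn\approx(\gamma+2)\ln(4/\delta)$, and the additive/multiplicative Chernoff exponent is of order $(e^\eps-1)^2 pn\approx\eps^2\cdot(\gamma+2)\ln(4/\delta)$; since $\gamma\approx 4/\eps$ for small $\eps$ and $\delta\ll 1$, this is only $\approx 2\eps\ln(4/\delta)$, giving a bound of order $(\delta/4)^{2\eps}$, which is nowhere near $\delta/4$ when $\eps$ is small. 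Likewise your claim that in Regime~2 ``$\Pr[y=\perp]$ is exponentially small'' fails for $p$ close to $\tau$, where $\Pr[\tilde p<\tau]$ is of constant order. The correct split is near $\tau$, not near $\tau/\gamma$: the paper uses three cases $p\le\tau-\Delta$, $p\ge\tau+2\Delta$, and the boundary band $p\in[\tau-\Delta,\tau+2\Delta]$ of width $\Theta(\tau/\gamma)$ around $\tau$. In that boundary band both $\Pr[y=\perp]$ and $\Pr[y\neq\perp]$ are non-negligible, and one needs a conditional (truncated) Chernoff bound --- the paper uses inequality 2.1 of Chung--Lu --- to lower-bound $\E[\tilde p\mid\tilde p<\tau]\ge(1-\delta)(\tau-2\Delta)$ and thereby control the $\{y=\perp\}$ piece. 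Your sketch has no analogue of this truncated-tail step, and it is precisely where the closed form of $\gamma$ and the awkward $(1-\delta)$ factors actually get used. A smaller imprecision is in Regime~1: you bound $\Pr[y\neq\perp]$ via $\Pr[\tilde p\ge\gamma p]\le\exp(-(\gamma-1)^2pn/(\gamma+1))$ with $\gamma$ fixed, but this tends to~$1$ as $p\to 0$; the right move is to apply the tail bound directly to the fixed threshold $\tau$ (equivalently, use $\gamma'=\tau/p$ in place of $\gamma$), after which the bound is indeed overwhelming.
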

\begin{proof}
	Fix an i.i.d.\ distribution $\DDD =\cP^n$ and let $p = \ppr{x \sim \cP}{q(x) = 1}$. We assume \wlg that $p \leq 1/2$, as the proof for the case $p \geq 1/2$ holds by symmetry.
 
 Define the random variables $\bS = (\bx_1,\ldots,\bx_n) \sim \cP^n$,  $\bT = (\bx_1',\ldots,\bx_n') \sim \cP^n$,   $\tilde{\bp} =  \frac1n\sum_{i=1}^n q(\bx_i)$ and $\by = \MMM(\bS)$ (note that $\by$ is independent of $\bT$).
	We prove the claim by showing that for every $i \in [n]$:
	\begin{align}\label{example:goal}
		\ppr{y \sim \by}{q(\bx_i)|_{\by = y}  \: \approx_{\eps,\delta/2} \:Bern(p)} \geq 1-\delta/2,
	\end{align}
	where $Bern(p)$ denotes the Bernoulli distribution that outputs $1$ w.p. $p$.
	Given \cref{example:goal} and since $q(\bx_i') \equiv Bern(p)$, we deduce that
	\begin{align}\label{example:indis}
		(\by,q(\bx_i)) \approx_{\eps,\delta} (\by,q(\bx_i')).
	\end{align}
	In order to complete the proof of the claim using (\ref{example:indis}), we now fix an adversary $\AAA\colon ([0,1]\cup \set{\perp}) \rightarrow ([n]\times \set{0,1,\perp})$ which can be described by two (randomized) functions $I,B$ that given $y$ as input, it first samples an index $i \sim I(y) \in [n]$ and then a bit $b \sim B(i,y) \in \set{0,1,\perp}$. For $i \in [n]$ we define the random function $F_i$ that given $x \in \cX$ and $y \in [0,1]$ as inputs, samples $b \sim B(i,y)$ and outputs $\indic{q(x) = b}$. In words, $F_i(x,y)$ emulates an execution of $\AAA(y)$ conditioned on $\AAA$ aiming to predict the \ith bit, and outputs $1$ iff it outputs $x$. Next, define the random variables $\bz = (\bi,\bb) = \Ac(\by)$. Now for every $i \in \Supp(\bi)$, by the definition of $F_i$, it holds that $R(\bS,\bz)|_{\bi=i} \equiv F_i(q(\bx_i),\by)$ and $R(\bT,\bz)|_{\bi=i} \equiv F_i(q(\bx_i'),\by)$. Thus, by \cref{example:indis} and post-processing we deduce that $R(\bS,\bz)|_{\bi=i} \approx_{\eps,\delta} R(\bT,\bz)|_{\bi=i}$. Since the above holds for every $i \in\Supp(\bi)$, we conclude that it also holds for $i \sim \bi$ which yields that $R(\bS,\bz) \approx_{\eps,\delta} R(\bT,\bz)$, as required. \haim{Too much notation in one paragraph, i cannot follow, maybe eliminate unnecessary notation and use words ?}\Enote{I tried to add a bit more explanations, but I'm not sure how to further simplify it.}
	
	It is left to prove \cref{example:goal}. Recall that $\tilde{\bp} = \frac1n \sum_{i=1}^n q(\bx_i)$ where $\sum_{i=1}^n q(\bx_i)$ is distributed as a binomial $B(n,p)$. A standard tail inequality for binomial distributions (\cite{ChungLu}, Lemma 2.1) implies that\haim{which binomial ? what is $\tilde{p}$ ?}\Enote{Is it clearer?}
	\begin{align}\label{eq:binom}
		\forall t>0: \quad \pr{\size{\tilde{\bp} - p} > t} \leq 2\cdot e^{-\frac{t^2 n}{2(p + t/3)}}.
	\end{align}

	%In the following, let $\Delta = \ln(4/\delta)/n$ and recall that $\tau = \frac{3e^{\eps}}{(e^{\eps} - 1)}\cdot \Delta > 3 \Delta$  (\stepref{step:tau} of \cref{alg:M}). 
	
	In the following, let $\Delta = 2\gamma \cdot \frac{\ln(4/\delta)}{n}$ and recall that $\tau =\gamma^2 \cdot \frac{\ln(4/\delta)}{n} + \Delta$ (\stepref{step:tau} of \cref{alg:M}). 
	We now split into three cases.
	
	\paragraph{The case $p \in  [0,\tau - \Delta]$.}
	By \cref{eq:binom} it holds that
	\begin{align*}
	\pr{\size{\tilde{\bp}-p} > \Delta} \leq 2\cdot e^{-\frac{\Delta^2 n}{2(p + \Delta/3)}} \leq 2\cdot e^{-\frac{4 \gamma^2 \ln^2(4/\delta)/n}{2(\gamma^2 \ln(4/\delta)/n + 2\gamma \ln(4/\delta)/(3n))}} \leq 2\cdot e^{-\ln(4/\delta)} \leq \delta/2,
	\end{align*}
    where in the penultimate inequality we used the assumption $\gamma \geq 6$. 
 \haim{maybe make the substitutions explicit so it is easier to follow without a pen and paper}\Enote{Done}
	Therefore, we deduce that
	$\pr{\by = \perp} \geq 1-\delta/2$. The latter implies that $\pr{q(\bx_i) = 1 \mid \by = \perp} \in p \pm \delta/2$ which yields that $q(\bx_i)|_{ \by = \perp} \approx_{0,\delta/2} Bern(p)$.

	\paragraph{The case $p \in [\tau + 2\Delta, \: 1/2]$.}
        Let $\lambda_p \eqdef 2\cdot \sqrt{\frac{p \cdot \ln(4/\delta)}{n}}$ and observe that $\lambda_{\tau-\Delta} = \Delta$ and $\Delta < \lambda_{\tau+\Delta} \leq 2\Delta$. In addition, since $p \geq \tau + 2\Delta$ it holds that $\lambda_p < p$. Thus \cref{eq:binom} implies that
        \begin{align}\label{eq:large-p-conc}
	\pr{\size{\tilde{\bp}-p} > \lambda_p} \leq 2\cdot e^{-\frac{\lambda_p^2 n}{2(p + \lambda_p/3)}} \leq 2\cdot e^{-\frac{4 p \cdot \ln(4/\delta)}{4p}} \leq \delta/2.
	\end{align}
        Furthermore, note that for every $\tp \in p \pm \lambda_p$ it holds that
	\begin{align}\label{eq:tau-delta-calc}
			\max\set{\frac{p}{\tp}, \frac{\tp}{p}} 
                &\leq \frac{\max\set{p,\tp}}{\min\set{p,\tp}} 
			\leq \frac{p}{p - \lambda_p}
			\leq \frac{\tau + \Delta}{\tau + \Delta - \lambda_{\tau+\Delta}}
                \leq \frac{\tau + \Delta}{\tau - \Delta}\nonumber\\
			&\leq \frac{\tau - 2\Delta}{(1-\delta)(\tau - 4\Delta)}
			= \frac{\gamma - 2}{(1-\delta)(\gamma - 6)}
			= \frac{ \frac{6 e^{\eps}(1-\delta) - 2}{(1-\delta)e^{\eps} - 1} - 2}{(1-\delta)( \frac{6 e^{\eps}(1-\delta) - 2}{(1-\delta)e^{\eps} - 1} - 6)}
			= e^{\eps},
		\end{align}
        where the second, third, and fifth inequalities hold since for every $a > b > c > 0$ we have $\frac{a}{b} \leq \frac{a-c}{b-c}$. 
	
	We deduce that for every $\tp \in p \pm \lambda_p$ it holds that $q(\bx_i)|_{\by = \tp} \approx_{\eps,0} Bern(p)$, and thus conclude that
	\begin{align*}
			\ppr{y \sim \by}{q(\bx_i)|_{\by = y}  \: \approx_{\eps,0} \:Bern(p)}
			&\geq (1 - \delta/2) \ppr{y \sim \by|_{\size{\tilde{\bp} - p} \leq \lambda_p}}{q(\bx_i)|_{\by = y}  \: \approx_{\eps,0} \:Bern(p)}\\
			&= (1 - \delta/2) \ppr{\tp \sim \tilde{\bp}|_{\size{\tilde{\bp} - p} \leq \lambda_p}}{q(\bx_i)|_{\by = \tp}  \: \approx_{\eps,0} \:Bern(p)}\\
			&= 1-\delta/2,
		\end{align*}
    where the inequality holds by \cref{eq:large-p-conc} and the first equality holds since $\size{\tilde{\bp} - p} \leq \lambda_p$ implies that $\by = \tilde{\bp}$ because the function $f(p) = p - \lambda_p$ is monotonic increasing and $f(\tau + 2\Delta) > 0$.

		\paragraph{The case $p \in [\tau - \Delta, \tau + 2\Delta]$.}

            Here we use a truncated version of Lemma 2.1 (inequality 2.1) in \cite{ChungLu}, which implies that:
            \begin{align}\label{CL:truncated}
                \forall \: 0 \leq a \leq b \leq p: \quad \pr{\tilde{\bp} < p - b \mid \tilde{\bp} < p - a} \leq 2\cdot e^{-\frac{(b-a)^2 n}{2p}}.
            \end{align}
            If $p \in [\tau, \tau + 2\Delta]$ we deduce by \cref{CL:truncated} that
            \begin{align*}
                \pr{\tilde{\bp}  < \tau - 2\Delta \mid \tilde{\bp} < \tau} \leq 2\cdot e^{-\frac{4\Delta^2 n}{2 p}} \leq e^{-\frac{4\Delta^2 n}{2 (\tau + 2\Delta)}}= 2\cdot e^{-\frac{16\gamma^2 \ln^2(4/\delta)}{2 (\gamma^2 + 5\gamma) \ln(4/\delta)}} \leq \delta.
            \end{align*}
            If $p \in [\tau - \Delta, \tau]$ then by \cref{CL:truncated} we have
            \begin{align*}
                \pr{\tilde{\bp}  < \tau - 2\Delta \mid \tilde{\bp} < \tau} = \frac{\pr{\tilde{\bp}  < \tau - 2\Delta}}{\pr{\tilde{\bp}  < \tau}} \leq 2\cdot \pr{\tilde{\bp}  < \tau - 2\Delta} \leq 2\cdot e^{-\frac{\Delta^2 n}{2\tau}} = 2\cdot e^{-\frac{4\gamma^2 \ln^2(4/\delta)}{2(\gamma^2+\gamma) \ln(4/\delta)}} \leq \delta.
            \end{align*}
  
		Overall, in both cases, we conclude that
		\begin{align*}
				\pr{q(\bx_i)=1 \mid \by = \perp} = \ex{\tilde{\bp} \mid \tilde{\bp} < \tau} \geq (1-\delta) (\tau - 2\Delta).
			\end{align*}
		This implies that $q(\bx_i)|_{\by = \perp}  \: \approx_{\eps,0} \:Bern(p)$ since
		\begin{align*}
				\frac{p}{\pr{q(\bx_i)=1 \mid \by = \perp}}
				\leq \frac{p}{(1-\delta) (\tau - 2\Delta)}
				\leq  \frac{\tau-\Delta}{(1-\delta) (\tau - 2\Delta)}
                    \leq  \frac{\tau-2\Delta}{(1-\delta) (\tau - 4\Delta)}
				    = e^{\eps}.
			\end{align*}
		where the last inequality holds by \cref{eq:tau-delta-calc}.

            Furthermore, for every $\tp \in p \pm 2\Delta$ we have that $q(\bx_i)|_{\by = \tp}  \: \approx_{\eps,0} \:Bern(p)$ since
            \begin{align*}
                \max\set{\frac{p}{\tp}, \frac{\tp}{p}} 
                \leq \frac{\max\set{p,\tp}}{\min\set{p,\tp}} 
			\leq \frac{p}{p - 2\Delta}
                \leq \frac{\tau - 2\Delta}{\tau - 4\Delta}
                \leq e^{\eps}
            \end{align*}
            where the second and third inequalities hold since for every $a > b > c > 0$ we have $\frac{a}{b} \leq \frac{a-c}{b-c}$, and the last one holds by \cref{eq:tau-delta-calc}.

            We thus conclude that
            \begin{align*}
            \ppr{y \sim \by}{q(\bx_i)|_{\by = y}  \: \approx_{\eps,0} \:Bern(p)}
            &\geq  \pr{\size{p - \tilde{\bp}}\leq 2\Delta}\cdot \ppr{y \sim \by|_{\size{p - \tilde{\bp}}\leq 2\Delta}}{q(\bx_i)|_{\by = y}  \: \approx_{\eps,0} \:Bern(p)}\\
            &= \pr{\size{p - \tilde{\bp}}\leq 2\Delta}\cdot 1\\
            &\leq 2\cdot e^{-\frac{4\Delta^2 n}{2(p + 2\Delta/3)}}
            \leq 2\cdot e^{-\frac{4\Delta^2 n}{2(\tau + 2\Delta + 2\Delta/3)}}
            \leq 2\cdot e^{-\frac{4\gamma^2 \ln^2(4/\delta)}{2(\gamma^2 + 6\gamma) \ln(4/\delta)}}
            \leq \delta/2,
            \end{align*}
            as required. The second inequality holds by \cref{eq:binom}, the third one holds since $p \leq \tau + 2\Delta$, and the last one holds since $\gamma \geq 6$.

\end{proof}

\section{Differential privacy as Narcissus resiliency}\label{sec:DP-narc}

Differential privacy is a mathematical framework for ensuring that algorithms that analyze data protect the privacy of individual-level information within their input datasets. Informally, an algorithm  is said to be {\em differentially private} if adding or removing the data of one individual from the input dataset has almost no effect on the behaviour of the algorithm. Therefore, the risk incurred by participation in the data is low. Formally:

\begin{definition}[Differential Privacy \cite{DMNS06}]
Algorithm $\MMM:\XXX^n\rightarrow Y$ is {\em $(\eps,\delta)$-differentially private} if for all $S,S'\in \XXX^n$ that differ on a single entry (such datasets are called {\em neighboring}) and for every subset $T\subseteq Y$ it holds that
$$\underset{y\leftarrow \MMM(S)}{\Pr}[y\in T] \leq e^\eps \cdot \underset{y\leftarrow \MMM(S')}{\Pr}[y\in T] + \delta.$$
\end{definition}

In Appendix~\ref{appendix:dp}, we demonstrate that differential privacy can be expressed within the framework of Narcissus resiliency (as defined in Definition~\ref{def:narc}); however, this translation does incur some loss in terms of the resulting parameters.
To overcome this, we introduce a slight variant of Narcissus resiliency that differs from Definition~\ref{def:narc} in that the family of distributions $\FFF$ is of distributions over {\em pairs} of datasets: \haim{As I asked before, do we need the family $\FFF$}\Unote{This is more general. It allows you to talk about algorithms which are secure not just w.r.t.\ a single data distribution. E.g., secure w.r.t.\ every iid distribution.}
\begin{definition}[Narcissus resiliency, variant]\label{def:narcDP}
Let $\XXX$ be a data domain, let $\FFF\subseteq\Delta(\XXX^n\times\XXX^n)$ be a family of distributions over pairs of datasets containing elements from $\XXX$, and let $R:\XXX^*\times\{0,1\}^*\rightarrow\{0,1\}$. 
Algorithm $\MMM$ is $(\eps,\delta,\FFF)$-$R$-Narcissus-resilient if for all $\DDD\in\FFF$ and for all attackers $\AAA$ it holds that
$$
\underset{\substack{(S_0,S_1)\leftarrow\DDD\\b\leftarrow\{0,1\}\\y\leftarrow\MMM(S_b)\\z\leftarrow\AAA(y)}}{\Pr}[R(S_b,z)=1]\leq e^{\eps}\cdot \underset{\substack{(S_0,S_1)\leftarrow\DDD\\b\leftarrow\{0,1\}\\y\leftarrow\MMM(S_b)\\z\leftarrow\AAA(y)}}{\Pr}[R(S_{1-b},z)=1]+\delta.
$$
\end{definition}

\haim{This entire discussion seems rather technical and not so interesting.. in particular we do not say what is the "loss" that we try to overcome with the additional complication}\Unote{When presenting a new concept, you never really know if it is the right concept or not. The point with all these examples is to increase our confidence that it is right: If it captures as a special case many unrelated notions, then to me this increases my confidence in it.}

This allows us to capture the concept of differential privacy without losing (almost) anything in terms of the resulting parameters. Specifically,

\begin{definition}\label{def:DP-pairs}
For a pair of datasets $S,T$ we write $\DDD_{S,T}$ to denote the constant distribution that returns the pair $(S,T)$ with probability 1.
\end{definition}

\begin{theorem}
\label{thm:DPnarc} 
Let $\MMM:\XXX^n\rightarrow Y$ be an algorithm and let $\FFF=\{\DDD_{S,T} : S,T\in\XXX^n \text{ are neighboring datasets}\}$. Then,
\begin{enumerate}
    \item[(1)] If $\MMM$ is $(\eps,\delta)$-differentially private then $\MMM$ is $(\eps,\delta,\FFF)$-$R$-Narcissus-resilient for all $R$.
    \item[(2)]  If $\MMM$ is not $(\eps,\delta)$-differentially private then there exists $R$ such that $\MMM$ is not $(\eps,\frac{\delta}{2},\FFF)$-$R$-Narcissus-resilient for $R$.
\end{enumerate}
\end{theorem}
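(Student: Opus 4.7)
I will reduce to the post-processing property of differential privacy. Fix an arbitrary relation $R$, a neighboring pair $(S,T) \in \XXX^n \times \XXX^n$ (so that $\DDD_{S,T} \in \FFF$), and an attacker $\AAA$. Unfolding the uniform choice of $b \in \{0,1\}$ in Definition~\ref{def:narcDP}, the Narcissus inequality I want to verify is
$$
\tfrac{1}{2} \Pr_{y \leftarrow \MMM(S)}[R(S,\AAA(y))=1] + \tfrac{1}{2} \Pr_{y \leftarrow \MMM(T)}[R(T,\AAA(y))=1]
\;\leq\; e^{\eps}\!\cdot\! \Big( \tfrac{1}{2} \Pr_{y \leftarrow \MMM(S)}[R(T,\AAA(y))=1] + \tfrac{1}{2} \Pr_{y \leftarrow \MMM(T)}[R(S,\AAA(y))=1] \Big) + \delta.
$$
For each fixed $U \in \{S,T\}$, the (randomized) predicate $y \mapsto R(U,\AAA(y))$ is a post-processing of $\MMM$'s output. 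Applying post-processing of $(\eps,\delta)$-DP once with $U=S$ and once with $U=T$ gives $\Pr_{y \leftarrow \MMM(S)}[R(S,\AAA(y))=1] \leq e^\eps \Pr_{y \leftarrow \MMM(T)}[R(S,\AAA(y))=1] + \delta$ and $\Pr_{y \leftarrow \MMM(T)}[R(T,\AAA(y))=1] \leq e^\eps \Pr_{y \leftarrow \MMM(S)}[R(T,\AAA(y))=1] + \delta$. Summing the two inequalities and dividing by $2$ yields the displayed inequality exactly.

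\textbf{Plan for part (2).} I will use the DP failure to exhibit a witnessing $R$. By assumption, there exist neighboring datasets $S,T \in \XXX^n$ and an event $E \subseteq Y$ with $\Pr[\MMM(S) \in E] > e^\eps \Pr[\MMM(T) \in E] + \delta$. I define the relation $R : \XXX^n \times \{0,1\}^* \to \{0,1\}$ by $R(U,z) := 1$ iff $U = S$ and $z \in E$, and I take $\AAA$ to be the identity attacker $\AAA(y)=y$. Instantiating Definition~\ref{def:narcDP} with $\DDD_{S,T}$, this $R$, and this $\AAA$, the left-hand side collapses to $\tfrac{1}{2}\Pr[\MMM(S)\in E]$ (the term with dataset $T$ vanishes because $R(T,\cdot) \equiv 0$), and the baseline on the right collapses to $\tfrac{1}{2}\Pr[\MMM(T)\in E]$ (by the same reason, only the cross-term $R(S,\MMM(T))$ survives). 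Hence $(\eps,\delta/2,\FFF)$-$R$-Narcissus-resilience for this $R$ would read $\tfrac{1}{2}\Pr[\MMM(S)\in E] \leq e^\eps \cdot \tfrac{1}{2}\Pr[\MMM(T)\in E] + \delta/2$, which after multiplying by $2$ is precisely the DP inequality violated by our choice of $(S,T,E)$. Contradiction.

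\textbf{Expected difficulty.} Both directions are essentially bookkeeping: the forward direction is a symmetric application of post-processing DP combined with averaging, and the reverse direction is a one-line reduction once one chooses $R$ to ``lock on'' to the identity of the distinguishing dataset $S$. The only point needing mild care is that the averaging over $b$ produces the factor of $\tfrac{1}{2}$ on both sides, so that the slack parameter $\delta$ in DP translates into slack $\delta/2$ in Narcissus-resiliency, which is why the second bullet of the theorem states $\delta/2$ rather than $\delta$. I do not anticipate any genuine obstacle.
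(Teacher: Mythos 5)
Your proof is correct and takes essentially the same route as the paper's. For part (1), your explicit unfolding of the expectation over $b$ followed by two applications of post-processing DP and an averaging step is equivalent to the paper's one-line application of DP inside the full probability followed by the relabeling $b\mapsto 1-b$; for part (2), the witnessing relation $R$ ("lock on to $S_0$ and membership in $E$"), the identity attacker, and the resulting $\tfrac{1}{2}$-collapse of both sides to $\tfrac{1}{2}\Pr[\MMM(S)\in E]$ and $\tfrac{1}{2}\Pr[\MMM(T)\in E]$ coincide with the paper's construction and calculation.
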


Note that the above theorem refers to the variant of Narcissus resiliency as in Definition~\ref{def:narcDP}.

\begin{proof}
Suppose that $\MMM$ is $(\eps,\delta)$-DP. Fix a distribution $\DDD=\DDD_{S_0,S_1}\in\FFF$, let $R$ be a relation, and let $\AAA$ be an attacker. We calculate,  \haim{why does the equality below hold ?}\Unote{Just replace $b$ with $1-b$, which changes nothing...}
\begin{align*}
\underset{\substack{(S_0,S_1)\leftarrow\DDD\\b\leftarrow\{0,1\}\\y\leftarrow\MMM(S_b)\\z\leftarrow\AAA(y)}}{\Pr}[R(S_b,z)=1]&
%%%%%%
\leq
e^{\eps}\cdot
\underset{\substack{(S_0,S_1)\leftarrow\DDD\\b\leftarrow\{0,1\}\\y\leftarrow\MMM(S_{1-b})\\z\leftarrow\AAA(y)}}{\Pr}[R(S_b,z)=1]+\delta
%%%%%%
=
e^{\eps}\cdot
\underset{\substack{(S_0,S_1)\leftarrow\DDD\\b\leftarrow\{0,1\}\\y\leftarrow\MMM(S_b)\\z\leftarrow\AAA(y)}}{\Pr}[R(S_{1-b},z)=1]+\delta
\end{align*}

This concludes the analysis for the first direction. Now suppose that $\MMM$ is not $(\eps,\delta)$-DP. There exist neighboring datasets $S_0,S_1$ and event $B$ such 
that 
$$
\Pr[\MMM(S_0)\in B]>e^{\eps}\cdot\Pr[\MMM(S_1)\in B]+\delta.
$$

We need to show that there is a relation $R$, a distribution in $\FFF$, and an adversary $\AAA$ for which the condition of Narcissus Resiliency does not hold. We let $\AAA$ be the identify function and let $\DDD=\DDD_{S_0,S_1}$. Now let $R$ be such that $R(S_b,y)=1$ if and only if $S_b=S_0$ and $y\in B$. With these notations, to contradict Narcissus Resiliency, we need to show that 
$$\underset{\substack{(S_0,S_1)\leftarrow\DDD\\b\leftarrow\{0,1\}\\y\leftarrow\MMM(S_b)\\y\leftarrow\AAA(y)}}{\Pr}[R(S_b,y)=1]
\gg
\underset{\substack{(S_0,S_1)\leftarrow\DDD\\b\leftarrow\{0,1\}\\y\leftarrow\MMM(S_b)\\y\leftarrow\AAA(y)}}{\Pr}[R(S_{1-b},y)=1].$$
We now calculate each of these two probabilities:
\begin{align*}
\underset{\substack{(S_0,S_1)\leftarrow\DDD\\b\leftarrow\{0,1\}\\y\leftarrow\MMM(S_b)\\y\leftarrow\AAA(y)}}{\Pr}[R(S_b,y)=1] & = \frac{1}{2}\cdot\left(\underset{\substack{y\leftarrow\MMM(S_0)\\y\leftarrow\AAA(y)}}{\Pr}[R(S_0,y)=1]+\underset{\substack{y\leftarrow\MMM(S_1)\\y\leftarrow\AAA(y)}}{\Pr}[R(S_1,y)=1]\right)\\
%%%%%%%%
& = \frac{1}{2}\cdot\left(
\Pr[\MMM(S_0)\in B]
+0\right)
%%%
=\frac{1}{2}\cdot\Pr[\MMM(S_0)\in B].
\end{align*}

On the other hand,

\begin{align*}
\underset{\substack{(S_0,S_1)\leftarrow\DDD\\b\leftarrow\{0,1\}\\y\leftarrow\MMM(S_b)\\y\leftarrow\AAA(y)}}{\Pr}[R(S_{1-b},y)=1]
%&&&&&&&&
&=\Pr[b=0]\cdot\underset{\substack{y\leftarrow\MMM(S_0)\\y\leftarrow\AAA(y)}}{\Pr}[R(S_1,y)=1]
+
\Pr[b=1]\cdot\underset{\substack{y\leftarrow\MMM(S_1)\\y\leftarrow\AAA(y)}}{\Pr}[R(S_0,y)=1]\\
%%%%%%
&=\frac{1}{2}\cdot0
+\frac{1}{2}\cdot \Pr[\MMM(S_1)\in B]\\
&<\frac{1}{2}\left(e^{-\eps}\cdot \Pr[\MMM(S_0)\in B]-\delta\cdot e^{-\eps}\right)\\
%
%
%%%%%%%%%
&=e^{-\eps}\left(\underset{\substack{(S_0,S_1)\leftarrow\DDD\\b\leftarrow\{0,1\}\\y\leftarrow\MMM(S_b)\\y\leftarrow\AAA(y)}}{\Pr}[R(S_b,y)=1]-\frac{\delta}{2}\right).
\end{align*}
\end{proof}
%

%%%%%%%%%%%%%%%%%%%%%%%%%%%%%%%%%%%%%%%%%%%%%%%%%%%%%%%%%%%%%%%%%%%%%%%%%%%%%%%%%%%%%%%%%%%%%%%%%%%%%%%%%%%%%%%%%%%%%%%%%%%%%%%%%%%%%%%%%%%%%%%%%%%%%%%%%%%%%%%%%%%%%%%%%%%%%%%%%%%%%%%%%%%%%%%%%%%%%%%%
\section{Computational Narcissus resiliency}\label{sec:comp-narc}
Towards capturing computational concepts such as the security of encryption schemes and one-way functions, we introduce the following computational variant of Narcissus resiliency:

\begin{definition}[Computational Narcissus resiliency]\label{def:compNarc}
Let $\lambda$ be a security parameter, let $\eps>0$, let $\delta:\N \times \N \rightarrow [0,1]$ be a function, and let $n=\poly(\lambda)$.  
Let $\FFF_\lambda \subseteq\Delta((\{0,1\}^*)^{n}\times(\{0,1\}^*)^{n})$ be an ensemble of families of distributions over pairs of datasets,
and let $R:(\{0,1\}^*)^*\times\{0,1\}^*\rightarrow\{0,1\}$ be a relation.  
Algorithm $\MMM$ is $(\eps,\delta,\FFF)$-$R$-computational-Narcissus-resilient if for all $\DDD\in\FFF$ and for all (non-uniform) attackers $\AAA$ running in $\poly(\lambda)$ time it holds that
$$
\underset{\substack{(S_0,S_1)\leftarrow\DDD\\b\leftarrow\{0,1\}\\y\leftarrow\MMM(1^\lambda,S_b)\\z\leftarrow\AAA(y)}}{\Pr}[R(S_b,z)=1]\leq e^{\eps}\cdot \underset{\substack{(S_0,S_1)\leftarrow\DDD\\b\leftarrow\{0,1\}\\y\leftarrow\MMM(1^\lambda,S_b)\\z\leftarrow\AAA(y)}}{\Pr}[R(S_{1-b},z)=1]+\delta(n,\lambda)+\negl(\lambda).
$$
\end{definition}

\begin{remark}
We are generally interested in Definition~\ref{def:compNarc} only under the restrictions that $\MMM$ and $R$ are computationally efficient, and every distribution in $\FFF$ can be sampled in $\poly(\lambda)$ time.
\end{remark}

\subsection{One-wayness as computational Narcissus resiliency}

A {\em one-way function} is a function $f$ that is easy to compute but computationally difficult to invert.  This means that given an input $x$, it is easy to calculate the corresponding output $f(x)$, but given an output $y$ it is extremely hard to find an input $x$ such that $f(x)=y$. Such functions are a foundational concept in cryptography, which are used in various applications like hashing, pseudorandom generators, and digital signatures. Formally,

\begin{definition}[\cite{diffie1976new,yao1982theory}]
Let 
$f:\{0,1\}^*\rightarrow\{0,1\}^*$ be an efficiently computable function. Let $\lambda\in\N$ be a parameter and let $\UUU_{\lambda}$ be the uniform distribution over $\{0,1\}^{\lambda}$. The function $f$ is {\em one-way} if for every (non-uniform) adversary $\AAA$ running in $\poly(\lambda)$ time it holds that
$$
\underset{\substack{y\leftarrow f(\UUU_\lambda) \\ z\leftarrow\AAA(y)}}{\Pr}[f(z)=y] \leq \negl(\lambda).
$$
\end{definition}

We show that this can be captured by the paradigm of Narcissus resiliency. For technical reasons, we will only show this for functions $f$ which are {\em well spread}, which intuitively means that the outcome distribution of $f(x)$ on a random $x$ is ``close to uniform''. Formally,

\begin{definition}
Let $f:\{0,1\}^*\rightarrow\{0,1\}^*$. 
We say that $f$ is $\nu(\lambda)$-well-spread if for every $\lambda\in\N$ and every $y\in\{0,1\}^*$ it holds that
$\Pr_{x\sim\{0,1\}^\lambda}[f(x)=y]\leq\nu(\lambda).$ 
If $f$ is $\nu(\lambda)$-well-spread for a negligible $\nu(\lambda)$ we say that $f$ is well-spread.
\end{definition}

Note that every 1-1 function is well-spread. We prove the following theorem.

\begin{theorem}
\label{thm:OnewayNarc} Let 
$f:\{0,1\}^*\rightarrow\{0,1\}^*$ be an efficiently computable well-spread function. Let $\MMM_f$ be the algorithm that takes an input $x$ and returns $f(x)$. 
Let $R:\{0,1\}^*\times\{0,1\}^*\rightarrow\{0,1\}$ be s.t.\ $R(x,x')=1$ iff $f(x)=f(x')$. 
Let $\lambda\in\N$ be a parameter and let $\UUU_{\lambda}$ be the uniform distribution over $\{0,1\}^{\lambda}$.
Then, $f$ is a one-way function if and only if $\MMM_f$ is $(0,\negl(\lambda),\{(\UUU_{\lambda},\UUU_{\lambda})\})$-$R$-computationally-Narcissus-resilient. 
%\haim{We say $\negl(n,\lambda)$, maybe it should be $\UUU_{\lambda}^n$ ? or do we want that database to consist of a single item ? because otherwise the type of $R$ is also inconsistent with def C.1. Its confusing, I suggest to add some words that say what is happening}
\end{theorem}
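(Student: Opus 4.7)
My plan is to unfold the Narcissus-resiliency condition for this specific setup and observe that the ``baseline'' term $\Pr[R(S_{1-b},z)=1]$ corresponds to a fresh uniform sample independent of the adversary's view. Concretely, since $S_0$ and $S_1$ are independent uniform samples and the adversary only sees $y=f(S_b)$, the string $S_{1-b}$ is distributed as $\UUU_\lambda$ independently of $(y,b,z)$. Hence
$$
\underset{\substack{S_0,S_1\leftarrow\UUU_\lambda\\b\leftarrow\{0,1\}\\y\leftarrow f(S_b)\\z\leftarrow\AAA(y)}}{\Pr}[R(S_{1-b},z)=1]
\;=\;
\underset{z\leftarrow\AAA(y)}{\E}\Bigl[\,\Pr_{x\sim\UUU_\lambda}[f(x)=f(z)]\,\Bigr]
\;\le\;\nu(\lambda),
$$
where the last inequality uses that $f$ is well-spread. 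Similarly, the ``attack'' term $\Pr[R(S_b,z)=1]=\Pr[f(z)=y]$ is exactly the inversion probability of $\AAA$ on a uniformly random image of $f$.

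\paragraph{Forward direction (one-way $\Rightarrow$ resilient).}
Assume $f$ is one-way. Fix any efficient non-uniform $\AAA$. By one-wayness, the attack term is $\Pr[f(z)=y]\le\negl(\lambda)$, and by the computation above the baseline term is nonnegative. Therefore
$$
\Pr[R(S_b,z)=1]\;\le\;\negl(\lambda)\;\le\;e^{0}\cdot \Pr[R(S_{1-b},z)=1]+\negl(\lambda),
$$
which is exactly the $(0,\negl(\lambda),\{(\UUU_\lambda,\UUU_\lambda)\})$-$R$-computational-Narcissus-resiliency condition.

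\paragraph{Reverse direction (resilient $\Rightarrow$ one-way).}
I will prove the contrapositive. Suppose $f$ is not one-way, so there is an efficient $\AAA$ and a non-negligible $\epsilon(\lambda)$ with $\Pr_{x\leftarrow\UUU_\lambda,z\leftarrow\AAA(f(x))}[f(z)=f(x)]\ge\epsilon(\lambda)$. Then the same $\AAA$, viewed as a Narcissus adversary, satisfies $\Pr[R(S_b,z)=1]\ge\epsilon(\lambda)$, while the baseline term is at most $\nu(\lambda)=\negl(\lambda)$ by well-spreadness. Hence
$$
\Pr[R(S_b,z)=1]-\Pr[R(S_{1-b},z)=1]\;\ge\;\epsilon(\lambda)-\negl(\lambda),
$$
which is non-negligible and therefore violates $(0,\negl(\lambda),\{(\UUU_\lambda,\UUU_\lambda)\})$-$R$-computational-Narcissus-resiliency.

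\paragraph{Where the subtlety lies.}
The proof is short, but the well-spread hypothesis is doing real work: without it, the baseline $\Pr_{x\sim\UUU_\lambda}[f(x)=f(z)]$ could be large for ``popular'' outputs $f(z)$, and a Narcissus adversary could match the baseline by simply guessing such a popular preimage without inverting a fresh challenge. Well-spreadness forces the baseline to be negligible uniformly in $z$, so any non-negligible success of the Narcissus adversary really translates into a non-negligible inversion advantage, and conversely one-wayness suffices to drive the attack term below the baseline. No harder technical step is required.
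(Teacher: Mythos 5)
Your proof is correct and follows essentially the same approach as the paper: both identify that the attack term equals the standard inversion probability, that the baseline term is bounded by $\nu(\lambda)$ via well-spreadness, and then apply these two facts in each direction. The only superficial difference is that you phrase the reverse direction as a contrapositive, whereas the paper argues directly that Narcissus-resiliency bounds the inversion probability by a negligible quantity; these are the same argument.
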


\begin{proof}
First, since $\DDD=(\UUU_{\lambda},\UUU_{\lambda})$, note that for any function $f$ it holds that
\begin{equation}\label{eq:oneway}
\underset{\substack{(S_0,S_1)\leftarrow\DDD\\b\leftarrow\{0,1\}\\y\leftarrow\MMM_f(S_b)\\z\leftarrow\AAA(y)}}{\Pr}[R(S_b,z)=1] 
= \underset{\substack{y\leftarrow f(\UUU_\lambda) \\ z\leftarrow\AAA(y)}}{\Pr}[f(z)=y].    
\end{equation}
In addition, 
\begin{align}
\underset{\substack{(S_0,S_1)\leftarrow\DDD\\b\leftarrow\{0,1\}\\y\leftarrow\MMM_f(S_b)\\z\leftarrow\AAA(y)}}{\Pr}[R(S_{1-b},z)=1]
%%%%%
&=\underset{\substack{(S_0,S_1)\leftarrow\DDD\\y\leftarrow\MMM_f(S_0)\\z\leftarrow\AAA(y)}}{\Pr}[R(S_{1},z)=1]\nonumber\\
%%%%%
&=
\underset{\substack{S_0\leftarrow\UUU_{\lambda}\\y\leftarrow\MMM_f(S_0)\\z\leftarrow\AAA(y)}}{\E}
\underset{\substack{S_1\leftarrow\UUU_{\lambda}}}{\E}
[R(S_{1},z)]\nonumber\\
%%%%%%%%%%%%
&=
\underset{\substack{S_0\leftarrow\UUU_{\lambda}\\y\leftarrow\MMM_f(S_0)\\z\leftarrow\AAA(y)}}{\E}
\underset{\substack{S_1\leftarrow\UUU_{\lambda}}}{\Pr}
[f(S_1)=f(z)]\nonumber\\
%%%%%%%%%%%%
&\leq\nu(\lambda),\label{eq:oneway2}
\end{align}
where the last inequality follows as $f$ is $\nu$-well-spread. 
Now observe that if $f$ is one-way then the probability stated in (\ref{eq:oneway}) is negligible, and hence the condition of computational-Narcissus-resiliency holds. In the other direction, if $\MMM_f$ is computational-Narcissus-resilient, then the probability stated in (\ref{eq:oneway}) is upper bounded by the expression stated in (\ref{eq:oneway2}), which proves that $f$ is a one-way function.
\end{proof}

\subsection{Ciphertext Indistinguishability (symmetric key) as Narcissus resiliency}

Private-key encryption, also known as symmetric encryption, is a cryptographic method where the same secret key is used for both encrypting and decrypting information. Formally,

\begin{definition}[private-key encryption scheme; adapted from \cite{KatzLindell2014}]
A {\em private-key encryption scheme} is a tuple of probabilistic polynomial-time algorithms $(\Gen, \Enc, \Dec)$ such that:
\begin{enumerate}
\item  The {\em key-generation algorithm} $\Gen$ takes as input $1^{\lambda}$ and outputs a key $k$. 

\item The {\em encryption algorithm} $\Enc$ takes as input a key $k$ and a plaintext
message $m\in\{0, 1\}^\lambda$, and outputs a ciphertext $c$.

\item  The {\em decryption algorithm} $\Dec$ takes as input a key $k$ and a ciphertext $c$, and outputs a message $m$ or an error symbol $\bot$. 
\end{enumerate}
It is required that for every $\lambda$, every key $k$ output by $\Gen(1^\lambda)$, and every $m \in\{0, 1\}^\lambda$, it holds that $\Dec_k(\Enc_k(m)) = m$.
\end{definition}

Informally, the security requirement is that ciphertexts generated by $\Enc$ are ``unreadable'' without the key. There are several ways to formulate this (capturing different type of attackers). The following is perhaps the most basic security definition.

\begin{definition}[indistinguishable encryptions -- private key; adapted from \cite{KatzLindell2014}]\label{def:EAV-security}
A private-key encryption scheme $\Pi=(\Gen,\Enc,\Dec)$ has {\em indistinguishable encryptions in the presence of an eavesdropper}, or is {\em EAV-secure}, if for 
every polynomial non-uniform adversary $\AAA$ there is a negligible function $\negl=\negl(\lambda)$ such that for every $\lambda$ and for every pair of messages $m_0,m_1\in\{0,1\}^\lambda$ we have that
$$
\underset{\substack{
k\leftarrow\Gen(1^{\lambda})\\
b\in_R\{0,1\}\\
c\leftarrow\Enc_k(m_b)\\
\hat{b}\leftarrow\AAA(c)
}}{\Pr}[\hat{b}=b] \leq \frac{1}{2} + \negl(\lambda).
$$
\end{definition}

We show that EAV-security can be stated via the  paradigm of Narcissus resiliency:

\begin{theorem}
\label{thm:EncNarc} 
Let $\Pi=(\Gen,\Enc,\Dec)$ be a private-key encryption scheme. Let $\MMM_{\Pi}$ be the algorithm that takes a security parameter $\lambda$ and a message $(b\circ m)\in\{0,1\}\times\{0,1\}^{\lambda}$, samples a key $k\leftarrow\Gen(1^\lambda)$, encrypts $c\leftarrow\Enc_k(m)$ and returns $c$. Also let $R$ be the relation that takes two strings and returns 1 if and only if they agree on their first bit. 
For $m_0,m_1\in\{0,1\}^\lambda$ let $\DDD_{m_0,m_1}$  be the (constant) distribution that returns the pair $(0 m_0,1 m_1)$ with probability 1.
Let $\FFF=\{\FFF_{\lambda}\}$ be the ensemble of families of distributions  $\FFF_{\lambda}=\{\DDD_{m_0,m_1}\, :\, m_0,m_1\in\{0,1\}^\lambda\}$. Then, 
$\Pi$ is EAV-secure if and only if $\MMM_{\Pi}$ is $(0,0,\FFF)$-$R$-computational-Narcissus-resilient.
\end{theorem}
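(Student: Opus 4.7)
\medskip

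\noindent\textbf{Proof plan for Theorem~\ref{thm:EncNarc}.}

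The core observation is that for the specific $R$ and $\FFF$ in the statement, Narcissus resiliency with parameters $(0,0)$ reduces to a bit-guessing game that is essentially equivalent to the EAV game. Fix $\DDD_{m_0,m_1}\in\FFF_\lambda$ and a (non-uniform) polynomial-time attacker $\AAA$. Since $S_0 = 0m_0$ and $S_1 = 1 m_1$, the first bit of $S_b$ is exactly $b$, and $\MMM_\Pi(1^\lambda,S_b)$ produces $c\leftarrow\Enc_k(m_b)$ for $k\leftarrow\Gen(1^\lambda)$ (the leading bit $b$ is stripped before encryption). Hence, writing $z\leftarrow\AAA(c)$ and letting $z_1$ denote the first bit of $z$ (when $z$ is non-empty),
\[
P_{\rm correct} \eqdef \Pr[R(S_b,z)=1] = \Pr[z \text{ non-empty and } z_1 = b], \qquad P_{\rm wrong} \eqdef \Pr[R(S_{1-b},z)=1] = \Pr[z \text{ non-empty and } z_1 = 1-b].
\]
Thus the $(0,0,\FFF)$-Narcissus condition is exactly $P_{\rm correct} \leq P_{\rm wrong} + \negl(\lambda)$ for every such $(m_0,m_1)$ and $\AAA$.

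For the ``only if'' direction (Narcissus-resilient $\Rightarrow$ EAV-secure), I would take any EAV adversary $\AAA'$ that outputs a bit $\hat b$ on input $c$, and define a Narcissus adversary $\AAA$ that on input $c$ simulates $\AAA'(c)$ and outputs the single-bit string $\hat b$. Then $z$ is always non-empty and $z_1=\hat b$, so $P_{\rm correct} = \Pr[\hat b = b]$ and $P_{\rm wrong} = 1 - P_{\rm correct}$. Plugging into the Narcissus inequality gives $\Pr[\hat b = b] \leq \tfrac{1}{2} + \tfrac{1}{2}\negl(\lambda)$ for every pair $(m_0,m_1)$, which is the EAV-security condition.

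For the ``if'' direction (EAV-secure $\Rightarrow$ Narcissus-resilient), I would go the other way: given any polynomial Narcissus adversary $\AAA$, construct an EAV adversary $\AAA''$ that on input $c$ runs $z\leftarrow\AAA(c)$ and outputs $z_1$ when $z$ is non-empty and an independent uniform bit otherwise. Setting $p\eqdef\Pr[z\text{ non-empty}]$ and $\alpha\eqdef\Pr[z_1=b\mid z\text{ non-empty}]$, a direct calculation gives
\[
\Pr[\AAA''(c)=b] \;=\; p\alpha + \tfrac{1}{2}(1-p) \;=\; \tfrac{1}{2} + p(\alpha - \tfrac{1}{2}),
\]
while $P_{\rm correct} - P_{\rm wrong} = p\alpha - p(1-\alpha) = 2p(\alpha - \tfrac{1}{2})$. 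EAV-security forces $p(\alpha - \tfrac{1}{2}) \leq \negl(\lambda)$, and thus $P_{\rm correct} - P_{\rm wrong} \leq 2\,\negl(\lambda)$, giving the Narcissus condition (with a negligible slack, absorbed in the $\negl(\lambda)$ term of Definition~\ref{def:compNarc}).

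I expect no serious obstacle; the main subtlety is that the Narcissus attacker's output $z$ is an arbitrary string in $\{0,1\}^*$ rather than a bit, so I must handle the empty-output case cleanly. The ``random bit on empty'' simulation above absorbs this into the $p(\alpha-\tfrac{1}{2})$ term. A second small point is checking that the quantifier structure matches: Narcissus resiliency requires the inequality for every $\DDD\in\FFF_\lambda$, i.e., every $(m_0,m_1)\in\{0,1\}^\lambda\times\{0,1\}^\lambda$, which is exactly the quantification in Definition~\ref{def:EAV-security}. Efficiency is preserved under both reductions since the simulators only add constant overhead.
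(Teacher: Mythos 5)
Your proof is correct and takes essentially the same route as the paper's: both reduce the Narcissus condition to the EAV bit-guessing game by observing that $R(S_b,\cdot)$ and $R(S_{1-b},\cdot)$ just check whether the first output bit equals $b$ or $1-b$. The one genuine (if small) difference is that you explicitly handle the corner case where the Narcissus adversary's output $z$ may be empty: the paper writes $\hat b \leftarrow \AAA(c)$ and uses the identity $\Pr[R(bm_b,\hat b)=1] = 1 - \Pr[R((1-b)m_{1-b},\hat b)=1]$, which silently assumes $z$ is always non-empty (on an empty output both $R$-values are $0$, breaking the identity). Your ``output a uniform bit when $z$ is empty'' reduction, with the calculation $\Pr[\AAA''(c)=b] = \tfrac12 + p(\alpha - \tfrac12)$ and $P_{\rm correct} - P_{\rm wrong} = 2p(\alpha - \tfrac12)$, closes this gap cleanly, so your write-up is in fact slightly tighter than the paper's.
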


\begin{proof}
Fix an adversary $\AAA$, a security parameter $\lambda$, and a pair of messages $m_0,m_1\in\{0,1\}^{\lambda}$. %Let $m:\{0,1\}\rightarrow\{m_0,m_1\}$ be such that $m(b)=m_b$. 
First note that
\begin{align*}
\underset{\substack{(0m_0,1m_1)\leftarrow\DDD\\b\leftarrow\{0,1\}\\c\leftarrow\MMM_{\Pi}(1^\lambda,b m_b)\\ \hat{b}\leftarrow \AAA(c)}}{\Pr}[R(b m_b,\hat{b})=1]
=
1-
\underset{\substack{(0m_0,1m_1)\leftarrow\DDD\\b\leftarrow\{0,1\}\\c\leftarrow\MMM_{\Pi}(1^\lambda,b m_b)\\\hat{b}\leftarrow \AAA(c)}}{\Pr}[R(b m_b,\hat{b})=0]
=
1-
\underset{\substack{(0m_0,1m_1)\leftarrow\DDD\\b\leftarrow\{0,1\}\\c\leftarrow\MMM_{\Pi}(1^\lambda,b m_b)\\\hat{b}\leftarrow \AAA(c)}}{\Pr}[R((1-b)m_{1-b},\hat{b})=1].
\end{align*}

Thus, $\MMM_{\Pi}$ is $(0,0,\FFF)$-$R$-computational-Narcissus-resilient if and  only if
\begin{align*}
\underset{\substack{(0m_0,1m_1)\leftarrow\DDD\\b\leftarrow\{0,1\}\\c\leftarrow\MMM_{\Pi}(1^\lambda,b m_b)\\\hat{b}\leftarrow \AAA(c)}}{\Pr}[R(b m_b,\hat{b})=1]\leq\frac{1}{2}+\negl(\lambda),
\end{align*}
which concludes the proof as
\begin{align*}
\underset{\substack{(0m_0,1m_1)\leftarrow\DDD\\b\leftarrow\{0,1\}\\c\leftarrow\MMM_{\Pi}(1^\lambda,b m_b)\\\hat{b}\leftarrow \AAA(c)}}{\Pr}[R(b m_b,\hat{b})=1]=\underset{\substack{
k\leftarrow\Gen(1^{\lambda})\\
b\leftarrow_R\{0,1\}\\
c\leftarrow\Enc_k(m_b)\\
\hat{b}\leftarrow\AAA(c)
}}{\Pr}[\hat{b}=b].
\end{align*}
\end{proof}

\subsection{Ciphertext Indistinguishability (public key) as Narcissus resiliency}

Public-key encryption, also known as asymmetric encryption, is a cryptographic system that uses {\em pairs} of keys: a public key, used for encrypting messages, and a secret key, used for decrypting ciphertexts. Formally,

\begin{definition}[Public-key encryption scheme; adapted from \cite{KatzLindell2014}]
A {\em public-key encryption scheme} is a tuple of probabilistic polynomial-time algorithms $(\Gen, \Enc, \Dec)$ such that:
\begin{enumerate}
\item  The {\em key-generation algorithm} $\Gen$ takes as input $1^{\lambda}$ and outputs a pair of keys $(k_d,k_e)$. Here $k_d$ is referred to as the ``decryption key'' or the ``secret key'', $k_e$ is referred as the ``encryption key'' or the ``public key''.

\item The {\em encryption algorithm} $\Enc$ takes as input an encryption key $k_e$ and a plaintext
message $m\in\{0, 1\}^\lambda$, and outputs a ciphertext $c$.

\item  The {\em decryption algorithm} $\Dec$ takes as input a decryption key $k_d$ and a ciphertext $c$, and outputs a message $m$ or an error symbol $\bot$. 
\end{enumerate}
It is required that for every $\lambda$, every key pair $(k_d,k_e)$ output by $\Gen(1^\lambda)$, and every $m \in\{0, 1\}^\lambda$, it holds that $\Dec_{k_d}(\Enc_{k_e}(m)) = m$.
\end{definition}

Recall that EAV-security (Definition~\ref{def:EAV-security}) states that, for any two messages $m_0$ or $m_1$, an adversary that sees a ciphertext $c$ (and nothing else) cannot distinguish between whether this ciphertext encrypts the message $m_0$ or $m_1$. 
The following definition extends this to the public-key setting, where the adversary has access to the public key (which effectively gives it access to an encryption oracle for free).

\begin{definition}[indistinguishable encryptions -- public key; adapted from \cite{KatzLindell2014}]\label{def:indPK}
A public-key encryption scheme $\Pi=(\Gen,\Enc,\Dec)$ has {\em 
indistinguishable encryptions under a chosen-plaintext attack}, or is {\em CPA-secure}, if for 
every polynomial non-uniform adversary $\AAA=(\BBB,\CCC)$ there is a negligible function $\negl=\negl(\lambda)$ such that for every $\lambda$ we have that
$$
\underset{\substack{
(k_d,k_e)\leftarrow\Gen(1^{\lambda})\\
b\leftarrow_R\{0,1\}\\
(m_0,m_1,{\sf state})\leftarrow \BBB(k_e) \\\
c\leftarrow\Enc_{k_e}(m_b)\\
\hat{b}\leftarrow\CCC({\sf state},c)
}}{\Pr}[\hat{b}=b] \leq \frac{1}{2} + \negl(\lambda).
$$
\end{definition}

Note that in the above definition, the adversary is an {\em interactive protocol}, rather then a single-shot computation, in the sense that if first gets the public-key $k_e$, then it chooses the two messages $m_0,m_1$, then it gets the ciphertext $c$, and finally it outputs the bit $\hat{b}$. Thus, towards capturing CPA-security under the umbrella of Narcissus resiliency, we first extend the concept of Narcissus resiliency to {\em interactive protocols}.

\begin{definition}[Computational Narcissus resiliency of protocols]\label{def:protocolNarc}
Let $\lambda$ be a security parameter, let $\eps>0$, let $\delta:\N \times \N \rightarrow [0,1]$ be a function, and let $n=\poly(\lambda)$.  
Let $\FFF_\lambda \subseteq\Delta((\{0,1\}^*)^{n}\times(\{0,1\}^*)^{n})$ be an ensemble of families of distributions over pairs of datasets,
and let $R:(\{0,1\}^*)^*\times\{0,1\}^*\rightarrow\{0,1\}$ be a relation.  
Protocol $\MMM$ is $(\eps,\delta,\FFF)$-$R$-computational-Narcissus-resilient if for all $\DDD\in\FFF$ and for all (non-uniform) attackers $\AAA$ running in $\poly(\lambda)$ time it holds that
$$
\underset{\substack{(S_0,S_1)\leftarrow\DDD\\b\leftarrow\{0,1\}\\z\leftarrow\left(\AAA\rightleftarrows\MMM(1^\lambda,S_b)\right)}}{\Pr}[R(S_b,z)=1]
%%%%%
\leq e^{\eps}\cdot \underset{\substack{(S_0,S_1)\leftarrow\DDD\\b\leftarrow\{0,1\}\\
z\leftarrow\left(\AAA\rightleftarrows\MMM(1^\lambda,S_b)\right)
}}{\Pr}[R(S_{1-b},z)=1]+\delta(n,\lambda)+\negl(\lambda).
$$
\end{definition}

We now show that the above definition captures CPA-security as a special case.

\begin{theorem}
\label{thm:PubEncNarc} 
Let $\Pi=(\Gen,\Enc,\Dec)$ be a public-key encryption scheme. Let $\MMM_{\Pi}$ be the protocol that takes a bit $b$ and a security parameter $\lambda$, samples a key pair $(k_d,k_e)\leftarrow\Gen(1^\lambda)$, outputs $k_e$, obtains a pair of messages $(m_0,m_1)\in\{0,1\}^{\lambda}$, encrypts $c\leftarrow\Enc_k(m_b)$ and returns $c$. Also let $R$ be the relation that takes two bits and returns 1 if and only if they are equal. Let $\DDD$ denote the trivial distribution that returns $(0,1)$ with probability 1, and denote $\FFF=\{\DDD\}$. Then, 
$\Pi$ is CPA-secure if and only if $\MMM_{\Pi}$ is $(0,0,\FFF)$-$R$-computational-Narcissus-resilient.
\end{theorem}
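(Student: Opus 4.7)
The plan is to mirror the proof of \cref{thm:EncNarc}, exploiting the fact that the relation $R$ here compares two bits, so $R(S_b,z)$ and $R(S_{1-b},z)$ are complementary events whenever $z\in\{0,1\}$ (and we may assume WLOG that the attacker's output lies in $\{0,1\}$, since any other output can only hurt it). First I would unpack the Narcissus experiment in this specific setting: since $\DDD$ always returns $(S_0,S_1)=(0,1)$, the interaction $\AAA\rightleftarrows\MMM_{\Pi}(1^\lambda,S_b)$ proceeds by $\MMM_{\Pi}$ sampling $(k_d,k_e)\leftarrow\Gen(1^\lambda)$, sending $k_e$ to $\AAA$, receiving back a pair $(m_0,m_1)$, and replying with $c\leftarrow\Enc_{k_e}(m_b)$, after which $\AAA$ outputs a bit $\hat{b}$. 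Splitting $\AAA$ into its two stages $(\BBB,\CCC)$ — where $\BBB$ produces $(m_0,m_1,\mathsf{state})$ from $k_e$ and $\CCC$ produces $\hat{b}$ from $(\mathsf{state},c)$ — shows that this induced distribution over $(k_e,(m_0,m_1),c,\hat{b})$ is identical to the CPA-experiment distribution of \cref{def:indPK}.

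Next I would observe that because $S_b=b$ and $S_{1-b}=1-b$ are distinct bits, and $R$ is equality of bits, we have the identity
\begin{equation*}
\underset{\substack{(S_0,S_1)\leftarrow\DDD\\b\leftarrow\{0,1\}\\ z\leftarrow(\AAA\rightleftarrows\MMM_{\Pi}(1^\lambda,S_b))}}{\Pr}[R(S_b,z)=1]
\;+\;
\underset{\substack{(S_0,S_1)\leftarrow\DDD\\b\leftarrow\{0,1\}\\ z\leftarrow(\AAA\rightleftarrows\MMM_{\Pi}(1^\lambda,S_b))}}{\Pr}[R(S_{1-b},z)=1]
\;=\;1,
\end{equation*}
whenever $\AAA$'s final output is a bit. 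Combined with the previous paragraph, the first of these two probabilities equals $\Pr[\hat{b}=b]$ in the CPA experiment. Plugging the identity into the definition of $(0,0,\FFF)$-$R$-computational-Narcissus-resiliency rewrites it as
\begin{equation*}
\Pr[\hat{b}=b]\leq 1-\Pr[\hat{b}=b]+\negl(\lambda)\qquad\Longleftrightarrow\qquad \Pr[\hat{b}=b]\leq \tfrac12+\tfrac12\negl(\lambda),
\end{equation*}
which is exactly the CPA-security condition (up to renaming the negligible function). This yields both implications simultaneously.

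The only subtlety, and the step I would handle with a small argument up front, is the assumption that $\AAA$'s output lies in $\{0,1\}$: for the ``CPA-secure $\Rightarrow$ Narcissus-resilient'' direction, an arbitrary Narcissus attacker might in principle produce outputs outside $\{0,1\}$, in which case $R(S_b,z)=R(S_{1-b},z)=0$ and the two sides of the Narcissus inequality are simultaneously reduced, so we can WLOG replace $\AAA$ by an attacker that outputs $0$ in any such case without decreasing the left-hand side. For the reverse direction, a CPA-attacker is by definition a pair $(\BBB,\CCC)$ whose final output is a bit, and the mapping above produces a Narcissus-attacker of exactly the same success probability. I do not foresee a real obstacle here; this theorem is essentially a repackaging of \cref{thm:EncNarc} into the interactive setting of \cref{def:protocolNarc}, and the key conceptual point is simply that the single-bit relation $R$ together with the fixed distribution $\DDD$ turns Narcissus-resiliency into a zero-sum guessing game isomorphic to CPA security.
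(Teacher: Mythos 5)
Your proof is correct and takes essentially the same route as the paper: both rewrite the Narcissus inequality, via the identity $\Pr[R(S_b,\hat b)=1]=1-\Pr[R(S_{1-b},\hat b)=1]$ for bit-valued $\hat b$, as $\Pr[\hat b=b]\le\tfrac12+\negl(\lambda)$, and identify the interactive Narcissus experiment distributionally with the CPA experiment. Your explicit WLOG handling of attackers whose final output is not a bit is a small bit of extra rigor that the paper leaves implicit by restricting attention to adversaries of the form $\AAA=(\BBB,\CCC)$ from Definition~\ref{def:indPK}.
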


\begin{proof}
Fix an adversary $\AAA=(\BBB,\CCC)$ as in Definition~\ref{def:indPK} and a security parameter $\lambda$. 
First note that
\begin{align*}
\underset{\substack{(S_0{=}0,S_1{=}1)\leftarrow\DDD\\b\leftarrow\{0,1\}\\\hat{b}\leftarrow\left(\AAA\leftrightarrows\MMM_{\Pi}(1^{\lambda},b)\right)}}{\Pr}[R(b,\hat{b})=1]
=
1-
\underset{\substack{(S_0{=}0,S_1{=}1)\leftarrow\DDD\\b\leftarrow\{0,1\}\\\hat{b}\leftarrow\left(\AAA\leftrightarrows\MMM_{\Pi}(1^{\lambda},b)\right)}}{\Pr}[R(b,\hat{b})=0]
=
1-
\underset{\substack{(S_0{=}0,S_1{=}1)\leftarrow\DDD\\b\leftarrow\{0,1\}\\\hat{b}\leftarrow\left(\AAA\leftrightarrows\MMM_{\Pi}(1^{\lambda},b)\right)}}{\Pr}[R(1-b,\hat{b})=1].
\end{align*}

Thus, $\MMM_{\Pi}$ is $(0,0,\FFF)$-$R$-computational-Narcissus-resilient if and  only if
\begin{align*}
\underset{\substack{(S_0{=}0,S_1{=}1)\leftarrow\DDD\\b\leftarrow\{0,1\}\\\hat{b}\leftarrow\left(\AAA\leftrightarrows\MMM_{\Pi}(1^{\lambda},b)\right)}}{\Pr}[R(b,\hat{b})=1]\leq\frac{1}{2}+\negl(\lambda),
\end{align*}
which concludes the proof as
\begin{align*}
\underset{\substack{(S_0{=}0,S_1{=}1)\leftarrow\DDD\\b\leftarrow\{0,1\}\\\hat{b}\leftarrow\left(\AAA\leftrightarrows\MMM_{\Pi}(1^{\lambda},b)\right)}}{\Pr}[R(b,\hat{b})=1]
=
\underset{\substack{
(k_d,k_e)\leftarrow\Gen(1^{\lambda})\\
b\leftarrow_R\{0,1\}\\
(m_0,m_1,{\sf state})\leftarrow \BBB(k_e) \\\
c\leftarrow\Enc_{k_e}(m_b)\\
\hat{b}\leftarrow\CCC({\sf state},c)
}}{\Pr}[\hat{b}=b].
\end{align*}
\end{proof}

\section{On the composability of Narcissus resiliency}

In general, Narcissus-resiliency is not preserved under composition, as illustrated in the following example.

\begin{example}
	Let $\lambda \in \bbN$, let $\cD$ be the uniform distribution over $\zo^\lambda \times \zo^\lambda$, and let $R((x,y), z) = 1 \iff z = x$. Consider the following two mechanisms: $\MMM_1((x,y)) = y$ and $\MMM_2((x,y)) = x \xor y$ (bit-wise xor). While both $\MMM_1, \MMM_2$ are $(0,0,\cF = \set{\cD})$-$R$-Narcissus-resilient (as each output alone does not reveal any information about $x$), the composed mechanism $\MMM=(\MMM_1,\MMM_2)$ is clearly not Narcissus-resilient: Let $\AAA$  be the algorithm that given $(u,v) \in \zo^\lambda \times \zo^\lambda$ as input, outputs $u \xor v$. Then it holds that 
	\begin{align*}
		\underset{\substack{(x,y)\leftarrow\DDD\\(y, x \xor y)\leftarrow\MMM(x,y)\\z\leftarrow\AAA(y, x \xor y)}}{\Pr}[R((x,y),z)=1] = 1 \quad \text{and} \quad \underset{\substack{(x,y)\leftarrow\DDD\\(x',y')\leftarrow\DDD\\(y, x \xor y)\leftarrow\MMM(x,y)\\z\leftarrow\AAA(y, x \xor y)}}{\Pr}[R((x',y'),z)=1] \leq 2^{-\lambda}.
	\end{align*}
\end{example}

Yet, in some settings Narcissus-resiliency is preserved under composition. First, as we showed, there are settings under which Narcissus-resiliency is equivalent to other security definitions which self-compose, such as differential privacy or security of one-way functions. In particular, in the equivalence we showed to differential privacy, the family $\cF$ consisted of all constant distributions over pairs of datasets (as in Definition~\ref{def:DP-pairs}), and the relation $R$ was ``everything'', in that the condition of Narcissus-resiliency needed to hold for all $R$. We are able to prove that a restricted type of composition still holds even without the ``for all'' quantifier over $R$ (for the case $\eps = 0$). Specifically,

\begin{theorem}[Composition of \cref{def:narcDP}]
	Let $\FFF$ be the family of (constant) distributions that for every pair of neighboring datasets $S,T\in\XXX^n$ contains the distribution $\DDD_{S,T}$, and let $R \colon  \XXX^n \times \cZ \rightarrow \zo$ be a relation.
	Let $\MMM_1, \ldots, \MMM_k$ be mechanisms such that each $\MMM_i \colon \XXX^n \rightarrow Y_i$ is $(0,\delta,\FFF)$-$R$-Narcissus-resilient. Then the mechanism $\MMM = (\MMM_1,\ldots,\MMM_k)\colon \XXX^n\rightarrow Y_1 \times \ldots \times Y_k$ is $(0,2\sum_{i=1}^k \delta_i,\FFF)$-$R$-Narcissus-resilient.
\end{theorem}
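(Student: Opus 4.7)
My plan is a hybrid argument that rewrites one coordinate at a time. Fix a pair of neighboring datasets $S_0, S_1 \in \XXX^n$ (so that $\DDD_{S_0,S_1}\in\FFF$) together with an arbitrary attacker $\AAA$ for the composed mechanism $\MMM$. For each $v=(v_1,\dots,v_k)\in\{0,1\}^k$ and $c\in\{0,1\}$ define
$$p(v,c)\; :=\; \Pr\bigl[R\bigl(S_c,\AAA(\MMM_1(S_{v_1}),\dots,\MMM_k(S_{v_k}))\bigr)=1\bigr],$$
where the probability is over the internal randomness of all the mechanisms and of $\AAA$. Setting $\Phi(v):=p(v,1)-p(v,0)$, the Narcissus-advantage of $\AAA$ against $\MMM$ with respect to $\DDD_{S_0,S_1}$ unfolds to $\tfrac12(\Phi(1^k)-\Phi(0^k))$, so it suffices to show $\Phi(1^k)-\Phi(0^k)\le 4\sum_{i=1}^k\delta_i$.

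The key lemma is that flipping a single coordinate of $v$ changes $\Phi$ by at most $2\delta_i$: concretely, for every $i\in[k]$ and every $v_{-i}\in\{0,1\}^{k-1}$,
$$\Phi(v_{-i},1)-\Phi(v_{-i},0)\;\le\;2\delta_i.$$
To prove this, I will invoke the Narcissus-resiliency of $\MMM_i$ applied to the attacker $\AAA^{(i)}_{v_{-i}}$ which, on input $y_i$, internally samples $y_j\leftarrow\MMM_j(S_{v_j})$ for every $j\neq i$ (using fresh randomness, with $\MMM_j$, $S_0$, $S_1$ and $v_{-i}$ all hardcoded into the attacker's description) and outputs $\AAA(y_1,\dots,y_k)$. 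Playing this attacker in the Narcissus game for $\MMM_i$ with distribution $\DDD_{S_0,S_1}$, the success probability equals $\tfrac12(p(v_{-i},0,0)+p(v_{-i},1,1))$ and the baseline probability equals $\tfrac12(p(v_{-i},0,1)+p(v_{-i},1,0))$, so the resiliency assumption for $\MMM_i$ rearranges directly to the displayed inequality.

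Finally, telescoping along the monotone path $0^k\to(1,0^{k-1})\to(1^2,0^{k-2})\to\cdots\to 1^k$ yields
$$\Phi(1^k)-\Phi(0^k)\;=\;\sum_{i=1}^k\bigl[\Phi(1^i,0^{k-i})-\Phi(1^{i-1},0^{k-i+1})\bigr]\;\le\;\sum_{i=1}^k 2\delta_i,$$
which gives Narcissus-advantage at most $\sum_{i=1}^k\delta_i$, comfortably within the claimed $2\sum_i\delta_i$ slack.

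The main obstacle is the bookkeeping in the second paragraph: verifying that the joint distribution over $(y_1,\dots,y_k)$ induced by $\AAA^{(i)}_{v_{-i}}$ in the single-mechanism game for $\MMM_i$ matches exactly the one implicit in $p(v,c)$. This is subtle because in the composed game a single shared bit $b$ controls the dataset used by every $\MMM_j$, whereas the hybrid treats coordinate $i$ as the random bit of the single-mechanism game while absorbing all other coordinates into the attacker's non-uniform description. Once this identification is carried out, and once one checks that adversaries of the form $\AAA^{(i)}_{v_{-i}}$ are admissible in Definition~\ref{def:narcDP} (they are, since the definition quantifies over all attackers), the telescoping is routine.
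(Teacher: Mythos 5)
Your proof is correct, and it takes a genuinely different route from the paper's. The paper argues indirectly: it first shows (after a case analysis on whether the image of $R$ on $\{S_0,S_1\}\times\cZ$ contains both $(0,1)$ and $(1,0)$) that $(0,\delta_i,\FFF)$-$R$-Narcissus-resiliency of $\MMM_i$ implies $(0,2\delta_i)$-indistinguishability of $\MMM_i(S_0)$ and $\MMM_i(S_1)$, via a somewhat delicate construction of five auxiliary relations $R_1,\dots,R_5$; it then composes these indistinguishability guarantees and converts back. Your approach is a direct hybrid argument on the quantity $\Phi(v)=p(v,1)-p(v,0)$, where the single-coordinate step is obtained by feeding the single-mechanism Narcissus game the attacker $\AAA^{(i)}_{v_{-i}}$ that internally simulates the other $k-1$ mechanisms on hard-coded datasets $S_{v_j}$. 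I verified the key identities: the composed advantage is $\tfrac12(\Phi(1^k)-\Phi(0^k))$, and the Narcissus condition for $\MMM_i$ applied to $\AAA^{(i)}_{v_{-i}}$ rearranges exactly to $\Phi(v_{-i},1)-\Phi(v_{-i},0)\le 2\delta_i$; telescoping along the monotone path then gives advantage $\le\sum_i\delta_i$. Two things your route buys: (1) it sidesteps the $Im(R)$ case analysis and the auxiliary-relation construction entirely, since it never detours through indistinguishability of output distributions; and (2) it actually yields a \emph{sharper} bound than stated — $(0,\sum_i\delta_i,\FFF)$-$R$-Narcissus-resiliency rather than $(0,2\sum_i\delta_i,\FFF)$ — so you could tighten the factor of $2$ in the theorem statement. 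The only thing to be explicit about when writing it up is the identification of distributions you flag at the end: in the $\MMM_i$-game the bit $b$ drives only coordinate $i$, while the attacker samples the other coordinates with $v_{-i}$ fixed and fresh randomness; this is exactly the joint distribution implicit in $p((v_{-i},b),c)$, and the definition's unrestricted quantification over attackers makes $\AAA^{(i)}_{v_{-i}}$ admissible. Once that's spelled out, the proof is complete and is arguably cleaner than the one in the paper.
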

\begin{proof}
	Fix a pair of datasets $S_0,S_1$. Our goal is to show that for every algorithm $\AAA$:
	\begin{align}\label{eq:composition:goal}
		\underset{\substack{(S_0,S_1)\leftarrow\DDD\\b\leftarrow\{0,1\}\\(y_1,\ldots,y_k)\leftarrow\MMM(S_b)\\z\leftarrow\AAA(y_1,\ldots,y_k)}}{\Pr}[R(S_b,z)=1]&
		%%%%%%
		\leq
		\underset{\substack{(S_0,S_1)\leftarrow\DDD\\b\leftarrow\{0,1\}\\(y_1,\ldots,y_k)\leftarrow\MMM(S_{1-b})\\z\leftarrow\AAA(y_1,\ldots,y_k)}}{\Pr}[R(S_b,z)=1]+2 \sum_{i=1}^k \delta_i.
	\end{align}
	In the following, for a relation $R'$ define  $Im(R') = \set{(R'(S_0,z), R'(S_1,z)) \colon z \in \cZ} \subseteq \zo^2$, and let $R$ the relation from the theorem statement.
	If $(0,1), (1,0) \notin Im(R)$ or $\size{Im(R)} = 1$ then \cref{eq:composition:goal} immediately follows since in this case
	\begin{align*} \underset{\substack{(S_0,S_1)\leftarrow\DDD\\b\leftarrow\{0,1\}\\(y_1,\ldots,y_k)\leftarrow\MMM(S_b)\\z\leftarrow\AAA(y_1,\ldots,y_k)}}{\Pr}[R(S_b,z)=1] = \underset{\substack{(S_0,S_1)\leftarrow\DDD\\b\leftarrow\{0,1\}\\(y_1,\ldots,y_k)\leftarrow\MMM(S_{1-b})\\z\leftarrow\AAA(y_1,\ldots,y_k)}}{\Pr}[R(S_b,z)=1].
	\end{align*}
	In the following we assume that the other case holds where at least one of the pairs $\set{(1,0),(0,0)}$, $\set{(1,0),(0,1)}$, $\set{(1,0),(1,1)}$, $\set{(0,1), (0,0)}$, $\set{(0,1), (1,1)}$ is contained in $Im(R)$.
	We prove \cref{eq:composition:goal} by showing that \Enote{define indistinguishable in preliminaries}
	\begin{align}\label{eq:composition:goal2}
		\forall i \in [k]: \quad \MMM_i(S_0)\text{ and }\MMM_i(S_1)\text{ are }(0,2\delta_i)\text{-indistinguishable}.
	\end{align}
	Given that \cref{eq:composition:goal2} holds we conclude by composition that $\MMM(S_0)$ and $\MMM(S_1)$ are $(0,2 \sum_i \delta_i)$-indistinguishable, which in particular implies \cref{eq:composition:goal}. 
	
	Assume towards a contradiction that there exists $i \in [k]$ such that $\MMM_i(S_0)$ and $\MMM_i(S_1)$ are not $(0,2\delta_i)$-indistinguishable. Namely, there exists an event $B \subseteq \cY_i$ such that
	\begin{align*}
		\Pr[\MMM_i(S_0)\in B]>\Pr[\MMM_i(S_1)\in B]+2\delta_i.
	\end{align*}
	Fix two distinct elements $z_1,z_2 \in\cZ$ and let $\cA$ be the algorithm that given $y \in \cY_i$ as input, outputs $z_1$ if $y \in B$ and $z_2$ otherwise.
	We next define five relations $R_1,\ldots,R_5 \colon \cX^n \times \cZ \rightarrow \zo$:
	\begin{enumerate}
		\item $R_1(S,z) = \begin{cases} 1 & S=S_0 \land z = z_1 \\ 0 & \text{o.w.}\end{cases}\:\:$ (note that $Im(R_1) = \set{(1,0),(0,0)}$).
		
		\item $R_2(S,z) = \begin{cases} 1 & (S = S_0 \land z = z_1) \lor (S = S_1 \land z = z_2)\\ 0 & \text{o.w.}\end{cases}\:\:$ (note that $Im(R_2) = \set{(1,0),(0,1)}$).
		
		\item $R_3(S,z) = \begin{cases} 0 & S = S_1 \land z = z_1 \\ 1 & \text{o.w.}\end{cases}\:\:$ (note that $Im(R_3) = \set{(1,0),(1,1)}$).
		
		\item $R_4(S,z) = \begin{cases} 1 & S=S_1 \land z = z_2 \\ 0 & \text{o.w.}\end{cases}\:\:$ (note that $ Im(R_4) = \set{(0,1),(0,0)}$).
		
		\item $R_5(S,z) = \begin{cases} 0 & S = S_0 \land z = z_2 \\ 1 & \text{o.w.}\end{cases}\:\:$ (note that $Im(R_5) = \set{(0,1),(1,1)}$).
	\end{enumerate}
	%	\begin{enumerate}
		%		\item $R_1(S,z) = \begin{cases} 1 & S=S_0 \land z \in B \\ 0 & \text{o.w.}\end{cases}\:\:$ (note that $\set{(1,0),(0,0)} \subseteq Im(R_1)$).
		%		
		%		\item $R_2(S,z) = \begin{cases} 1 & (S = S_0 \land z \in B) \lor (S = S_1 \land z \notin B)\\ 0 & \text{o.w.}\end{cases}\:\:$ (note that $\set{(1,0),(0,1)} \subseteq Im(R_2)$).
		%		
		%		\item $R_3(S,z) = \begin{cases} 0 & S = S_1 \land z \in B \\ 1 & \text{o.w.}\end{cases}\:\:$ (note that $\set{(1,0),(1,1)} \subseteq Im(R_3)$).
		%		
		%		\item $R_4(S,z) = \begin{cases} 1 & S=S_1 \land z \notin B \\ 0 & \text{o.w.}\end{cases}\:\:$ (note that $\set{(0,1),(0,0)} \subseteq Im(R_4)$).
		%		
		%		\item $R_5(S,z) = \begin{cases} 0 & S = S_0 \land z \notin B \\ 1 & \text{o.w.}\end{cases}\:\:$ (note that $\set{(0,1),(1,1)} \subseteq Im(R_5)$).
		%	\end{enumerate}
	Let $p_0 = \Pr[\MMM_i(S_0)\in B]$ and $p_1 = \Pr[\MMM_i(S_1)\in B]$ (recall that $p_0 - p_1 > 2\delta_i$). Furthermore, for $j \in [5]$ let
	\begin{align*}
		q_j \eqdef \underset{\substack{(S_0,S_1)\leftarrow\DDD\\b\leftarrow\{0,1\}\\y\leftarrow\MMM(S_b)\\z \la \AAA(y)}}{\Pr}[R_j(S_{b},z)=1]\quad\text{ and }\quad q_j' \eqdef \underset{\substack{(S_0,S_1)\leftarrow\DDD\\b\leftarrow\{0,1\}\\y\leftarrow\MMM(S_b)\\z \la \AAA(y)}}{\Pr}[R_j(S_{1-b},z)=1].
	\end{align*}
	Simple calculations yield the following:
	\begin{enumerate}
		\item $q_1 = \frac12  p_0 + \frac12 \cdot 0$ and $q_1' = \frac12 \cdot 0 + \frac12  p_1$,
		\item $q_2 = \frac12 p_0 + \frac12 (1 - p_1)$ and $q_2' = \frac12 p_1 + \frac12 (1 - p_0)$,
		\item $q_3 = 1 - (\frac12\cdot 0 + \frac12p_1)$ and $q_3' = 1 - (\frac12p_0 + \frac12 \cdot 0)$,
		\item $q_4 = \frac12\cdot 0 + \frac12\cdot (1-p_1)$ and $q_4' = \frac12 (1-p_1) + \frac12 \cdot 0$,
		\item $q_5 = 1 - (\frac12(1-p_0) + \frac12\cdot 0)$ and $q_5' = 1 - (\frac12\cdot 0 + \frac12(1-p_1))$,
	\end{enumerate}
	where in all cases, we first split into the $1/2$-probability event $b=0$ and then into the $1/2$-probability event $b=1$ (similarly to the proof of \cref{thm:DPnarc}).  We obtain for all $j \in [5]$ that $q_j - q_j' = \frac12(p_0 - p_1) > \delta$.
	
	In the following, let $j \in [5]$ such that $Im(R_j) \subseteq Im(R)$ (note that at least one such $j$ exists).
	We next translate the attack w.r.t. $R_j$ into an attack w.r.t. $R$.
	The attacker $\AAA'$, given $y$ as input, computes $z = \AAA(y)$ and outputs an (arbitrary) element $z'$ such that $(R(S_0,z'), R(S_1,z')) = (R_j(S_0,z), R_j(S_1,z))$. By definition, 
	\begin{align*}
		q \eqdef \underset{\substack{(S_0,S_1)\leftarrow\DDD\\b\leftarrow\{0,1\}\\y\leftarrow\MMM_i(S_b)\\z\leftarrow\AAA'(y)}}{\Pr}[R(S_{b},z)=1] = q_j\quad\text{ and }\quad q' \eqdef  \underset{\substack{(S_0,S_1)\leftarrow\DDD\\b\leftarrow\{0,1\}\\y\leftarrow\MMM_i(S_b)\\z\leftarrow\AAA'(y)}}{\Pr}[R(S_{1-b},z)=1] = q'_j.
	\end{align*}
	Since $q_j - q_j' > \delta$ we conclude that $q - q' > \delta$, yielding that $\MMM_i$ is not $(0,\delta,\FFF)$-$R$-Narcissus-resilient, a contradiction.  This completes the proof of \cref{eq:composition:goal2} and the proof of the theorem.
\end{proof}

	\section{Recognizing Extraction}\label{sec:RecognizingRec}

Following the discussion in \cref{sec:introIdentify}, we use the terminology of ``extraction'' rather than ``reconstruction'' in the setting where the dataset $S$ and the model $y$ are fixed. We formally define how to recognize a valid extraction attack on a given (fixed) model $y$ (as described in \cref{sec:introIdentify}), use the definition to explain the validity of some real-world attacks, and prove that in general the problem of verifying a valid attack is computationally hard assuming that cryptography exists.

\subsection{Defining Extraction via Kolmogorov Complexity}\label{sec:RecognizingRec:OurApproach}

In this paper, we think of extraction as a randomized process that can fail with some probability, so we use the following variant of \cref{def:intro:KLcomplexity}.

\begin{definition}[$K_{\cL}$-Complexity, a probabilistic version of \cref{def:intro:KLcomplexity}]\label{def:KLcomplexity}
	Let $\cL$ be a programming language.
	The $K_{\cL}$-complexity of a string $x$, denoted by $K_{\cL}(x)$, is the length of the shortest $\cL$-program that with probability $2/3$ outputs the string $x$ and halts.
\end{definition}

In theoretical results, this complexity is usually defined w.r.t. a fixed \emph{Universal Turing Machine} $\cU$ that gets a description $\cM$ of a Turing machine and outputs $\cU(\cM)$. Then, $K(x) = K_{\cU}(x)$ is defined by the shortest $\cM$ such that $\pr{\cU(\cM) = x} \geq 2/3$. This is good enough for \emph{asymptotic} analysis since for any programming language $\cL$ and every $x$ it holds that $K(x) \leq K_{\cL}(x) + O(1)$, where $O(1)$ denotes a constant that is independent of $\size{x}$.\footnote{Let $\cC$ be a $\cU$-program (i.e., a Turing machine description) that serves as a compiler from a language $\cL$ to $\cU$, i.e., give a $\cL$-program $\cM$, $C(\cM)$ outputs a $\cU$-program $\cM'$ that acts as $\cM$. So for every $x$, if it can be described using an $\cL$-program $\cM$, then it can be described using a $\cU$-program $\cC(\cM)$ of length $\size{\cM} + \size{\cC} = \size{\cM} + O(1)$ (note that $\size{\cC}$ is independent of $\size{x}$).} However, when we would like to deal with concrete quantities (where constants matter), we must explicitly specify a concrete programming language $\cL$, and therefore we do not omit it from the notation.

After fixing the programming language $\cL$, the quantity $K_{\cL}(x)$ can be thought of as the analogous to the ``entropy" of the string $x$. As with entropy, we can also consider a conditional version of the $K$-complexity (the analog of conditional entropy) as well as ``mutual $K$-information" (the analog of mutual information) \cite{ZvonkinLevin70,Levin73,Trakhtenbrot84,LMocas93,Grunwald2003-GRNKCA,LiuP22,BallLMP23}:

\begin{definition}[Conditional $K_{\cL}$-Complexity, and mutual $K_{\cL}$-information]\label{def:KI}
	Given a programming language $\cL$ and two strings $x,y$, we define the \emph{conditional $K_{\cL}$-complexity of $x$ given $y$}, denoted by $K_{\cL}(x \mid y)$, as the length of the shortest program in $\cL$ that given $y$ as input, with probability $2/3$ outputs $x$ and halts. We define the \emph{mutual $K_{\cL}$-information} of $x,y$ as $KI_{\cL}(x ; y) = K_{\cL}(x) - K_{\cL}(x \mid y)$. 
\end{definition}

Intuitively, $KI_{\cL}(x ; y)$ is high if it is possible to extract $x$ from $y$ using a short $\cL$-program, but it is not possible to do so without $y$.
In our context of extracting an input example $x$ from a model $y$, we should also allow to output elements $z$ that are ``similar" to $x$ (which are considered as a valid extraction of $x$ under some metric), or perhaps the goal is to output a list of elements that some of them are close to training examples.
Therefore, we would be interested in an extension of the $K$-complexity into a set of strings which allows much more flexibility.

\begin{definition}[Extension of \cref{def:KLcomplexity,def:KI} for sets]\label{def:KLcomplexitySet}
	Let $\cL$ be a programming language.
	The $K_{\cL}$-complexity of a set of strings $X$, denoted by $K_{\cL}(X)$, is the length of the shortest  $\cL$-program that with probability $2/3$, outputs a string $x \in X$ and halts. 
	Similarly to \cref{def:KI}, we define $K_{\cL}(X \mid y)$ as the length of the shortest $\cL$-program that given an input $y$, outputs w.p. $2/3$ a string $x \in X$ and halts, and we define $KI_{\cL}(X; y) = K_{\cL}(X) - K_{\cL}(X\mid y)$.
\end{definition} 

Note that by definition, $KI_{\cL}(X; y) = \min_{x \in X} K_{\cL}(x) - \min_{x \in X} K_{\cL}(x \mid y)$, and these minimums are not necessarily realized by the same $x$.

A few examples are in order. In all the following examples, we use $\cL$ as a fixed Universal Turing Machine language, and for $x = (x^1,\ldots,x^d), z = (z^1,\ldots,z^d) \in \zo^d$ we denote by $\NHdist(x, z) = \frac1{d}\cdot \size{\set{i \in [n] \colon x^i \neq z^i}}$ the \emph{Normalized Hamming Distance} between $x$ and $z$.

\begin{example}\label{example:K:simple_x}
For $x \in \zo^d$ define \haim{This $R$ can be confused with  the relation from the definition of self resilience } $R_x = \set{(z_1,z_2) \in \zo^{2d} \colon \min_{i \in \set{1,2}}{\NHdist(x,z_i)} \leq 1/2},$ (i.e., in this example, $R_x$ defines the task of outputting two strings that at least one of them agrees with $x$ on at least half of the bits).
Note that $K_{\cL}(R_x) = O(\log d)$ for any $x$, because the program that outputs a uniformly random $(z_1,z_2) \la \zo^{2 d}$ -- 
 e.g., a for loop executing for $2d$ iterations and outputing a random bit in each -- 
 satisfies $(z_1,z_2) \in R_x$ with probability $3/4$.
\end{example}

\begin{example}\label{example:K:complex_x}
	For $x \in \zo^d$ define $R_x$ as the set of all the tuples of the form $z = (z_1,\ldots, z_{m})$, for $m \leq \poly(d)$ (for some fixed polynomial), such that $\min_{i \in [m]}{\NHdist(x,z_i)} \leq 1/4$ (i.e., at least one of the $z_i$'s agrees with $3/4$ of the bits of $x$). For a uniformly random $x \la \zo^d$, it holds with high probability that $K_{\cL}(R_x) \geq (1 - o(1)) d/2$ (i.e., for most strings $x$, any algorithm that aims to output a string that agrees with $3/4$ of the bits of $x$, essentially must memorize $\approx d/2$ bits of information about $x$).  Indeed, given $y = (x^1,\ldots,x^{d/2})$ (the first half of the bits of $x$) as input, we have $K(R_x \mid y) = O(\log d)$ because the algorithm that chooses $m = 2$ uniformly random strings $(z_1, z_2) \la \zo^{d/2}$ and outputs $z = (y \circ z_1,\:  y \circ z_2)$ satisfies $z \in R_x$ with probability $3/4$.
\end{example}

\begin{example}\label{example:K:S}
	Let $S = (x_1,\ldots,x_n) \in (\zo^d)^n$, and define $R_S$ as the set of all the tuples $z = (z_1,\ldots, z_{m})$ such that there exist $i \in [n]$ and $j \in [m]$ with $\NHdist(x_i,z_j) \leq 1/4$. Assume $n,m \leq \poly(d)$ (for some fixed polynomial). Then as in \cref{example:K:complex_x}, it can be shown that when $S$ consists of uniformly random strings, then $K_{\cL}(R_S) \geq (1 - o(1)) d/2$ with high probability, meaning that solving this problem requires to memorize at least $\approx d/2$ bit of information from one of the strings.
\end{example}

We are now ready to reformulate the extraction definitions of \citet{CarliniLargeModels21,CarliniDiffusion23}.

\begin{definition}[Extractable Information]\label{def:extraction}
	Let $S$ be a set of elements over a domain $\cX$, let $R:\XXX^*\times\{0,1\}^*\rightarrow\{0,1\}$ be an extraction relation (i.e., $R(S,z)=1$ means that $z$ is considered a valid extraction of elements in $S$), let $\cL$ be a programming language, and let $R_S = \set{z \colon R(S,z) = 1}$.
	We say that a model $y$ contains $(R,\cL,\tau)$-extractable information about $S$ if
	\begin{align*}
		KI_{\cL}(R_S ; y) \geq \tau.
	\end{align*} 
\end{definition}

In order to prove that $S$ is extractable from a model $y$, it suffices to present an extractor:

%\Enote{Explain that auxiliary input is without loss of generality since it can be contained in $\cL$.}
\begin{definition}[Extractor, redefinition of \cref{def:intro:extraction}]\label{def:extractor}
	Let $R \colon \cX^* \times\{0,1\}^* \rightarrow \zo$ be an extraction relation and denote $R_S = \set{z \colon R(S,z) = 1}$. Let $q \in [0,1)$ be a quality parameter. We say that an $\cL$-program $\cA$ is an $(\cL,R,q)$-extractor 
	of $S \in \cX^*$ from $y$  if the following holds:
	\begin{enumerate}
		\item $\pr{\cA(y) \in R_S} \geq 2/3$, and\label{item:Aextracts}
		\item $K_{\cL}(R_S) \geq \frac{\size{\cA}}{1-q}$.\label{item:HardExtraction}
	\end{enumerate}
	%We define the \emph{quality} of the extraction by  $\frac{\tau}{\size{\cA} + \tau}$ (closer to $1$ means a more significant extraction).
\end{definition}

Note that when $q$ is closer to $1$ is means a more significant extraction.

\begin{claim}
	If there exists an $(\cL,R,q)$-extractor $\cA$ of $S$ from $y$, then $y$ contains $(\cL,R,\tau = \frac{q}{1-q}\cdot \size{\cA})$-extractable information about $S$ (\cref{def:extraction}).
\end{claim}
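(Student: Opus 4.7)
The plan is a short direct computation unpacking the two definitions. By Definition~\ref{def:extraction}, what needs to be shown is that $KI_{\cL}(R_S; y) \geq \tau$, which by definition equals $K_{\cL}(R_S) - K_{\cL}(R_S \mid y)$. So the two ingredients are a lower bound on $K_{\cL}(R_S)$ and an upper bound on $K_{\cL}(R_S \mid y)$.

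First, I would use Item~\ref{item:HardExtraction} of Definition~\ref{def:extractor} directly to get $K_{\cL}(R_S) \geq \frac{|\cA|}{1-q}$. Next, I would observe that the extractor $\cA$ itself witnesses an upper bound on the conditional $K_{\cL}$-complexity of $R_S$ given $y$: since $\cA$ is an $\cL$-program of length $|\cA|$ that, on input $y$, outputs an element of $R_S$ with probability at least $2/3$ (by Item~\ref{item:Aextracts}), we have $K_{\cL}(R_S \mid y) \leq |\cA|$, directly by Definition~\ref{def:KLcomplexitySet}.

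Combining the two bounds gives
\begin{equation*}
KI_{\cL}(R_S; y) \;=\; K_{\cL}(R_S) - K_{\cL}(R_S \mid y) \;\geq\; \frac{|\cA|}{1-q} - |\cA| \;=\; |\cA|\cdot \frac{q}{1-q} \;=\; \tau,
\end{equation*}
which is exactly the statement to be shown. There is no real obstacle here; the only subtlety is checking that the definitions of (conditional) $K_{\cL}$-complexity for sets given in Definition~\ref{def:KLcomplexitySet} use the same success probability $2/3$ as Item~\ref{item:Aextracts} of Definition~\ref{def:extractor}, so that the extractor $\cA$ is a valid candidate program witnessing the upper bound on $K_{\cL}(R_S \mid y)$. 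This is indeed the case, and the claim follows.
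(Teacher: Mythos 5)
Your proof is correct and follows essentially the same computation as the paper's: lower-bound $K_{\cL}(R_S)$ by Item~2 of the extractor definition, upper-bound $K_{\cL}(R_S \mid y)$ by $|\cA|$ since $\cA$ itself witnesses it, and subtract. Your version just spells out the two ingredients that the paper leaves implicit in its one-line display.
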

\begin{proof}
	Compute
	\begin{align*}
		KI_{\cL}(R_S ; y) = K_{\cL}(R_S) - K_{\cL}(R_S \mid y) \geq \frac{\size{\cA}}{1-q} -  \size{\cA} = \frac{q}{1-q}\cdot \size{\cA}.
	\end{align*}
\end{proof}

\subsection{Capturing Interesting Extractions using the Predicate}

In the previous examples (e.g., \cref{example:K:S}), we considered simple relations $R$ that only check for a small distance from an example point. But in the real-world, when thinking about $S$ as the training dataset and on $y$ as the trained model (over $S$), it is very likely that $S$ will contain duplicated data, and in such cases, it might be considered ok to memorize such data. As a concrete example, if we think about a chat-bot model, then we can query it with "Print the Bible" and it will likely do that. I.e., even though it is indeed a valid extraction (as the $K$-complexity of the Bible is large), this is not an interesting one since the Bible content is probably duplicated in many training examples. In order to define more interesting relations $R$, it is reasonable to consider extraction of only \emph{$k$-Eidetic Memorizated} strings $x$ (\cite{CarliniLargeModels21}) for some small threshold $k$, which means that such $x$'s  only appear at most $k$ times in $S$.

\subsection{Real-World Examples}\label{sec:real-world-attacks}

In order to illustrate the generality of our definitions, we next pick three different types of real-world attacks, and explain what is happening in each of them in terms of \cref{def:extractor}.

\paragraph{Extraction from Large Language Models \cite{CarliniLargeModels21}.}

\citet{CarliniLargeModels21} present extraction attacks on GPT-2 by essentially generating many samples from GPT-2 when the model is conditioned on (potentially empty) prefixes. Then they sort each generation according to some metric to remove duplications, and this gives a set of potentially memorized  texts from training examples. In one of their methods, they choose the top-$m$ interesting ones by assigning a score to each such text which combines the compressibility of the text using some compression algorithm like zlib \cite{zlib}, and the likelihood of it, e.g., using a ``perplexity" measure \cite{Carlini0EKS19} (higher compressibility and likelihood increase the score). They show that noticeable fraction of such strings are "Eidetic Memorizated", meaning that they only appear a small number of times in the training data (which makes them more interesting).

In terms of our extraction methodology, the training dataset $S = (x_1,\ldots,x_n)$ consists of webpages, and an \emph{interesting extractable text} is a string $s$ that: 
(1) has high entropy (i.e., high $K$-complexity, which can be estimated using heuristics like zlib compression) and (2) appears only  small number of times in the training examples (here, ``appear" in webpage $x_i$ means $s \subseteq x_i$).
The attack outputs a list of strings $z = (z_1,\ldots,z_m)$, and succeed when there exists a sub-list $(z_{j_1},\ldots,z_{j_{\ell}})$ such that each string $z_{j_i}$ is interesting and extractable, and the $K$-complexity of the entire list is higher than some threshold (which, again, can be estimated using heuristics like zlib compression).

When the $K$-complexity of the resulting sub-list is much higher than the length of the attacking code that extracted this information from the GPT-2 model, the attack is considered to have high quality.

\paragraph{Extraction from Diffusion Models \cite{CarliniDiffusion23}.}

\citet{CarliniDiffusion23} present several extraction attacks on Diffusion models, where all the attacks assume that the training data is given in advance. As one concrete example, they managed to extract training images from Imagen (\cite{SahariaCSLWDGLA22}), a $2$ billion parameter text-to-image diffusion model, by first searching for the captions of "outliers" in the training data (that is, training images that are less similar to other training images according to some metric), and then showed that when querying the model on such captions, some of them result with a good approximation of unique training images.

In terms of our extraction methodology, the end program that queries the model on a specific caption that leads to extraction of a unique training image is extremely short compared to the $K$-complexity of the image that was extracted. So this should be considered a very high-quality extraction to a predicate of the following type (for some threshold parameters $T > t$):

\begin{align*}
	R(S=(x_1,\ldots,x_n),z) = 1 \iff \exists i \in [n]\text{ s.t. }\paren{\ell_2(x_i,z) < t \text{ and }\forall j\in [n]\setminus \set{i}: \ell_2(x_i,x_j) > T}. 
\end{align*}

\paragraph{Extraction from Binary Classifiers trained by Large Neural Networks \cite{HaimVYSI22}.}

\citet{HaimVYSI22} present a general attack that works on large class of binary classifiers trained by Neural networks with many parameters $p$. Roughly, their attack works as follows: Given the parameters of the model $y \in \bbR^p$, they define an optimization problem (that depends on $y$) on variables $z_1,\ldots,z_m \in \bbR^d$ and $\lambda_1,\ldots,\lambda_m \in \bbR$ and optimize it, and show that in the regime $p > n\cdot d$, the solution $z_1,\ldots,z_m$ is likely to contain a good approximation of many data points.

As one example of their experimental evaluation, they trained the model on the training images of CIFAR10 on vehicles vs. animals classification, and showed that their optimization program, which is defined based on the model $y \in \bbR^p$ given as input, managed to extract a good approximation of many training images (the top $45$ extractions are presented in their paper). 

In terms of our extraction methodology, given the training data $S = (x_1,\ldots,x_n)$, we can define the extraction relation
\begin{align*}
	R(S=(x_1,\ldots,x_n),z = (z_1,\ldots,z_m)) = 1 \iff \size{\set{i \in [n] \colon \exists j \in [m]\text{ s.t. }z_j\text{ is ``close" to }x_i}} \geq t,
\end{align*}
for some parameter $t$ (say, $t = 45$), where ``close" is according to some metric.
Under the reasonable assumption that $K(R_S)$ is much higher than the length of their attack (which is given the model $y$ as input), this is considered a high quality extraction.

\subsection{Inherent Limitations of \cref{def:extractor}}

The main limitation of our \cref{def:extractor} is that \cref{item:HardExtraction} (lower-bounding the Kolmogorov Complexity) is not verifiable, as computing the $K$-complexity is an intractable problem. While in practice the $K$-complexity can be estimated using Heuristics like zlib compression \cite{zlib}, any such Heuristic can fail to determine some highly compressible patterns, and as we demonstrate next in \cref{sec:hard-of-ver}, this problem is indeed inherert assuming that basic cryptography exists.

A well studied tractable variant of the $K$-complexity is the $t(\cdot)$-time bounded $K$-complexity, denoted by $K^t$-complexity, where $K^t(x)$ is the length of the shortest program that outputs $x$ within $t(\size{x})$ steps (\cite{Kolmogorov68,Sisper83,Trakhtenbrot84,Ko86}). However, there is no efficient algorithm for computing the $K^t$-complexity.\footnote{As surveyed by \cite{Trakhtenbrot84}, the problem of efficiently determining the $K^t$-complexity for $t(n) = \poly(n)$ predates the theory of NP-completeness and was studied in the Soviet Union since the 60s as a candidate for a problem that requires “brute-force search” (see Task 5 on page 392 in \cite{Trakhtenbrot84}).} Very recently, in a sequence of breakthrough results in cryptography and complexity theory \cite{LiuP20,LiuP21,LiuP22a,LiuP22}, it was shown that (a close variant of) this problem is hard if and only if one-way functions, the most basic cryptographic primitives, exist. Since a real-world verifier is a polynomial-time algorithm, this means that switching from $K(\cdot)$ to the $K^{\poly(n)}(\cdot)$ variant will essentially not make any difference. Actually, the hardness of verification is inherent, and in the next two sections we show, under common cryptographic assumptions, how to construct triplets $(\cA,x,y)$, with efficient $\cA$, such that it is computationally hard to verify whether $\cA$ extracts information about $x$ from $y$.

\remove{

In order to prove that hardness of verification is inherent, we first start by defining what is hardness of verification.
Throughout the rest of this section, we use $\cL$ as a fixed representation of a Universal Turing machine, and assume that all algorithms are Turing machines. We also use $R$ as the identity function. 

\Enote{Maybe remove this definition}
\begin{definition}\label{def:hard-to-verify}
	Let $\Attacker$ be a \ppt algorithm that outputs $(A,x,y)$ and let $f_{\tau}(A,x,y)$ be the predicate that is $1$ iff $A$ extracts $\tau$-information about $x$ from $y$ (\cref{def:extractor}). We say that $\Attacker$ provides $\tau$-extraction example that is $\beta$-hard to verify if for any \ppt algorithm $V$ (verifier) it holds that
	\begin{align}\label{eq:verifier}
		\ppr{(A,x,y) \la \Attacker, \:\: b \la V(A,x,y)}{b = f_{\tau}(A,x,y)} \geq 1 - \beta.
	\end{align}
	We say that it is $\beta$-black-box hard to verify if \cref{eq:verifier} only holds for verifiers who use only black-box access to $A$.
\end{definition}

In the following section, we show $\Attacker$ that is $\approx 1/2$-hard to verify, even when $\tau \gg K(x)$ with probability $1/2$ (i.e., while it looks like $A$ extract a lot of information about $x$ from $y$, actually in half of the cases we could get $x$ using a very short program without even the need of $y$).
}

\subsubsection{Hardness of Verification}\label{sec:hard-of-ver}

Our harness example is based on the existence of Pseudorandom Generator (PRG).

\begin{definition}[Pseudorandom Generator (PRG)]\label{def:PRG}
	A PRG is a \ppt algorithm $G$ that maps $d$ bits to $\ell(d)$ bits (for $\ell(d) > d$) such that
	for every \ppt algorithm $V$ and every $d \in \bbN$ it holds that
	\begin{align*}
		\pr{V(G(\cU_d)) = 1} \leq \pr{V(\cU_{d + \ell(d)}) = 1} + \negl(d),
	\end{align*}
	where $\cU_m$ denotes the uniform distribution over $\zo^m$.
\end{definition}

\begin{lemma}\label{lemma:bb-hard-to-ver}
	Assume there exists a PRG $G$ that maps $d$ bits to $d^4$ bits. Let $\cL$ be a Universal Turing Machine programming language with oracle access to $G$.
	Then there exists a \ppt algorithm $\Attacker$ such that on input $1^d$ and $b \in \zo$, outputs a deterministic program $\cA$ and two strings $x, y \in \zo^{d^4}$ such that the following holds w.r.t. a random execution $(\cA,x,y) \la \Attacker(1^d,b)$:
	\begin{enumerate}
		
		\item $\size{\cA} = d^2 + O(1)$ and $\cA(y) = x$ (which is an evidence for $K_{\cL}(x \mid y) \leq d^2 + O(1)$).\label{item:BB-hard:sizeA}
		
		\item If $b=0$ then $K_{\cL}(x) \leq d + O(1)$ (and in particular, there is a deterministic $\poly(d)$-time $\paren{d + O(1)}$-size algorithm that outputs $x$).\label{item:BB-hard:b0}
		
		\item If $b = 1$ then $x$ is chosen uniformly over $\zo^{d^4}$ (which yields $K_{\cL}(x) \geq (1 - o(1)) d^4$ w.h.p.),\label{item:BB-hard:b1}
		
		\item For every \ppt $\cV$ it holds that\label{item:BB-hard:V}
		\begin{align*}
			\ppr{b \la \zo, \: (\cA,x,y) \la \Attacker(1^d,b)}{\cV(\cA,x,y) = b} \leq 1/2+\negl(d). 
		\end{align*}
		
	\end{enumerate}
\end{lemma}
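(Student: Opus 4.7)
The plan is to construct $\Attacker$ so that the only information about $b$ leaked through $(\cA, x, y)$ is carried by $x$ itself, and then appeal directly to PRG security to hide whether $x$ is pseudorandom or truly random.

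\emph{Construction.} First, from $G$ I build a PRG $G_1 \colon \zo^{d^2} \to \zo^{d^4}$ by ignoring all but the first $d$ bits of its seed; this is a valid PRG since $G_1(\cU_{d^2})$ is distributed exactly as $G(\cU_d)$. On input $(1^d, b)$, $\Attacker$ samples $k \la \zo^{d^2}$ uniformly; if $b=0$ it also samples $s \la \zo^d$ uniformly and sets $x = G(s)$, whereas if $b=1$ it samples $x \la \zo^{d^4}$ uniformly; it then sets $y = x \oplus G_1(k)$ and lets $\cA$ be the constant-sized program ``on input $u$, output $u \oplus G_1(k)$'' with $k$ hard-coded. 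The output is $(\cA, x, y)$. Since $\mathcal{L}$ has oracle access to $G$, the code portion of $\cA$ is $O(1)$, so $\size{\cA} = d^2 + O(1)$.

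\emph{Verifying items 1--3.} Item 1 is immediate: $\size{\cA} = d^2 + O(1)$ and $\cA(y) = y \oplus G_1(k) = x$ deterministically. For item 2, when $b=0$ the program ``output $G(s)$'' with $s$ hard-coded has size $d + O(1)$ and outputs $x$ deterministically. For item 3, when $b=1$ the string $x$ is uniform on $\zo^{d^4}$, so a standard counting argument (only $2^{d^4 - \omega(\log d)}$ strings admit descriptions of length below $d^4 - \omega(\log d)$) gives $K_\cL(x) \geq (1-o(1)) d^4$ with probability $1 - 2^{-\Omega(d^4)}$.

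\emph{Verifying item 4 (the main step).} Given any PPT verifier $\cV$, define a PRG distinguisher $\cD$ that on input $u \in \zo^{d^4}$ samples $k \la \zo^{d^2}$ uniformly, forms $y = u \oplus G_1(k)$ and $\cA$ with the hard-coded $k$, and outputs $1 - \cV(\cA, u, y)$. By tracing the distributions, when $u \sim G(\cU_d)$ the triple $(\cA, u, y)$ is distributed exactly as $\Attacker(1^d, 0)$, and when $u \sim \cU_{d^4}$ it is distributed exactly as $\Attacker(1^d, 1)$. Hence $\cD$'s distinguishing advantage equals $\cV$'s advantage at guessing $b$, which is at most $\negl(d)$ by PRG security, yielding the required bound $1/2 + \negl(d)$.

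\emph{Main obstacle.} There is no deep technical obstacle; the key conceptual point is that the pad $G_1(k)$, although efficiently derivable from the description of $\cA$, nevertheless hides $x$ information-theoretically in the following sense: conditioned on $k$, the map $x \mapsto y$ is a bijection, so $(\cA, x, y)$ carries the same information as $(k, x)$. Since $k$ is independent of $b$, this collapses distinguishing $b$ to distinguishing $G(\cU_d)$ from $\cU_{d^4}$, which is exactly what PRG security forbids.
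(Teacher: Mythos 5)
Your proof is correct, and it takes a genuinely different construction than the paper's, although both succeed for the same underlying reason. The paper's $\Attacker$ (Algorithm 5.10) sets $y$ to be the length-$(d^4-d^2)$ \emph{prefix} of $x$, and lets $\cA$ hard-code the missing $d^2$-bit suffix and output $y\circ x'$; yours sets $y = x \oplus G_1(k)$ for a freshly sampled $k\in\zo^{d^2}$ and lets $\cA$ hard-code $k$ and XOR it back. In both cases, the crucial observation (which you make explicit in your ``main obstacle'' paragraph and the paper leaves implicit with ``\cref{item:BB-hard:V} immediately holds since $G$ is a PRG'') is that the triple $(\cA,x,y)$ can be efficiently sampled from $x$ alone together with randomness independent of $b$, so a verifier $\cV$ on $(\cA,x,y)$ can be wrapped into a distinguisher for $G(\cU_d)$ vs.\ $\cU_{d^4}$. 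Your reduction in the item-4 step spells this out more carefully than the paper does. One incidental bonus of your construction: your $y$ has length exactly $d^4$, matching the lemma statement $x,y\in\zo^{d^4}$ verbatim, whereas the paper's $y$ has length $d^4 - d^2$, which deviates slightly from the statement. One small remark: you don't actually need $G_1$ to be a PRG, since $k$ is visible to the verifier inside $\cA$ and the mask $G_1(k)$ provides no cryptographic hiding --- it merely makes $y$ an efficiently invertible deterministic function of $x$ given the data in $\cA$; the pseudorandomness that drives the argument is that of $G$ applied directly to the unknown seed $s$ in the $b=0$ branch. So that sentence is harmless but not load-bearing.
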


Note that when $b=1$, $\cA$ extracts most of the information about $x$ from $y$ (quality $1 - \frac{1 + o(1)}{d^2}$), while the case $b=0$ is clearly not an extraction as $K_{\cL}(x) \ll \size{\cA}$. Yet, by \cref{item:BB-hard:V}, any \ppt verifier $\cV$ who only sees $\cA,x,y$ cannot distinguish between the two cases.

\begin{proof}
The proof holds by considering the algorithm $\Attacker$ defined in \cref{alg:attacker}. 

\begin{algorithm}[$\Attacker$]\label{alg:attacker}
	\item Input: $1^d$ and $b \in \zo$.
	\item Oracle: PRG $G\colon \zo^d \rightarrow \zo^{d^4}$.
	\item Operation:~
	\begin{enumerate}
		\item If $b = 0$: Sample $s \la \zo^d$ and let $x = (x_1,\ldots,x_{d^4}) = G(s)$.
		\item Else (i.e., $b=1$): Sample $x = (x_1,\ldots,x_{d^4}) \la \zo^{d^4}$.
		\item Let $y = (x_{1},\ldots,x_{d^4 - d^2})$.\label{step:y}
		\item Let $\cA$ be the $(d^2 + O(1))$-size algorithm that has $x' = (x_{d^4 - d^2 + 1},\ldots,x_{d^4})$ hard-coded, and given $y$ as input, outputs $y \circ x'$ and halts.\label{step:A}
		\item Output $(\cA,x,y)$.
	\end{enumerate}
\end{algorithm}

Note that \cref{item:BB-hard:sizeA,item:BB-hard:b0,item:BB-hard:b1} of the lemma holds by construction. 
Furthermore, \cref{item:BB-hard:V} immediately holds since $G$ is a PRG.

\end{proof}

In the above example, we defined a hardness problem w.r.t. the predicate $R(x,z) = 1 \iff z = x$. But it can be easily be relaxed to a more realistic predicate of the form $$R(S = (x_1,\ldots,x_n), z = (z_1,\ldots,z_{m})) = 1 \iff \exists i \in [n], j \in [m]\text{ s.t. } \set{\NHdist(x_i, z_j)} \leq 1/2 - \alpha,$$ for any fixing of a constant $\alpha > 0$ and $n,m \leq \poly(d)$ (i.e., when the task is to output a list of strings that at least one of them has non-trivial agreement with one of the elements of $S$).\footnote{The idea is to replace \stepref{step:y} of \cref{alg:attacker} with $y = (x_{1},\ldots,x_{\alpha d^4 - d^2 \log d})$, where $x$ is one of the elements in $\cS$, and in \stepref{step:A} to define $\cA$ as the $O(d^2 \log d)$-size program that has $x' = (x_{\alpha d^4 - d^2 \log d}, \ldots, x_{\alpha d^4 + d^2 \log d})$ hard-coded, and given $y$ as input, samples $x'' \la \zo^{(1-\alpha)d^4 - d^2 \log d}$ and outputs $z = y \circ x' \circ x''$. By concentration bounds of the binomial distribution, it holds that $\pr{\NHdist(x, z) \leq 1/2 - \alpha} \geq 1- \negl(d)$. But without memorizing $x'$ or at least $\Omega(d^2 \log d)$ bits of information about one $x \in \cS$, the probability that a random guess of the bits after $y$ would lead to $1/2 - \alpha$ Normalized Hamming distance from an element $x \in \cS$ is $\negl(d)$.}

%\subsubsection{Partial Verification via Heuristics}
%
%We saw that it is impossible to efficiently verify in general if an attack $\cA$ really extract information about example $x$ from a model $y$. Therefore, in practice, we can only use Heuristics to gain some level of confidence. E.g., to apply many well-known compression algorithms on $x$ and see that all of them results with a compressed representation that is much longer than $\size{\cA}$.

        \section{Alternative proof that differential privacy can be expressed via Narcissus resiliency}\label{appendix:dp}

\begin{theorem} Let $\MMM:\XXX^n\rightarrow Y$ be an algorithm and let $\FFF$ be the family of distributions over $\XXX^n$ that for every neighboring datasets $S,T\in\XXX^n$ contains the  uniform distribution over $\{S,T\}$.
Then,
\begin{enumerate}
    \item[(1)] If $\MMM$ is $(\eps,\delta)$-differentially private then $\MMM$ is $(\eps,\delta,\FFF)$-$R$-narcissus-resilient for all $R$.
    \item[(2)]  If $\MMM$ is not $(\eps,\delta)$-differentially private then there exists $R$ such that $\MMM$ is not $(\tilde{\eps},\tilde{\delta},\FFF)$-$R$-narcissus-resilient for $R$.
\end{enumerate}
\end{theorem}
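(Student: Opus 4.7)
The proof follows the strategy of Theorem~\ref{thm:DPnarc}, with the wrinkle that each $\DDD\in\FFF$ is now the uniform distribution over $\{S_0,S_1\}$ (so that both $S$ and the independent $T$ in Definition~\ref{def:narc} each equal $S_0$ or $S_1$ with probability $1/2$), as opposed to being the constant pair distribution $\DDD_{S_0,S_1}$ used in Theorem~\ref{thm:DPnarc}. Throughout the plan, fix neighboring $S_0,S_1$, let $\DDD$ be the uniform distribution over $\{S_0,S_1\}$, and for any attacker $\AAA$ and relation $R$ let $p_{ij}:=\Pr[R(S_i,\AAA(\MMM(S_j)))=1]$. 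A direct expansion of the two probabilities appearing in Definition~\ref{def:narc} then gives
$$\underset{\substack{S\la\DDD\\y\la\MMM(S)\\z\la\AAA(y)}}{\Pr}[R(S,z)=1]=\tfrac12(p_{00}+p_{11})\quad\text{and}\quad\underset{\substack{S\la\DDD,\,T\la\DDD\\y\la\MMM(S)\\z\la\AAA(y)}}{\Pr}[R(T,z)=1]=\tfrac14(p_{00}+p_{01}+p_{10}+p_{11}).$$

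For part~(1), applying $(\eps,\delta)$-differential privacy to the neighboring pair $(S_0,S_1)$ (in both directions) yields $p_{00}\le e^\eps p_{01}+\delta$ and $p_{11}\le e^\eps p_{10}+\delta$, so $p_{00}+p_{11}\le e^\eps(p_{01}+p_{10})+2\delta$. I will then verify algebraically that this implies the desired Narcissus inequality $\tfrac12(p_{00}+p_{11})\le\tfrac{e^\eps}{4}(p_{00}+p_{01}+p_{10}+p_{11})+\delta$, which, after clearing denominators, is equivalent to $(2-e^\eps)(p_{00}+p_{11})\le e^\eps(p_{01}+p_{10})+4\delta$. When $e^\eps\ge 2$ the LHS is non-positive and the inequality is immediate; when $e^\eps<2$, substituting the DP-derived bound on $p_{00}+p_{11}$ reduces the task to showing $(1-e^\eps)(p_{01}+p_{10})\le 2\delta$, which holds because $1-e^\eps\le 0$ and $p_{01},p_{10}\ge 0$.

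For part~(2), let $S_0,S_1$ and an event $B$ witness the failure of $(\eps,\delta)$-DP and set $\alpha:=\Pr[\MMM(S_0)\in B]$, $\beta:=\Pr[\MMM(S_1)\in B]$, so $\alpha>e^\eps\beta+\delta\ge\beta+\delta$. Take $\AAA$ to be the identity, and pick the ``symmetric'' relation $R(S_0,y):=\indic{y\in B}$, $R(S_1,y):=\indic{y\notin B}$, engineered so that the DP-distinguishing event contributes positively to both diagonal terms $p_{00}$ and $p_{11}$. Then $(p_{00},p_{01},p_{10},p_{11})=(\alpha,\beta,1-\alpha,1-\beta)$, so the Narcissus LHS equals $(\alpha+1-\beta)/2$ and the ``no-factor'' Narcissus RHS equals $1/2$; hence, the Narcissus inequality at level $(\tilde\eps,\tilde\delta)$ reduces to $e^{\tilde\eps}+2\tilde\delta\ge 1+(\alpha-\beta)$. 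Since $\alpha-\beta>\delta$, any $(\tilde\eps,\tilde\delta)$ with $e^{\tilde\eps}+2\tilde\delta<1+\delta$ (for example $\tilde\eps=0$ and $\tilde\delta<\delta/2$, or $\tilde\delta=0$ and $\tilde\eps<\ln(1+\delta)$) yields a violation.

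The main subtlety, compared with Theorem~\ref{thm:DPnarc}, is that because $\DDD$ puts mass on both $S_0$ and $S_1$, both sides of the Narcissus inequality share the ``diagonal'' contribution $p_{00}+p_{11}$, so the DP bound does not collapse cleanly into the Narcissus statement. This is what forces the extra algebraic step in part~(1) and the mild parameter degradation in part~(2) (hence the use of $(\tilde\eps,\tilde\delta)$ in the statement rather than $(\eps,\delta)$), and it is the main reason the authors subsequently introduce the pair-based variant of Narcissus resiliency in Definition~\ref{def:narcDP}.
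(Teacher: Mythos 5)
Your proof is correct, and it follows the same high-level plan as the paper's (sample $S$ and $T$ i.i.d.\ from the uniform distribution on a neighboring pair, use the identity attacker and a relation built from the DP-violating event $B$), but differs in two interesting ways. For part~(1), the paper compresses the argument into a one-line chain, writing $\Pr[R(S,\AAA(\MMM(S)))=1]\le e^\eps\Pr[R(S,\AAA(\MMM(T)))=1]+\delta$ and then swapping $S\leftrightarrow T$ by symmetry; this chain is correct but the first inequality is not a bare application of the DP definition (the $T=S$ case contributes, and one must additionally use $e^\eps\ge1$), so your explicit expansion into $p_{ij}$'s followed by the small case analysis on $e^\eps\gtrless 2$ is, if anything, more transparent about why the step holds. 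For part~(2), the paper chooses the one-sided relation $R(S_b,y)=\indic{S_b=S_0\wedge y\in B}$, giving $(p_{00},p_{01},p_{10},p_{11})=(\alpha,\beta,0,0)$, LHS $=\alpha/2$ and RHS $=(\alpha+\beta)/4$, and then derives $\tilde\eps=-\ln((1+e^{-\eps})/2)$, $\tilde\delta=\delta e^{-\eps}/4$; you instead symmetrize, setting $R(S_1,\cdot)=\indic{\cdot\notin B}$, which pins the baseline at exactly $1/2$ and yields the cleaner violation condition $e^{\tilde\eps}+2\tilde\delta<1+(\alpha-\beta)$. The trade-off is that you only exploit $\alpha-\beta>\delta$ (which follows from $\alpha>e^\eps\beta+\delta$ using $e^\eps\ge 1$), so your guaranteed $(\tilde\eps,\tilde\delta)$ is better than the paper's when $\eps$ is small (e.g.\ $(0,\delta/2)$ vs.\ $(0,\delta/4)$ at $\eps=0$) but weaker when $\eps$ is large (the paper's $\tilde\eps$ approaches $\ln 2$, while yours is capped at $\ln(1+\delta)$). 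Since the theorem statement leaves $\tilde\eps,\tilde\delta$ unspecified, both are complete proofs; yours is arguably easier to follow and your closing remark correctly identifies that the halved diagonal weight $p_{00}+p_{11}$ on both sides of the Narcissus inequality is exactly what motivates the pair-based Definition~\ref{def:narcDP}.
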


\begin{proof}
Suppose that $\MMM$ is $(\eps,\delta)$-DP. Let $\DDD$ be a distribution over neighboring datasets $\{S_0 , S_1\}$, let $R$ be a relation, and let $\AAA$ be an attacker. We calculate, 
\begin{align*}
\underset{\substack{S\leftarrow\DDD\\y\leftarrow\MMM(S)\\z\leftarrow\AAA(y)}}{\Pr}[R(S,z)=1]&
%%%%%%
\leq
e^{\eps}\cdot
\underset{\substack{S\leftarrow\DDD\\T\leftarrow\DDD\\y\leftarrow\MMM(T)\\z\leftarrow\AAA(y)}}{\Pr}[R(S,z)=1]+\delta
%%%%%%
=
e^{\eps}\cdot
\underset{\substack{S\leftarrow\DDD\\T\leftarrow\DDD\\y\leftarrow\MMM(S)\\z\leftarrow\AAA(y)}}{\Pr}[R(T,z)=1]+\delta.
\end{align*}

This concludes the analysis for the first direction. Now suppose that $\MMM$ is not $(\eps,\delta)$-DP. There exist neighboring datasets $S_0,S_1$ and event $B$ such 
that 
$$
\Pr[\MMM(S_0)\in B]>e^{\eps}\cdot\Pr[\MMM(S_1)\in B]+\delta.
$$

We need to show that there is a relation $R$, a distribution in $\FFF$, and an adversary $\AAA$ for which the condition of Narcissus-Resiliency does not hold. We let $\AAA$ be the identify function and let $\DDD$ be the uniform distribution over $\{S_0,S_1\}$. Now let $R$ be such that $R(S_b,y)=1$ if and only if $S_b=S_0$ and $y\in B$. With these notations, to contradict Narcissus-Resiliency, we need to show that 
$$\underset{\substack{S_b\leftarrow\DDD\\y\leftarrow\MMM(S_b)\\y\leftarrow\AAA(y)}}{\Pr}[R(S_b,y)=1]\gg\underset{\substack{S_b\leftarrow\DDD\\S_c\leftarrow\DDD\\y\leftarrow\MMM(S_b)\\y\leftarrow\AAA(y)}}{\Pr}[R(S_c,y)=1].$$
We now calculate each of these two probabilities:
\begin{align*}
\underset{\substack{S_b\leftarrow\DDD\\y\leftarrow\MMM(S_b)\\y\leftarrow\AAA(y)}}{\Pr}[R(S_b,y)=1] & = \frac{1}{2}\cdot\left(\underset{\substack{y\leftarrow\MMM(S_0)\\y\leftarrow\AAA(y)}}{\Pr}[R(S_0,y)=1]+\underset{\substack{y\leftarrow\MMM(S_1)\\y\leftarrow\AAA(y)}}{\Pr}[R(S_1,y)=1]\right)\\
%%%%%%%%
& = \frac{1}{2}\cdot\left(
\Pr[\MMM(S_0)\in B]
+0\right)\\
%%%
&=\frac{1}{2}\cdot\Pr[\MMM(S_0)\in B].\\
\end{align*}

On the other hand,

\begin{align*}
&\underset{\substack{S_b\leftarrow\DDD\\S_c\leftarrow\DDD\\y\leftarrow\MMM(S_b)\\y\leftarrow\AAA(y)}}{\Pr}[R(S_c,y)=1]=\\
%&&&&&&&&
&=\Pr[b=c=0]\cdot\underset{\substack{y\leftarrow\MMM(S_0)\\y\leftarrow\AAA(y)}}{\Pr}[R(S_0,y)=1]
+
\Pr[b=c=1]\cdot\underset{\substack{y\leftarrow\MMM(S_1)\\y\leftarrow\AAA(y)}}{\Pr}[R(S_1,y)=1]\\
&\qquad\qquad+
\Pr[b>c]
\cdot\underset{\substack{y\leftarrow\MMM(S_1)\\y\leftarrow\AAA(y)}}{\Pr}[R(S_0,y)=1]
+
\Pr[b<c]
\cdot\underset{\substack{y\leftarrow\MMM(S_0)\\y\leftarrow\AAA(y)}}{\Pr}[R(S_1,y)=1]\\
%%%%%%
&=\frac{1}{4}\cdot 
\Pr[\MMM(S_0)\in B]
+0+\frac{1}{4}\cdot \Pr[\MMM(S_1)\in B]+0\\
&<\frac{1}{4}\cdot 
\Pr[\MMM(S_0)\in B]+\frac{1}{4}\left(e^{-\eps}\cdot \Pr[\MMM(S_0)\in B]-\delta\cdot e^{-\eps}\right)\\
%
%
%%%%%%%%%
&=\frac{1+e^{-\eps}}{4} \cdot \Pr[\MMM(S_0)\in B]-\frac{\delta\cdot e^{-\eps}}{4}\\
%%%%%%
&=\frac{1+e^{-\eps}}{2} \cdot \underset{\substack{S_b\leftarrow\DDD\\y\leftarrow\MMM(S_b)\\y\leftarrow\AAA(y)}}{\Pr}[R(S_b,y)=1]-\frac{\delta\cdot e^{-\eps}}{4}\\
&=e^{-\tilde{\eps}} \cdot \underset{\substack{S_b\leftarrow\DDD\\y\leftarrow\MMM(S_b)\\y\leftarrow\AAA(y)}}{\Pr}[R(S_b,y)=1]-\tilde{\delta}\cdot e^{-\tilde{\eps}}~\mbox{where}~\tilde{\eps}=-\ln((1+e^{-\eps})/2)~\mbox{and}~\tilde{\delta}=\delta\cdot e^{-\eps}/4.
\end{align*}
\end{proof}

\end{document}